\documentclass[11pt,a4paper]{article}

\usepackage{vmargin,fancyhdr}   
\usepackage{tikz}
\usepackage{amsthm}

\usepackage[lined, boxed, commentsnumbered, vlined]{algorithm2e}
\usetikzlibrary{decorations.pathreplacing,calc}

\usepackage{amsmath,amssymb}    
\usepackage{verbatim}           


\usepackage[dvips]{thumbpdf}

\setpapersize{USletter}
\setmarginsrb{.75in}{.5in}        
            {.75in}{.5in}        
            {.25in}{.25in}     
            {.25in}{.5in}      




\newtheorem{theorem}{Theorem}[section]

\newtheorem{definition}[theorem]{Definition}

\newtheorem{lemma}[theorem]{Lemma}
\newtheorem{fact}[theorem]{Fact}





\newcommand{\todo}[1]{{\color{red}\textbf{TODO:}#1}}

\newcommand{\poly}[1]{\textbf{poly}(#1)}

\newcommand{\ignore}[1]{{}}
\newcommand{\optmatching}{\mathcal{M}}
\newcommand{\optmatchinglevel}[1]{\mathcal{M}(#1)}
\newcommand{\matchinglevel}[1]{M_{#1}}
\newcommand{\optmatchingsize}[1]{|\mathcal{M}(#1)|}
\newcommand{\optmatchingweight}[1]{{w(\mathcal{M}(#1))}}
\newcommand{\combinedmatching}{\hat{M}}

\newcommand{\optmatchingcopy}[1]{\mathcal{M}^{#1}}
\newcommand{\matchinglevelcopy}[2]{M^{#2}_{#1}}
\newcommand{\optmatchinglevelcopy}[2]{\mathcal{M}^{#2}_{#1}}

\newcommand{\combinedmatchingcopy}[1]{\hat{M}^{#1}}

\newcommand{\vcover}{V_{cover}}
\newcommand{\maxweight}{N}

\newcommand{\defeq}{:\stackrel{\text{def}}{=}}

\begin{document}

\title{Fully Dynamic $(1+\epsilon)$-Approximate Matchings}

\author{
Manoj Gupta\\
IIT Delhi\\
\texttt{gmanoj@cse.iitd.ac.in}
\and
Richard Peng\\
CMU\\
\texttt{yangp@cs.cmu.edu}
}

\maketitle

\begin{abstract}
We present the first data structures that maintain
near optimal maximum cardinality and maximum weighted matchings
on sparse graphs in sublinear time per update.
Our main result is a data structure that maintains a
$(1+\epsilon)$ approximation of maximum matching under edge insertions/deletions
in worst case $O(\sqrt{m}\epsilon^{-2})$ time per update.
This improves the $3/2$ approximation given
in [Neiman, Solomon, ~STOC~2013] which runs in similar time.
The result is based on two ideas.
The first is to re-run a static algorithm after a chosen number
of updates to ensure approximation guarantees.
The second is to judiciously trim the graph to a smaller equivalent
one whenever possible.

We also study extensions of our approach to the weighted
setting, and combine it with known frameworks to obtain arbitrary
approximation ratios.
For a constant $\epsilon$ and 
for graphs with edge weights between $1$ and $\maxweight$,
we design an algorithm that maintains an $(1+\epsilon)$-approximate
maximum weighted matching in $O( \sqrt{m} \log \maxweight)$ time per update.
The only previous result for maintaining weighted matchings
on dynamic graphs has an approximation ratio of 4.9108,
and was shown in [Anand, Baswana, Gupta, Sen, ~FSTTCS~2012, ~arXiv~2012].
\end{abstract}

\section{Introduction}
\label{sec:intro}

The problem of computing maximum or near-maximum matchings
in a graph has played a central role in the study of
combinatorial optimization\cite{LovaszP86,PapadimitriouS82}.
A matching is a set of vertex-disjoint edges in a graph, and two variants
of the problem are finding the maximum cardinality matching in 
an unweighted graph, and finding the
matching of maximum weight in a weighted graph.
The problem is appealing for several reasons: it has a simple description;
matchings sometimes need to be improved by highly non-local steps;
and certifying the optimality of a matching yields a surprising
amount of structural information about a graph.
On static graphs, the current best
algorithms for maximum cardinality matching run in $O(m \sqrt{n})$ time,
on bipartite graph by Hopcroft and Karp \cite{hopcroft1971n5},
and on general graph by Micali and Vazirani \cite{micali1980v}.
In the weighted case, algorithms with similar running times were
given by Gabow and Tarjan~\cite{gabow1991faster},
and by Duan et al.~\cite{DuanPS11}.

A natural question from a data structure perspective is
whether on a dynamically changing graph the solution to
an optimization problem can be maintained faster than recomputing
it from scratch after each update.
For maximum cardinality matching, an $O(m)$ time algorithm follows
by executing one phase of the static algorithm described by Tarjan\cite{Tarjan83}.
For dense graphs,  a faster running time of $O(n^{1.495})$ has been
shown by Sankowski \cite{sankowski2007faster}, and to date this is
the only known result that gives sublinear time per update.
For trees, Gupta and Sharma\cite{gupta2009log} gave an algorithm
based on top trees that takes $O(\log n)$ time per update.

On static graphs, a nearly-optimal matching can be computed
much faster than finding the optimum matching.
So it stands to reason that the same should apply in the dynamic case.
Ivkovi\'c  and Llyod\cite{ivkovi1994fully} gave the first
result in this direction:
an algorithm that maintains a maximal matching with
$O((n + m)^{0.7072})$ update time.
Recently there has been a growing interest in designing 
efficient dynamic algorithms for approximate matching.
Onak and Rubinfeld designed a randomized algorithm that
maintains a $c$-approximation of maximum matching 
in $O(\log^2 n)$  update time \cite{onak2010maintaining},
where $c$ is a large unspecified constant.
Baswana, Gupta and Sen\cite{baswana2011fully} showed that
maximal matching, which is a $2$-approximation of maximum matching, 
can be maintained in a dynamic graph
in amortized $O(\log n)$ update time with high probability. 
Subsequently, Anand et al.~\cite{AnandBGS12,AnandBGS-Arxiv12} extended
this work to the weighted case, and showed how to maintain a matching
with weight that is {\em expected} to be at least
$1/4.9108\approx0.2036$ of the optimum.

These results show that a large matching can be maintained
very efficiently in dynamic graphs, but leave open
the question of maintaining a matching closer to the optimum matching.
Recently, Neiman and Solomon\cite{neiman12deterministic} showed that a matching of size at least
$2/3$ of the size of optimum matching
can be maintained in $O(\sqrt{m})$ time per update in general graphs
, as well as $O(\log{n}/ \log\log{n})$ time
per update on bounded arboricity graphs.
A similar result of maintaining 3/2-approximate matchings was 
obtained independently by Anand \cite{Anand12}.
This leads to the following question:  Can we maintain a matching 
close to maximum matching (say $(1+\epsilon)$-approximate matching) in 
a dynamic weighted or unweighted graph?
We answer this question in affirmative  by designing the 
first data structure that maintains
arbitrary quality approximate max-cardinality and max-weighted
matching in sublinear time on sparse graph.

Our algorithm differs significantly from previous ones
in that we do not maintain strict invariants.
Baswana et al. \cite{baswana2011fully} maintained a maximal
matching, which ensures no edge has both endpoints unmatched;
and the $3/2$-approximate algorithm designed by Neiman and
Solomon\cite{neiman12deterministic} 
remove all length three augmenting paths in the graph at each update step.
Our approach makes crucial use of the fact that the optimization
objectives involving matching is {\em stable}.
That is, a single update can only change the value of the optimum matching by $1$.
So if we find a matching close to maximum matching at some update step,
it remains close to maximum even after several updates to the graph.
In case the current matching ceases to be a good approximation 
of the maximum matching, we then re-run the static algorithm to
get a matching that is close to optimum.
This approach of re-running a expensive routine occasionally is a
common technique in dynamic graph data structures
\cite{HenzingerK99,HolmLT01,BaswanaKS12}.
It is particularly powerful for approximating matchings since the stability
property gives us freedom in choosing when to re-run the static algorithms.
But re-running static algorithm occasionally works well
when the maximum matching in the graph is large.
To deal with graphs having small maximum matching, we introduce the concept of 
{\em core subgraph} which is the central concept of 
our paper. A {\em core subgraph} is a subgraph of a graph 
having the following two properties:
Its size is considerably 
smaller than the entire graph. Secondly, the size of maximum matching
in {\em core subgraph} is same as the size of maximum matching in
the entire graph. We will crucially use these two properties in designing 
a dynamic algorithm for approximate matching.
A detailed description of our algorithm, as well as other components of
our data structure are presented in Section~\ref{sec:algo}
~and~Appendix~\ref{sec:improvementsdetails}.
The main result for approximating the maximum cardinality
matching can be stated as follows:

\begin{theorem}
\label{thm:maxcardinality}
For any constant $\epsilon<1/2$, there exists an algorithm 
which maintains a $(1+\epsilon)$-approximate matching
in an unweighted dynamic graph in $O( \sqrt{m} )$ worst case update time.
\end{theorem}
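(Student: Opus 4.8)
The plan is to exploit the \emph{stability} of the matching objective --- a single edge update changes the size of a maximum matching by at most $1$ --- through two devices: (a) recompute a near-maximum matching from scratch once every $\Theta(\epsilon\nu)$ updates, where $\nu = \nu(G)$ denotes the size of a maximum matching of the current graph; and (b) when $\nu$ is small, run that static recomputation not on $G$ but on a much smaller \emph{core subgraph} $H$ with $\nu(H) = \nu(G)$. As the static subroutine I will use $\lceil 3/\epsilon\rceil$ Hopcroft--Karp (bipartite) or Micali--Vazirani (general-graph) shortest-augmenting-path phases: afterwards the matching $M$ has no augmenting path shorter than $2\lceil 3/\epsilon\rceil + 1$, hence $|M| \ge \nu/(1+\epsilon/3)$, and the total time is $O(m\epsilon^{-1})$, or $O(|E(H)|\epsilon^{-1})$ when the subroutine is invoked on $H$.

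\textbf{The large-$\nu$ regime.} Suppose we currently hold $M$ with $|M| \ge \nu/(1+\epsilon/3)$. After $t$ updates at most $t$ edges of $M$ have disappeared and $\nu$ has changed by at most $t$, so a short calculation shows that the surviving sub-matching is still a $(1+\epsilon)$-approximation provided $t = O(\epsilon\nu)$. Hence it suffices to re-invoke the static subroutine every $\Theta(\epsilon\nu)$ updates, at an amortized cost of $O(m\epsilon^{-1})/\Theta(\epsilon\nu) = O(m\epsilon^{-2}/\nu)$; this is $O(\sqrt m\,\epsilon^{-2})$ whenever $\nu \ge \sqrt m$, i.e. $O(\sqrt m)$ for constant $\epsilon$.

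\textbf{The core subgraph and the small-$\nu$ regime.} Maintain a maximal matching $M_0$ in $O(\sqrt m)$ worst-case update time \cite{neiman12deterministic}; its endpoint set $C$ is a vertex cover and satisfies $\nu \le |C| = 2|M_0| \le 2\nu$. Define the core subgraph $H$ by retaining, for each $v\in C$, an arbitrary set of $\min(\deg_G(v),\,4|C|)$ edges incident to $v$; then $|E(H)| \le 4|C|^2 = O(\nu^2)$. The key point is that $\nu(H)=\nu$: ``$\le$'' is immediate, and for ``$\ge$'', given a maximum matching $M^*$ of $G$ we reroute every edge $uw \in M^* \setminus H$ --- choosing the endpoint $u \in C$ (which exists, and for which $\deg_G(u) > 4|C|$, since otherwise $H$ would contain all of $u$'s edges) --- to a neighbour of $u$ in $H$ that lies outside $V(M^*)$; each such $u$ has $4|C|$ neighbours in $H$ of which at most $|V(M^*)| = 2\nu \le 2|C|$ are blocked, leaving $\ge 2|C| \ge \nu$ valid candidates, and since there are at most $\nu$ edges to reroute a Hall-type argument lets us choose distinct replacements, yielding a matching of size $|M^*|$ inside $H$. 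Running the schedule of the previous paragraph on $H$ now costs $O(|E(H)|\epsilon^{-1})/\Theta(\epsilon\nu) = O(\nu\epsilon^{-2})$ amortized, which is $O(\sqrt m\,\epsilon^{-2})$ whenever $\nu \le \sqrt m$ (and when $\epsilon\nu < 1$, so $\nu = O(1/\epsilon)$, a single recomputation already costs only $O(\epsilon^{-3}) = O(1)$). Since each update alters $O(1)$ vertices of $C$ and hence only $O(|C|) = O(\sqrt m)$ edges of $H$, maintaining $C$ and $H$ also takes $O(\sqrt m)$ per update.

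\textbf{Deamortization.} Switch between operating on $G$ and on $H$ according to whether $\nu$ is above or below $\sqrt m$, using a constant-factor hysteresis band to avoid thrashing near the threshold; within each window, spread the static recomputation uniformly over the following $\Theta(\epsilon\nu)$ updates while buffering the updates that arrive meanwhile and folding them in once the recomputation finishes. The $O(\epsilon\nu)$ buffered updates cause only $O(\epsilon\nu)$ extra degradation, which is absorbed by having aimed for a $(1+\epsilon/3)$- rather than $(1+\epsilon)$-approximation, and they can be re-applied at $O(1)$ amortized --- hence, after a further deamortization layer, $O(1)$ worst-case --- cost apiece. The resulting worst-case update time is $O(\sqrt m\,\epsilon^{-2}) = O(\sqrt m)$ for constant $\epsilon < 1/2$, which is the theorem. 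I expect the main obstacle to lie in the core-subgraph bookkeeping --- keeping $C$ and $H$ correct and of size $O(\nu^2)$ in worst-case $O(\sqrt m)$ time as (possibly high-degree) vertices enter and leave $C$, checking that $\nu(H)=\nu$ is preserved under these incremental changes, and smoothly interleaving the spread-out recomputation with a graph that keeps changing underneath it --- rather than in any single calculation; stability is what makes the whole scheme possible, but the invariants have to be pinned down carefully.
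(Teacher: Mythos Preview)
Your proposal is correct and matches the paper's approach: recompute a $(1+\Theta(\epsilon))$-MCM every $\Theta(\epsilon\nu)$ updates, run that recomputation on a core subgraph of size $O(\nu^2)$ built from a small vertex cover when $\nu$ is small, and deamortize by spreading the static work across the window. The paper's core subgraph differs only cosmetically (it keeps all cover-internal edges plus $|V_{\mathrm{cover}}|+1$ outward edges per cover vertex, proving $\nu(H)=\nu$ by an exchange argument rather than Hall), and it simply rebuilds $H$ from scratch at the start of each round rather than maintaining it incrementally---which neatly sidesteps your bookkeeping worry, since your claim that each update alters only $O(1)$ vertices of $C$ is not obviously guaranteed by \cite{neiman12deterministic}.
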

 
It can be argued that the {\em stability} property of matchings
that we rely on is rare among optimization problems.
For most other problems like shortest paths and minimum spanning tree, 
there exist updates that require
immediate changes in the approximate solution maintained.
For matchings, such updates exists in the weighted version,
where the objective is the sum of weight over edges in the matching.
Direct extensions of our approach have linear dependencies on
$\maxweight$ in update time, where $\maxweight$
is  the maximum weight of an edge.
This dependency can in fact be viewed as a quantitative
measurement of the decrease in stability as we allow larger weights.

As a result, we investigate rounding/bucketing based approaches
which have logarithmic dependency on $\maxweight$ in
Section~\ref{sec:weightedscaling}.
This was first studied for maintaining dynamic matchings by
Anand et al. \cite{AnandBGS12}, and they used dynamic maximal
matchings as a subroutine in their algorithm.
Directly substituting our result for maximum cardinality matching
leads to immediate improvements in the approximation ratio which is
the second result in this paper.
\begin{theorem}
\label{thm:weightedthree}
For any constant $\epsilon <1/2$, there exists an algorithm that maintains 
$(3+\epsilon)$-approximate maximum weighted matching in 
a graph where edges have weights between $[1, \maxweight]$ in 
$O(\sqrt{m} \log \maxweight)$ worst case update time.
\end{theorem}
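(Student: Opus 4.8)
The plan is to reduce the weighted problem to $O(\log\maxweight)$ instances of the unweighted problem solved by Theorem~\ref{thm:maxcardinality}, following the weight‑bucketing framework of Anand et al.~\cite{AnandBGS12} but feeding it a near‑optimal cardinality matching in place of a maximal one. Fix a constant $\delta=\Theta(\epsilon)$ and, for each level $i=0,1,\dots,k$ with $k=\lceil\log_{1+\delta}\maxweight\rceil=O(\log\maxweight)$, let $G_i$ be the (unweighted) subgraph on the edges of weight at least $(1+\delta)^i$; note $G_0=G$. Maintain, for every $i$, a $(1+\delta)$-approximate maximum cardinality matching $M_i$ of $G_i$ using an independent copy of the data structure of Theorem~\ref{thm:maxcardinality}. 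An edge of weight $w$ lies in $G_i$ for exactly the $O(\log_{1+\delta}w)=O(\log\maxweight)$ levels with $(1+\delta)^i\le w$, so a single edge update is forwarded to $O(\log\maxweight)$ of these structures, at cost $O(\sqrt m)$ each, i.e. $O(\sqrt m\log\maxweight)$ worst case in total.

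Next, maintain a single matching $\hat M$ obtained by a greedy merge of the $M_i$'s from the top level down: process $i=k,k-1,\dots,0$ and add to $\hat M$ every edge of $M_i$ not yet incident to $\hat M$. This is a matching by construction, and any edge it contains when level $i$ is processed has weight at least $(1+\delta)^i$. Maintaining $\hat M$ amounts to recomputing the affected suffix of this merge after each update; I would argue, as in \cite{AnandBGS12}, that the bookkeeping stays within the same $O(\sqrt m\log\maxweight)$ budget ($O(\log\maxweight)$ levels, each contributing $O(\sqrt m)$, using that the $M_i$'s change only in the way guaranteed by Theorem~\ref{thm:maxcardinality}).

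For the ratio, let $M^{*}$ be a maximum weight matching; the point is to compare $w(\hat M)$ to $w(M^{*})$ directly, since a naive per‑edge charge against $\bigcup_i M_i$ fails ($\bigcup_i M_i$ can have degree $\Omega(\log\maxweight)$), and instead exploit that $M^{*}$ is a matching. Take $e^{*}=(u,v)\in M^{*}$ and let $i$ satisfy $w(e^{*})\in[(1+\delta)^i,(1+\delta)^{i+1})$, so $e^{*}\in G_i$. If $e^{*}\in\hat M$ it pays for itself. Otherwise, because $M_i$ is a near‑maximum (hence, up to the approximation slack addressed below, maximal) matching of $G_i$, after level $i$ is processed at least one of $u,v$ is matched in $\hat M$ by an edge $f$ of some level $j\ge i$; charge $w(e^{*})$ to $f$. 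Since $M^{*}$ is a matching, each endpoint of $f$ is an endpoint of at most one such $e^{*}$, so $f$ receives at most two charges, each of weight $w(e^{*})<(1+\delta)^{i+1}\le(1+\delta)\,w(f)$ (as $w(f)\ge(1+\delta)^{j}\ge(1+\delta)^{i}$). Summing over $M^{*}$, $w(M^{*})\le w(\hat M)+2(1+\delta)\,w(\hat M)=(3+2\delta)\,w(\hat M)$, a $(3+\epsilon)$-approximation for $\delta$ small enough.

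The main obstacle is making the charging watertight when $M_i$ is only $(1+\delta)$-approximate rather than exactly maximum: such an $M_i$ need not be maximal, so an optimal edge $e^{*}$ could a priori be left with both endpoints unmatched in $\hat M$ and thus uncharged. I would bound the total weight of such edges level by level — the level‑$i$ optimal edges that are free on both ends form a matching in $G_i$ on free vertices which, together with $M_i$, would contradict the size guarantee of Theorem~\ref{thm:maxcardinality} unless their contribution is an $O(\delta)$ fraction of $w(\hat M)$ — or, alternatively, make the merge restore maximality at each level by greedily absorbing free–free edges of $G_i$, charging the extra work to later deletions. A secondary point is checking that the greedy merge can be re‑maintained within the claimed worst‑case bound even when a periodic rebuild inside Theorem~\ref{thm:maxcardinality} changes an $M_i$ substantially; this is where the careful comparison with the amortized‑to‑worst‑case analysis of \cite{AnandBGS12} is required.
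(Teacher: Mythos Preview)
Your scheme is not the one the paper uses, and the gap you flag is fatal to it as written. The paper (and the Anand et al.\ framework it invokes) does \emph{not} take nested threshold graphs $G_i=\{e:w(e)\ge(1+\delta)^i\}$; it partitions edges into \emph{disjoint} geometric buckets $[\alpha^{l+r},\alpha^{l+r+1})$ with a constant $\alpha\approx 5.704$, maintains a $(1+\epsilon)$-MCM $M_l$ on each bucket separately, and merges top-down. Crucially, the analysis never needs an optimal edge to have a matched endpoint. Instead it uses $|\mathcal{M}_l|\le(1+\epsilon)|M_l|$ level by level (valid because $M_l$ is a near-maximum \emph{cardinality} matching of exactly that bucket), rounds every level-$l$ weight to $w_r(e)=\alpha^{l+r}$, and bounds the rounded weight an edge $e\in\hat M$ can block from below by the geometric series $\alpha^{l+r}+2\sum_{j<l}\alpha^{j+r}=\tfrac{\alpha+1}{\alpha-1}\,\alpha^{l+r}$. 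The rounding loss is removed deterministically by running $k=O(\epsilon^{-1})$ copies with shifted offsets $r$ and taking the best; the final ratio $\tfrac{(\alpha+1)\alpha\ln\alpha}{(\alpha-1)^2}$ attains its minimum $\approx 3$ at $\alpha\approx 5.704$.

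Neither of your proposed repairs closes the hole in the nested scheme. For the first: an optimal edge $e^*=(u,v)$ at level $i$ can have both endpoints free in $\hat M$ while $u$ is \emph{matched} in $M_i$---say to some $x$ that was already claimed by a higher level, so $(u,x)$ was discarded in the merge. Hence the free--free optimal edges are not vertex-disjoint from $M_i$, and you cannot union them with $M_i$ to contradict the $(1+\delta)$-MCM size guarantee. Even if you retreat to ``free in $M_i$'', the per-level surplus is at most $\delta|M_i|(1+\delta)^{i+1}$, but with nested $G_i$'s a single heavy edge sits in $\Theta(\epsilon^{-1}\log\maxweight)$ levels, so $\sum_i |M_i|(1+\delta)^i$ can exceed $w(\hat M)$ by a $\log\maxweight$ factor and the total does not collapse to $O(\delta)\,w(\hat M)$. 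For the second fix, greedily restoring maximality in $G_i$ after each merge step means locating free--free edges of $G_i$, which is $\Theta(m)$ work per level absent extra structure; ``charging to later deletions'' does not bound this within $O(\sqrt m)$ per update. The paper's disjoint-bucket scheme sidesteps the whole issue because its comparison to $\mathcal{M}$ is by cardinality within each bucket, not by a maximality-based charging argument.
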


Our $(3 + \epsilon)$-approximation algorithm is derived from known
schemes which bucket edges based on their weights.
The rounding scheme we use in this algorithm is 
based on algorithm designed by Anand et al. \cite{AnandBGS12}.
It is not clear whether any extension of this bucketing scheme will lead to
a $(1+\epsilon)$-approximate matching.
To do this, we devise a new rounding scheme which obtains arbitrarily
good approximations of maximum matching, albeit at the cost of a much
higher dependency on $1 / \epsilon$ in the running time.

\begin{theorem}
\label{thm:weightedarbitrary}
For any constant $\epsilon<1/2$, there exists an algorithm that maintains 
$(1+\epsilon)$-approximate maximum weighted matching in
a graph where edges have weights between $[1, \maxweight]$
in $O(\sqrt{m} \log \maxweight)$ worst case update time.
\end{theorem}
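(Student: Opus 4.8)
The plan is to reduce the weighted problem to the unweighted case of Theorem~\ref{thm:maxcardinality} via a rounding/bucketing scheme that is finer than the one behind Theorem~\ref{thm:weightedthree}. First I would fix a scale parameter $\delta$ depending on $\epsilon$ (think $\delta = \Theta(\epsilon)$) and partition the weight range $[1,\maxweight]$ into $O(\log_{1+\delta} \maxweight) = O(\epsilon^{-1}\log\maxweight)$ geometric buckets, where bucket $i$ holds edges of weight in $[(1+\delta)^i, (1+\delta)^{i+1})$. Within each bucket, all edges are treated as having the same (rounded-down) weight, so a maximum-cardinality matching in bucket $i$ is a $(1+\delta)$-approximation of a maximum-weight matching restricted to that bucket. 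I would run one instance of the data structure of Theorem~\ref{thm:maxcardinality} on each bucket's edge set; an edge insertion/deletion touches exactly one bucket, so the update cost is $O(\sqrt{m_i})\le O(\sqrt m)$ for that bucket, and the $\log\maxweight$ factor in the final bound absorbs the number of buckets only insofar as we must also maintain the combined solution — crucially, each individual update hits one bucket, so the per-update time stays $O(\sqrt m \log\maxweight)$ once we account for the combination step below.

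The combination step is where the real work lies. Maintaining one matching per bucket is not enough: the union of these matchings need not be a matching, and naively taking the heaviest bucket loses a $\log\maxweight$ factor in approximation, not just in time. The standard fix (going back to the scaling approach for weighted matching) is to combine the per-bucket matchings in a \emph{greedy, weight-decreasing} fashion — process buckets from heaviest to lightest, and when adding edges from a lighter bucket, only add those whose endpoints are still free. One shows that the resulting single matching $\combinedmatching$ has weight at least a constant (in fact $1-O(\delta)$, after a geometric-series argument) fraction of $\sum_i w(\optmatching_i)$, which in turn is within $(1+\delta)$ of the true maximum weighted matching by a bucket-wise charging argument against an optimal weighted matching. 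To make this \emph{dynamic}, I would maintain $\combinedmatching$ incrementally: when bucket $j$'s matching changes by a constant number of edges (each unweighted update changes the maintained matching by $O(1/\epsilon)$ edges in the amortized sense, or we re-extract), we recompute the greedy merge only for buckets $\le j$ that could be affected. The subtlety is bounding this cascade; here the stability idea from the unweighted case reappears — a single edge change perturbs the optimum by a bounded amount — but I would more simply observe that the greedy merge can be rebuilt from scratch in $O(\sum_i |\optmatching_i|) = O(n)$ time, which is too slow, so instead one maintains the merge lazily and argues that only $O(\log\maxweight)$ buckets' worth of edges, totalling the needed time, are ever re-examined per update, again using that each update is confined to one bucket and propagates only downward.

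The main obstacle I anticipate is precisely this propagation/recombination cost: showing that a single weighted edge update triggers only $O(\sqrt m \log\maxweight)$ total work across the recombination of buckets, rather than $\Omega(n)$ in the worst case. A clean way around it is to not maintain $\combinedmatching$ explicitly at all times but to rebuild the greedy merge only when queried, or — better — to exploit that we only need a $(1+\epsilon)$-approximate \emph{weight}, so we can tolerate the merge being stale for $\Theta(\epsilon \cdot w(\combinedmatching))$ updates before refreshing, invoking the stability of the weighted objective under few updates exactly as in the unweighted argument. Dividing that refresh cost over the tolerated update window gives the claimed $O(\sqrt m \log\maxweight)$ amortized bound; converting amortized to worst case, if required, is handled by the same standard deamortization (interleaving the rebuild with subsequent updates) that underlies Theorem~\ref{thm:maxcardinality}. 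The remaining steps — verifying the two approximation inequalities (per-bucket rounding loses $1+\delta$, greedy merge loses a further constant/$(1-\delta)$ factor) and choosing $\delta$ so the product is $\le 1+\epsilon$ — are routine geometric-series estimates.
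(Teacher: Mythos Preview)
Your proposal has a genuine gap in the combination step. You claim that the greedy top-down merge of the per-bucket matchings loses only a $(1-O(\delta))$ factor. This is false when the buckets have geometric ratio $1+\delta$. When an edge $e$ at level $l$ is added to $\combinedmatching$, it can block up to two edges from \emph{every} lower level; the total blocked weight is at most $2\sum_{l'<l}(1+\delta)^{l'+1}\approx \tfrac{2}{\delta}\,w(e)$. Hence $\sum_l w(\matchinglevel{l})$ can exceed $w(\combinedmatching)$ by a $\Theta(1/\delta)$ factor, and the resulting approximation ratio is $\Theta(1/\delta)=\Theta(1/\epsilon)$, not $1+O(\epsilon)$. (This is exactly the $(\alpha+1)/(\alpha-1)$ factor in the paper's Lemma~\ref{lem:charge}, which blows up as $\alpha\to 1$.) No choice of $\delta$ in your scheme gives a $(1+\epsilon)$ guarantee; making $\delta$ small hurts the merge, making it large hurts the per-bucket rounding.

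The paper resolves this with an idea your proposal is missing: it deliberately creates \emph{large} gaps between consecutive levels so that the merge loses only $1+O(\epsilon)$. Concretely, it groups $C-1$ adjacent buckets (base $1/\epsilon$) into one level and \emph{drops} one bucket out of every $C=\lceil 1/\epsilon\rceil$, so that consecutive levels differ in weight by at least a factor $\epsilon^{-1}$; the geometric series then sums to $1+O(\epsilon)$ (Lemma~\ref{lem:extendedcharge}). To recover the weight lost to dropped buckets, it keeps $C$ shifted copies, each dropping a different residue class, and shows by averaging that some copy retains a $(1-1/C)$ fraction of $w(\optmatching)$ (Lemma~\ref{lem:goodcopy}). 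Finally, since each level now spans a weight ratio of $\epsilon^{-(C-1)}=\epsilon^{-O(1/\epsilon)}$, an unweighted MCM per level is not enough; the paper runs the bounded-ratio weighted routine of Theorem~\ref{thm:weightedsimple} on each level, which is where the $\epsilon^{-O(1/\epsilon)}$ dependence (hidden in ``constant $\epsilon$'') comes from.
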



As with the algorithm by Neiman and Solomon~\cite{neiman12deterministic},
our algorithms  are deterministic and the update time guaranteed by them 
is worst case.
However, for simplicity in our presentations we will often start
by describing the simpler amortized variants.




\section{Preliminaries}
\label{sec:prelim}

We start by stating the notations that we will use,
and reviewing some well-known results on matchings.
An undirected graph is represented by $G = (V, E)$,
where $V$ represents the set
of vertices and $E$ represents the set of edges in the graph.
We will use $n$ to denote the number of vertices $|V|$, and $m$
to denote the number of edges $|E|$. 

A \emph{matching} in a graph is a set of independent edges 
in the graph.
Specifically, a subset of edges, $M \subseteq E$ is a matching
if no vertex of the graph is incident on more than one edge in $M$.
A vertex is called \emph{unmatched} if it is not incident on any edge in $M$,
otherwise it is \emph{matched}.
Similarly, an edge is called \emph{matched} if it is in $M$
or \emph{free} otherwise. A vertex cover is
a set of vertices in a graph such that each edge has at least one 
of its endpoint in the vertex cover.

The maximum cardinality matching(MCM) in a graph is the
matching of maximum size. 
Similarly, given a set of weights $w: E \rightarrow [1, \maxweight]$,
we can denote the weight of a matching $M$ as
$w(M) = \sum_{e \in M} w(e)$.
The maximum weight matching(MWM) in a graph is in turn
the matching of maximum weight.
We will use $\optmatching$ to denote a optimum matching
for either of these two objectives depending on context.

For measuring the quality of approximate matching, we will use
the notation of $\alpha$-approximation, which indicates that the objective
(either cardinality or weight) given by the current solution is at least
$1/\alpha$ of the optimum.
Specifically, a matching $M$ is called $\alpha$-MCM
if $|M| \ge \frac{1}{\alpha} |MCM|$, and $\alpha$-MWM
if $w(M) \ge \frac{1}{\alpha} |MWM|$.

%
%
Finding or approximating MCMs and MWMs
in the static setting have been intensely studied.
Nearly linear time algorithms have been developed for finding
$(1+\epsilon)$ approximations, and we will make crucial use
of these algorithms in our data structure.
For maximum cardinality matching, such an algorithm for bipartite
graph was introduced by Hopcroft and Karp\cite{hopcroft1971n5},
and extended to general graphs by Micali and Vazirani\cite{micali1980v,Vazirani12}.

\begin{lemma}
\label{lem:staticmcm}
There exists an algorithm \textsc{ApproxMCM} that when given a graph $G$ with
$m$ edges along with a parameter $\epsilon < 1$,
return an $(1 + \epsilon)$-MCM in $O(m \epsilon^{-1})$ time.
\end{lemma}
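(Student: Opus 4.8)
The plan is to iterate the classical phase-based augmenting-path method of Hopcroft and Karp and to stop after $\Theta(1/\epsilon)$ phases. First I would invoke the subroutine that, given the current graph together with a matching $M$, computes in $O(m)$ time a \emph{maximal} collection of vertex-disjoint augmenting paths all having the current minimum augmenting-path length: for bipartite graphs this is one phase of Hopcroft--Karp\cite{hopcroft1971n5}, and for general graphs it is one phase of the Micali--Vazirani algorithm\cite{micali1980v,Vazirani12}, which copes with blossoms while retaining the $O(m)$ per-phase bound. Augmenting $M$ along every path in this collection produces a matching whose size has grown by the number of paths found; \textsc{ApproxMCM} simply starts from the empty matching and runs $k = \lceil 1/\epsilon \rceil$ such phases.

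Next I would establish the two standard facts that make a constant number of phases enough. The first is monotonicity: after $k$ phases starting from the empty matching, every augmenting path with respect to the current matching $M$ has length at least $2k+1$. This follows from the Hopcroft--Karp argument that the minimum augmenting-path length strictly increases (by at least two) from one phase to the next, using that each phase destroys \emph{all} shortest augmenting paths of the current length. The second is the counting bound: if $M^\ast$ is a maximum matching and every augmenting path for $M$ has length at least $2k+1$, then each connected component of $M \triangle M^\ast$ that is an augmenting path contains at least $k$ edges of $M$; since there are exactly $|M^\ast| - |M|$ such components, $k(|M^\ast| - |M|) \le |M|$, hence $|M| \ge \frac{k}{k+1}\,|M^\ast|$.

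Combining these, the choice $k = \lceil 1/\epsilon \rceil$ gives $|M| \ge \frac{k}{k+1}\,|M^\ast| \ge \frac{1}{1+\epsilon}\,|M^\ast|$, so the returned matching is a $(1+\epsilon)$-MCM. The running time is $k$ phases at $O(m)$ each, i.e. $O(m\epsilon^{-1})$, matching the claim.

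I expect the only genuine obstacle to be the $O(m)$ per-phase guarantee for general (non-bipartite) graphs, whose justification rests on the intricate blossom-shrinking analysis of Micali--Vazirani; I would treat that algorithm as a black box and cite it rather than reprove it. Everything else is elementary: the monotonicity lemma and the symmetric-difference counting argument are both short, and fixing $k$ is a one-line calculation.
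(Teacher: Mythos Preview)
Your argument is correct and is indeed the standard proof of this fact: run $k=\lceil 1/\epsilon\rceil$ phases of Hopcroft--Karp (or Micali--Vazirani in the general case), invoke the monotonicity of shortest augmenting-path length across phases, and apply the symmetric-difference counting bound to conclude $|M|\ge \frac{k}{k+1}|M^\ast|$. The per-phase $O(m)$ cost and the choice of $k$ then give the stated running time and approximation ratio.

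However, the paper does not actually prove this lemma at all. It is stated as a known result and simply attributed to Hopcroft--Karp~\cite{hopcroft1971n5} for bipartite graphs and Micali--Vazirani~\cite{micali1980v,Vazirani12} for general graphs; the lemma is used as a black box throughout. So there is no ``paper's own proof'' to compare against --- you have supplied the folklore argument that the paper omits, and your sketch is exactly what one would write if asked to justify the citation.
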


For approximate MWM, there has been some recent progress.
Duan et al.\cite{DuanP10, DuanPS11} designed an algorithm that 
find a $(1+\epsilon)$ approximate maximum weighted matching 
 in  $O(m \epsilon^{-1} \log(\epsilon^{-1}))$ time.

\begin{lemma}\cite{DuanP10, DuanPS11}
\label{lem:staticmwm}
There exists an algorithm \textsc{ApproxMWM} that when given a graph $G$ with
$m$ edges along with a parameter $\epsilon < 1$,
return an $(1 + \epsilon)$-MWM in $O(m \epsilon^{-1} \log(\epsilon^{-1}))$ time.
\end{lemma}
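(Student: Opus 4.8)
The plan is to establish this cited bound via the primal--dual weight-scaling framework of Duan, Pettie, and Su, so I will sketch that route rather than reprove it in full. I would begin from the LP relaxation of MWM, maximize $\sum_e w(e) x(e)$ subject to $\sum_{e \ni v} x(e) \le 1$ and $x \ge 0$, whose dual is to minimize $\sum_v y(v)$ subject to $y(u)+y(v) \ge w(u,v)$ for every edge and $y \ge 0$. The quantitative lever is a relaxed form of complementary slackness: if I maintain a matching $M$ together with a dual $y$ for which every edge is feasible, every matched edge is tight up to an additive slack $\delta$, and every unmatched vertex has $y(v)=0$, then weak duality gives $\sum_v y(v) \ge |MWM|$, while summing over the edges of $M$ and using that unmatched vertices carry zero dual gives $w(M) \ge \sum_v y(v) - |M|\delta \ge |MWM| - (n/2)\delta$. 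Choosing $\delta$ to be a small multiple of $\epsilon\, w_{\max}/n$, where $w_{\max}$ is the heaviest edge weight, forces $w(M) \ge (1-O(\epsilon))\,|MWM|$, since a single heaviest edge certifies $|MWM| \ge w_{\max}$; rescaling $\epsilon$ then yields the $(1+\epsilon)$-MWM guarantee.

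Next I would describe the scaling loop. The algorithm processes the integer weights from the most significant bit to the least, maintaining at scale $i$ the invariant that all duals are integer multiples of the current granularity $\delta_i$ and that the relaxed slackness above holds at level $\delta_i$. Advancing one scale halves $\delta_i$ and reveals one further bit of each weight, after which a bounded number of augmentation rounds restores the invariant. The granularity of the relaxation bounds the number of augmentation rounds within a scale by $O(\epsilon^{-1})$, each a Hopcroft--Karp-type search for slack-respecting augmenting paths costing $O(m)$, so the work per scale is $O(m\,\epsilon^{-1})$.

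The step that removes the dependence on $\maxweight$, and the one I would treat most carefully, is a weight reduction. A first observation is that, after rescaling so the heaviest edge has weight $w_{\max}$, any edge lighter than $2\epsilon\, w_{\max}/n$ may be discarded: at most $n/2$ such edges can lie in a matching, so their combined contribution is at most $\epsilon\, w_{\max} \le \epsilon\,|MWM|$. This observation alone only caps the weight range at $O(n/\epsilon)$, so to bring the number of scales down to $O(\log(\epsilon^{-1}))$ I would invoke the sharper reduction of Duan and Pettie, which processes weight classes from heaviest to lightest and renormalizes within each class so that every resulting subproblem carries integer weights in a window of width $O(\epsilon^{-1})$. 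With the effective range bounded in this way, the number of scales is $O(\log(\epsilon^{-1}))$ and the total running time is $O(m\,\epsilon^{-1}\log(\epsilon^{-1}))$.

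The principal obstacle is the extension from bipartite to general graphs. The accounting above is clean when augmenting paths alone suffice, but general graphs require blossoms, and maintaining the relaxed-slackness invariant while forming, contracting, and dissolving blossoms across scales is the delicate part. I would handle this with the Edmonds blossom machinery adapted to the scaling setting, carrying auxiliary blossom duals $z(B)$ and verifying that the approximation bound still goes through once these duals are folded into $\sum_v y(v)$ and the slack accounting; this is precisely the technical core of the analysis of Duan, Pettie, and Su, which I would cite for the full verification.
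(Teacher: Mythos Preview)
The paper does not prove this lemma at all: it is stated purely as a citation of the Duan--Pettie(--Su) result and used thereafter as a black box. There is therefore no ``paper's own proof'' to compare against. Your sketch is a reasonable high-level outline of the primal--dual scaling framework in those references, and it is more than the present paper attempts; for the purposes of this paper you could simply cite the result, as the authors do.
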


All logarithms in this paper are with base 2 unless mentioned otherwise.

\ignore{\todo{Can doubly linked lists be used for these?}
The algorithm requires maintaining several sets
of vertices. We can use any standard balanced tree\cite{adelsonvelskii1963algorithm} for maintaining these sets.
Using these balanced trees we can insert to or delete from the set in worst case 
$O(\log n)$ update time as there are at most $n$ elements in the set of vertices.
We would also require enumerating all the elements of a set. Using balanced trees
takes time proportional to the size of the set.
After every update we manipulate $O(\sqrt m)$ sets and
hence the time complexity of every update becomes $O(\sqrt m \log n)$.
}

\section{$(1 + \epsilon)$-MCMs Using Lazy Updates}
\label{sec:algo}

\subsection{Overview}
To maintain approximate matching, we exploit the {\em stability}
of the matching and use the static algorithm 
for matching \textsc{ApproxMCM} periodically.
Our starting point is the observation that the size of maximum
matching changes by at most $1$ per update.
This means that if we have a large matching that's close to
the maximum, it will remain close to maximum matching over a large number of updates.
So we use the following approach: Find a matching at certain update step
and wait for certain number of updates till the matching is a 
good approximation of maximum matching.
This approach works well if the maximum matching is 
itself large to begin with.
But if the maximum matching itself is small,
we still need to run the static algorithm many times.

To overcome this, we show that instead of finding a maximum matching
on the entire graph, we can use a small {\em special} subgraph 
such that the size of maximum matching in this subgraph is 
same as the size of maximum matching in the entire graph.
We call this subgraph a {\em core subgraph}, and it is the
central idea of our $(1+\epsilon)$ approximate algorithm.
As this subgraph is considerably smaller, the time needed
to find a maximum matching on it is considerably less.
We will show that this {\em core subgraph} can be formed
using the vertex cover of the entire graph.
Specifically, we take the vertex-induced subgraph formed by
the cover, along with some {\em special} chosen edges
out of vertices belonging to the cover.

But this leads to another question: How 
do we maintain a vertex cover in a dynamic graph? 
For this, we can use the algorithm of Neiman and 
Solomon \cite{neiman12deterministic}.
One of the invariants in this algorithm is that there are
no edges between unmatched vertices, which means the set of matched
vertices form a $2$-approximate minimum vertex cover.
Therefore reporting these vertices suffices for a vertex cover
at any update step.
However, note that our dependence on the above algorithm is not critical.
Specifically, we design another simple algorithm which does not depend
on the algorithm of Neiman and Solomon\cite{neiman12deterministic} 
for finding the {\em core subgraph}.
A description of this, as well as modifications for handling
edges with weights in a small range, and obtaining worst case bounds
are in Section~\ref{subsec:improvements}, with details
deferred to Appendix~\ref{sec:improvementsdetails}.


\subsection{Algorithm}
We start with some notations that we will use in this section.
We number the updates from $1$ to $t$ and use the following
notations:
\begin{itemize}
\item $G(i)$: The graph after the i\textsuperscript{th} update.
\item $M(i)$: A matching computed on $G(i)$
\item $M(i \setminus j)$: Let $del_M(i,j)$ denote the set of all edges  in $M(i)$
that  are deleted from the graph 
between update steps $i$ and $j$. We define $M( i \setminus j)$ to be
$M(i) \setminus del_M(i,j)$, i.e., $M(i \setminus j)$ consists of all the edges 
in the matching $M(i)$ that are not deleted between update step $i$ and $j$.

\end{itemize}

Also, we will use $\optmatchinglevel{i}$ to denote the optimal matching at step $i$.
The approximation guarantees of $M(i \setminus j)$ is as follows:

\begin{lemma}
\label{lem:stable}
If $\epsilon, \epsilon' \leq 1/2$ and
$M(i)$ is an $(1 + \epsilon)$-MCM in $G(i)$,
then for $j \leq i + \epsilon' |M(i)|$ 
$M(i \setminus j)$ is an $(1 + 2 \epsilon + 2 \epsilon')$-MCM in $G(j)$
\end{lemma}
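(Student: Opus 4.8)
The plan is to derive the statement from two quantitative bounds — one controlling how much the matching can shrink, one controlling how much the optimum can grow between steps $i$ and $j$ — while exploiting that these two effects are charged against \emph{disjoint} updates. Throughout, $\optmatchinglevel{i}$ and $\optmatchinglevel{j}$ denote maximum matchings of $G(i)$ and $G(j)$, so the hypothesis reads $|\optmatchinglevel{i}| \le (1+\epsilon)|M(i)|$ and, trivially, $|M(i)| \le |\optmatchinglevel{i}|$.

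First I would note that $M(i \setminus j)$ is a genuine matching of $G(j)$: its edges lie in $M(i)$, hence are pairwise vertex–disjoint, and by the definition of $del_M(i,j)$ none of them is deleted between steps $i$ and $j$, so all survive in $G(j)$. Writing $D := |del_M(i,j)|$ and $K := j-i$, I get exactly $|M(i \setminus j)| = |M(i)| - D$; moreover $D \le K$, since each edge of $del_M(i,j)$ disappears at a distinct (single–edge) deletion among the updates $i{+}1,\dots,j$. Next I would bound the optimum: one update changes the maximum matching size by at most $1$, and a deletion can never increase it; since at least $D$ of the $K$ updates are deletions (those realizing $del_M(i,j)$), at most $K-D$ of them are insertions, so summing the per–update changes yields $|\optmatchinglevel{j}| \le |\optmatchinglevel{i}| + (K-D)$.

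To finish I would combine these with $K \le \epsilon'|M(i)|$. It suffices to show
\[
(1 + 2\epsilon + 2\epsilon')\bigl(|M(i)| - D\bigr) \;\ge\; |\optmatchinglevel{i}| + K - D ,
\]
and since the difference of the two sides is linear and (as $\epsilon,\epsilon'>0$) decreasing in $D$, this need only be verified at $D = K$; there, bounding $|M(i)| - K \ge (1-\epsilon')|M(i)|$ and $|\optmatchinglevel{i}| \le (1+\epsilon)|M(i)|$ reduces it to $(1+2\epsilon+2\epsilon')(1-\epsilon') \ge 1+\epsilon$, i.e.\ to $(\epsilon+\epsilon')(1-2\epsilon') \ge 0$ — which holds precisely because $\epsilon' \le \tfrac12$. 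The degenerate case $|M(i)| = 0$ is separate and trivial: then $K \le 0$, so $j = i$, $M(i \setminus j) = M(i)$, and $|\optmatchinglevel{i}| \le 0$.

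The step I expect to be the crux is the refined bound $|\optmatchinglevel{j}| \le |\optmatchinglevel{i}| + (K-D)$. The naive accounting — ``the matching loses up to $K$ edges while the optimum gains up to $K$ edges, independently'' — only yields a $\tfrac{1+\epsilon+\epsilon'}{1-\epsilon'}$ factor, which already exceeds $1+2\epsilon+2\epsilon'$ at $\epsilon=\epsilon'=\tfrac12$, so it is not good enough; one genuinely needs the observation that an update which shrinks the maintained matching is a deletion and hence cannot simultaneously enlarge the optimum, so the two sources of degradation are attributed to disjoint updates. Everything else is elementary and the constants were evidently chosen so that the final one–line inequality is exactly what $\epsilon,\epsilon'\le\tfrac12$ buys.
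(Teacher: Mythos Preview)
Your proof is correct and follows essentially the same approach as the paper: both separate the $K$ updates into insertions (which bound the growth of the optimum) and deletions (which bound the shrinkage of the maintained matching), exploiting that these two classes are disjoint. The paper parameterizes directly by $k_{\mathrm{ins}}$ and $k_{\mathrm{del}}$ and simplifies the ratio $\frac{(1+\epsilon)|M(i)|+k_{\mathrm{ins}}}{|M(i)|-k_{\mathrm{del}}}$ using $k_{\mathrm{ins}}+k_{\mathrm{del}}=k$, which is a slightly more direct route than your monotonicity-in-$D$ reduction to the extreme case $D=K$; but the underlying idea and the arithmetic are the same, and your observation that the ``naive'' accounting would fail at $\epsilon=\epsilon'=\tfrac12$ correctly identifies why the split is needed.
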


\begin{proof}
Suppose there were $k_{ins}$ insertions and $k_{del}$ deletions in
the $k=\epsilon' |M(i)|$ updates between updates $i$ and $j$.
The assumption about $M(i)$ implies that $\optmatchingsize{i}
\leq (1 + \epsilon) |M(i)|$.
Since each insert can increase the size of the maximum
matching by $1$, we have
$\optmatchingsize{j} \leq \optmatchingsize{i} + k_{ins}$.
Also, each deletion can remove at most one edge from $M(i)$,
so $|M(i \setminus j)| \geq |M| - k_{del}$.
The approximation ratio is then at most:
\begin{align*}
\frac{\optmatchingsize{j}}{|M(i \setminus j)|}
& \leq \frac{(1 + \epsilon) |M(i)| + k_{ins}}{|M(i)| - k_{del}}  \nonumber\\
& =  1 + \frac {\epsilon |M(i)| + k }{|M(i)| - k_{del}}            \nonumber \\
& \leq  1 + \frac{\epsilon |M(i)| + \epsilon' |M(i)|} {1/2 |M(i)|}
\qquad \text{Since $k_{del} \leq \epsilon' |M(i)| \leq 1/2 |M(i)|$} \nonumber \\
& \leq  1 + 2 \epsilon + 2 \epsilon'
\end{align*}
\end{proof}

This fact has immediate algorithmic consequences for situations
where the maximum matching is large.
Suppose we computed an $(1 + \epsilon / 4)$-MCM for $G(i)$, $M(i)$,
then $M(i \setminus j)$ is $(1+\epsilon)$ approximate maximum matching 
as long as $j \leq i + \epsilon |M(i)| / 4$.
The $O(m \epsilon^{-1})$ cost of the call to \textsc{ApproxMCM}
(given by Lemma~\ref{lem:staticmcm}) can then be charged
to the next $\epsilon |M(i)| / 4$ updates, giving
$O(\frac{m}{|M(i)|} \epsilon^{-2})$ time per update.
When $|M(i)|$ is large, this cost is fairly small.
On the other hand, when $|M(i)|$ is of constant size,
this approach will make a
call to \textsc{ApproxMCM} almost every update.

\ignore{
\todo{My guide has ask me to remove this paragraph}
We reduce the cost of these calls when $|M(i)|$ is small by showing
that \textsc{ApproxMCM} can be ran on only a part of the graph.
For this, we turn to a key idea from the work by Neiman et al.
\cite{neiman12deterministic},
which is to consider a small subset of the vertices as special.
Our choice of these vertices stems from a crucial step in the
data structure by Onak and Rubinfield \cite{onak2010maintaining},
namely that small maximum matchings also implies small vertex covers.
Specifically, there exist a small set of vertices $\vcover$ such that
all edges have at least one end point in $\vcover$.
}


For small size matching, we introduce the concept of {\em core subgraph}.
As mentioned previously, {\em core subgraph} can be found by using 
a vertex cover $G$.

\begin{definition}
\label{def:core}
Given a graph $G$ and a vertex cover $\vcover$, a  core subgraph
$G'$ consists of:
\begin{itemize}
\item All edges between vertices in $\vcover$
\item For each vertex $v \in \vcover$, the $|\vcover| + 1$ edges
of maximum weight of $v$ to vertices in $V \setminus \vcover$.
In case of an unweighted graph, these edges can be chosen arbitrarily.
\end{itemize}

\end{definition}

\begin{figure}
 \begin{tikzpicture}
 \draw (0,0) node {};
\draw (8,0) ellipse (5cm and 3cm);
 \draw (13.2,0) node {$G$};
  \draw (8.3,0) node {$G'$};
 \draw (6,0) circle (2cm);
 
 \coordinate (A1) at (5,.2);
\fill [black,opacity=.5] (A1) circle (2pt); 

 \coordinate (A2) at (6,1);
\fill [black,opacity=.5] (A2) circle (2pt); 

 \coordinate (A3) at (4.5,-1);
\fill [black,opacity=.5] (A3) circle (2pt);

 \coordinate (A4) at (6,-1.2);
\fill [black,opacity=.5] (A4) circle (2pt);

 \coordinate (A5) at (7,-.8);
\fill [black,opacity=.5] (A5) circle (2pt);

 \coordinate (A6) at (7.5,.4);
\fill [black,opacity=.5] (A6) circle (2pt);
 \draw (A6)[above] node {$v$};

\draw (A1) -- (A2);
\draw (A1) -- (A3);
\draw (A1) -- (A4);
\draw (A1) -- (A6);
\draw (A2) -- (A3);
\draw (A2) -- (A4);
\draw (A2) -- (A6);
\draw (A3) -- (A4);
\draw (A3) -- (A6);
\draw (A5) -- (A6);

\draw (A1) -- (3.5,.4);
\draw (A1) -- (3.5,.5);
\draw (A1) -- (3.5,.6);
\draw (A1) -- (3.5,.7);
\draw (A1) -- (3.5,.8);
\draw (A1) -- (3.5,.9);
\draw (A1) -- (3.5,1);

\draw (A2) -- (7,2.2);
\draw (A2) -- (6.9,2.2);
\draw (A2) -- (6.8,2.2);

\draw (A3) -- (4,-1.5);
\draw (A3) -- (4,-1.2);
\draw (A3) -- (4,-1.3);
\draw (A3) -- (4,-1.4);
\draw (A3) -- (4,-1.6);
\draw (A3) -- (4,-1.7);

\draw (A4) -- (5.6,-2.3);
\draw (A4) -- (5.7,-2.3);
\draw (A4) -- (5.8,-2.3);
\draw (A4) -- (5.9,-2.3);
\draw (A4) -- (6,-2.3);

\draw (A5) -- (7.7,-2);
\draw (A5) -- (7.4,-2);
\draw (A5) -- (7.5,-2);
\draw (A5) -- (7.6,-2);
\draw (A5) -- (7.8,-2);

\draw (A6) -- (8.6, 1);
\draw (A6) -- (8.6, .9);
\draw (A6) -- (8.6, .8);
\draw (A6) -- (8.6, .7);
\draw (A6) -- (8.6, .6);
\draw (A6) -- (8.6, .5);
\draw (A6) -- (8.6, .4);

\draw [decorate,decoration={brace,mirror}]
(8.8,.2) -- (8.8,1.2)  node [midway,right] {\small{$|\vcover |+1$ neighbors of} } ;
 \draw (11.4,.23)node {$v \in \vcover$ };
\end{tikzpicture}
\caption{An example showing the {\em core subgraph} $G'$ of G. All the vertices in 
the inner circle form a vertex cover $\vcover$ of G.
The {\em core subgraph} contains all the edges induced by the vertices in $\vcover$ plus atmost $|\vcover|+1$ edges from 
each vertex $v \in \vcover$ whose other endpoint is not in the vertex cover}
\label{fig:coresubgraph}
\end{figure}

An illustration of a {\em core subgraph} is shown in Figure
\ref{fig:coresubgraph}. It can be used algorithmically as follows.

\begin{lemma}
\label{lem:smallcover}
Let $G'$ be a core subgraph of $G$ formed using a vertex cover
$\vcover \subseteq V$.
If $M'$ is a $(1 + \epsilon)$-MCM in $G'$, then it's
also a $(1 + \epsilon)$-MCM in $G$.
\end{lemma}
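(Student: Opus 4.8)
The plan is to reduce the statement to the single claim that $G$ and its core subgraph $G'$ have the same maximum matching size. Since $G' \subseteq G$, the inequality $|MCM(G')| \le |MCM(G)|$ is immediate, so the whole content is the reverse inequality: every maximum matching of $G$ can be ``rerouted'' into an equally large matching that lives in $G'$. Once $|MCM(G')| = |MCM(G)|$ is in hand, the lemma follows in one line: if $M'$ is an $(1+\epsilon)$-MCM in $G'$ then $|M'| \ge \frac{1}{1+\epsilon}|MCM(G')| = \frac{1}{1+\epsilon}|MCM(G)|$, and since $G'$ is a subgraph of $G$ the set $M'$ is also a matching in $G$, hence an $(1+\epsilon)$-MCM in $G$.

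To prove $|MCM(G')| \ge |MCM(G)|$, I would start from a maximum matching $M$ of $G$ and modify it edge by edge, maintaining a matching $M''$ of $G$ with $|M''| = |M|$ while strictly decreasing the number of edges of $M''$ that do not lie in $G'$. Because $\vcover$ is a vertex cover, every edge of $M''$ has at least one endpoint in $\vcover$, and these $\vcover$-endpoints are distinct; hence $|M''| \le |\vcover|$ and $M''$ matches at most $|\vcover|$ vertices of $V \setminus \vcover$. An edge $e = (v,u) \in M''$ can fail to lie in $G'$ only when $v \in \vcover$, $u \notin \vcover$, and $e$ is not among the $|\vcover|+1$ designated $G'$-edges at $v$. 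In that case, among the $|\vcover|+1$ designated neighbours of $v$ in $V \setminus \vcover$ at most $|\vcover|$ are matched by $M''$, so at least one designated neighbour $u'$ is $M''$-unmatched; replacing $(v,u)$ by $(v,u')$ keeps $M''$ a valid matching of the same cardinality, converts this edge into a $G'$-edge, and touches no other edge. Thus the count of non-$G'$ edges drops by one, and after at most $|M|$ steps we obtain a matching contained in $G'$ of size $|M| = |MCM(G)|$.

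I do not anticipate a serious obstacle here. The one point that needs care is that each rerouting step is \emph{monotone}: changing $(v,u)$ to $(v,u')$ must not knock some previously good edge out of $G'$. This holds because the designated edges at each vertex of $\vcover$ are fixed once and for all by Definition~\ref{def:core}, and the rerouting never alters the $\vcover$-endpoint of any edge. A secondary point is the counting bound $|M''| \le |\vcover|$, which is just the defining property of a matching combined with $\vcover$ being a vertex cover. Finally, in the unweighted setting the phrase ``of maximum weight'' in Definition~\ref{def:core} is vacuous, as remarked there, so the argument applies verbatim; the extra ``$+1$'' in the number of designated edges is slack not needed for this lemma.
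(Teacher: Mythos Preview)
Your proof is correct and follows essentially the same approach as the paper: both reduce the lemma to $|MCM(G')|=|MCM(G)|$ and establish this via the same swap argument (if $(v,u)\in M$ with $v\in\vcover$, $u\notin\vcover$ is not in $G'$, then among the $|\vcover|+1$ designated neighbours of $v$ one is unmatched, so reroute). The only cosmetic difference is that the paper packages the single swap as an extremal/contradiction argument (take a maximum matching of $G$ using the most $G'$-edges and derive a contradiction), whereas you iterate the swap explicitly; your aside that the ``$+1$'' is slack is also correct, since $u$ itself is a matched vertex of $V\setminus\vcover$ not among the designated neighbours.
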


\begin{proof}
We first show that the size of the maximum matching in $G$
is the same as the size of the maximum matching in $G'$.
Among all maximum matchings in $G$, let
$\optmatching$ be one that uses the maximum number of edges in $E(G')$.
For the sake of contradiction, suppose $\optmatching$ 
contains an edge $(u, v)$ in $E(G) \setminus E(G')$.
Since $\vcover$ is a vertex cover, one of $u$ or $v$ is in $\vcover$,
without loss of generality assume it's $u$.
By the construction rule, for $(u, v)$ to not be included in  $G'$,
there exists $|\vcover| + 1$ neighbors of $u$ in $V \setminus \vcover$
that are in $G'$, let them be $N_{V \setminus \vcover}(u)$.
As the maximum matching in $G$ has size at most $|\vcover|$
and there are no edges with both endpoints in $V \setminus \vcover$,
at most $|\vcover|$ vertices in $N_{V \setminus \vcover}(u)$ can be matched.
Therefore there exists an unmatched vertex $x$ in
$N_{V \setminus \vcover}(u)$.
Substituting $(u, v)$ with $(u, x)$ gives a maximum matching
that uses one more edge in the $G'$, giving a contradiction.

Combining this with the fact that $E(G') \subseteq E(G)$ implies that the size of the
maximum matchings in $G$ and $G'$ are the same.
Therefore any $(1 + \epsilon)$-MCM in $G'$ is also a $(1 + \epsilon)$-MCM in $G$. 
\end{proof}

As mentioned previously, we can find $\vcover$ in the graph by using
the algorithm of Neiman and Solomon \cite{neiman12deterministic}.
Their algorithm maintains 3/2 approximate matching in $O( \sqrt{m} )$ 
update time in the worst case which is less than the bound we are claiming.
Whenever we need a vertex cover, we can report all the matched vertices
in the 3/2 approximate matching. From now on we will assume an oracle 
access to the vertex cover at any update step.
A more detailed treatment of maintaining a small cover can be found
in Appendix~\ref{subsec:vcover}.

Any vertex cover $\vcover$ in graph $G(i)$ formed out of a valid matching has the 
following property: $|\vcover| \leq 2 \optmatchingsize{i}$. This is because the 
size of any valid matching is always less than the maximum matching size $\optmatchingsize{i}$.
Therefore when $\optmatchingsize{i}$ is small,
we only need to run the static algorithm given by
Lemma~\ref{lem:staticmcm} on a {\em core subgraph}
$G'(i)$ of $G(i)$.
We can construct this graph
in $O(|\vcover|^2)( = O( \optmatchingsize{i}^2))$ time by examining up to
$O(|\vcover|)$ neighbors of each vertex in $\vcover$.
Using Lemma~\ref{lem:staticmcm}, we can find a $(1+\epsilon)$
approximate matching in this graph in $O( \optmatchingsize{i}^2 \epsilon^{-1})$
time.
Furthermore, Lemma~\ref{lem:stable} allows us to charge
this $O(\optmatchingsize{i}^2 \epsilon^{-1})$ time to the next
$\epsilon \optmatchingsize{i}/ 4$ updates.
Therefore, cost charged per update can be bounded by
$O(\optmatchingsize{i} \epsilon^{-2})$, which is small for small values of $\optmatchingsize{i}$.
Our data structure maintains the following global states:
\begin{enumerate}
	\item A matching $M$.
	\item A counter $t$ indicating the number of updates until we make the next call to \textsc{ApproxMCM}
	\item A vertex cover $\vcover$  (Using the algorithm of Neiman and Solomon \cite{neiman12deterministic})
\end{enumerate}

Upon initialization $M$ is obtained by running the static
algorithm on $G$, or can be empty if $G$ starts empty.
$t$ can be initialized to $\epsilon / 4 |M|$.
Since we handle insertions and deletions in almost symmetrical
ways, we present them as a single routine $\textsc{Update}$,
shown in Figure \ref{fig:lazysimple}

\begin{figure}[ht]
\begin{procedure}[H]
	\nl \If{Update is a deletion and $(u,v) \in M$}{
		\nl Remove $(u, v)$ from $M$\;
	}
	\nl $t \leftarrow t - 1$ \;
	\nl \If{$t \leq 0$}{
		\nl Construct a {\em core subgraph $G'$} of the current graph\;
		\nl $M \leftarrow \textsc{ApproxMCM}(G', 1 + \epsilon / 4)$ \;
		\nl $t \leftarrow \epsilon / 4 |M|$ \;
	}
\caption{Update(u, v)}
\end{procedure}
\caption{Lazy update algorithm for maintaining $(1 + \epsilon)$-MCMs}
\label{fig:lazysimple}
\end{figure}

The bounds of this routine is as follows:

\begin{theorem}
\label{thm:lazy}
The matching $M$ is an $(1 + \epsilon)$-MCM over all updates.
Furthermore, the amortized cost per update is
$O(\sqrt{m} \epsilon^{-2})$.
\end{theorem}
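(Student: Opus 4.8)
The statement has two parts: (i) the maintained matching $M$ is always a $(1+\epsilon)$-MCM, and (ii) the amortized cost per update is $O(\sqrt{m}\,\epsilon^{-2})$. For the correctness part, I would argue inductively over the updates between two consecutive calls to \textsc{ApproxMCM}. Right after a call at step $i$, the algorithm sets $M = M(i)$ with $M(i)$ a $(1+\epsilon/4)$-MCM of the core subgraph $G'(i)$; by Lemma~\ref{lem:smallcover} this is also a $(1+\epsilon/4)$-MCM of $G(i)$. The counter is set to $t = \epsilon|M(i)|/4$, so the next recomputation happens at step $j = i + \epsilon|M(i)|/4$, and between steps $i$ and $j$ the maintained matching is exactly $M(i\setminus k)$ for $i \le k \le j$ (deletions strip edges from it, insertions do not touch it). Applying Lemma~\ref{lem:stable} with $\epsilon \leftarrow \epsilon/4$ and $\epsilon' \leftarrow \epsilon/4$ gives that $M(i\setminus k)$ is a $(1 + 2(\epsilon/4) + 2(\epsilon/4))$-MCM $= (1+\epsilon)$-MCM of $G(k)$ for all such $k$, which is exactly what we need; the hypothesis $\epsilon/4 \le 1/2$ of Lemma~\ref{lem:stable} holds since $\epsilon < 1/2$.

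**For the running-time part,** I would split into two regimes according to the size of the maximum matching at the recomputation step $i$. Let $\mu_i = \optmatchingsize{i}$. The cost of a single call is the cost of building $G'(i)$ plus the cost of \textsc{ApproxMCM} on it. Using the oracle vertex cover $\vcover$ with $|\vcover| \le 2\mu_i$, the core subgraph has $O(|\vcover|^2) = O(\mu_i^2)$ edges and is built in $O(\mu_i^2)$ time, and \textsc{ApproxMCM} runs on it in $O(\mu_i^2 \epsilon^{-1})$ time by Lemma~\ref{lem:staticmcm}. This call is followed by $\epsilon|M(i)|/4 = \Theta(\epsilon\mu_i)$ updates with no recomputation (here I use that $M(i)$ is a good approximation of $\mu_i$, so $|M(i)| = \Theta(\mu_i)$), so the amortized charge is $O(\mu_i^2\epsilon^{-1} / (\epsilon\mu_i)) = O(\mu_i \epsilon^{-2})$ per update. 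On the other hand, we can always simply run \textsc{ApproxMCM} on all of $G$ in $O(m\epsilon^{-1})$ time and charge it to the next $\Theta(\epsilon\mu_i)$ updates for an amortized cost of $O(m\epsilon^{-2}/\mu_i)$ per update — or, more cleanly, just observe that the core-subgraph construction never costs more than touching all of $G$, so the per-call cost is $O(\min(m, \mu_i^2)\epsilon^{-1})$ and the amortized cost is $O(\min(m,\mu_i^2)\epsilon^{-1} / (\epsilon\mu_i)) = O(\min(m/\mu_i,\ \mu_i)\,\epsilon^{-2})$. Since $\min(m/\mu_i, \mu_i) \le \sqrt{m}$ for every value of $\mu_i$, the amortized cost per update is $O(\sqrt{m}\,\epsilon^{-2})$ in all cases.

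**A few details I would be careful about.** One subtlety is that $\mu_i$ changes over time, but the charge of a call made at step $i$ is bounded using $\mu_i$ and the number of updates it is spread over, which is determined at step $i$, so there is no circularity. A second subtlety is the corner case where $|M(i)| = 0$ (empty matching): then $t = 0$ and the algorithm would recompute every step, but in that case $\mu_i = 0$ as well, $G'$ is empty, the call costs $O(1)$, and the graph must be edgeless, so $m = 0$ and the bound is trivial — alternatively one initializes the counter to $\max(\epsilon|M|/4, 1)$ or similar, which I would mention in passing. A third point is reconciling "$|M(i)| = \Theta(\mu_i)$": since $M(i)$ is a $(1+\epsilon/4)$-MCM of $G(i)$ we have $\mu_i/(1+\epsilon/4) \le |M(i)| \le \mu_i$, and $\epsilon/4 < 1/8$, so the two quantities agree up to a constant factor; this is what lets me pass freely between $|M(i)|$ and $\mu_i$ in the charging argument. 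I expect the main obstacle — really the only non-mechanical step — to be setting up the $O(\min(m/\mu_i,\mu_i)) \le O(\sqrt m)$ balancing cleanly and making sure the amortization is stated so that each call is charged to a disjoint block of updates that actually follow it, so the total charge telescopes to $O(\sqrt m\,\epsilon^{-2})$ per update over the whole sequence.
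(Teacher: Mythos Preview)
Your proposal is correct and matches the paper's proof essentially step for step: the correctness part invokes Lemma~\ref{lem:smallcover} followed by Lemma~\ref{lem:stable} with $\epsilon\leftarrow\epsilon/4$, $\epsilon'\leftarrow\epsilon/4$, and the running-time part charges the $O(\min\{m,\mu_i^2\}\epsilon^{-1})$ cost of a call to the next $\Theta(\epsilon\mu_i)$ updates and balances $\min\{m/\mu_i,\mu_i\}\le\sqrt{m}$. The only notable differences are in the corner cases: the paper separately handles the case $\epsilon|M(i)|<1$ (not just $|M(i)|=0$), bounding the call cost by $O(\epsilon^{-3})$ and absorbing it into $O(\sqrt{m}\,\epsilon^{-2})$ for constant $\epsilon$, and it also explicitly treats the final call whose block of subsequent updates may be truncated; your remark about initializing $t\leftarrow\max(\epsilon|M|/4,1)$ and your comment about disjoint blocks cover the same ground informally.
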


\begin{proof}
Let the current update be at time $j$, and the matching
$M$ that we maintained was computed in iteration $i < j$.
So at update step $i$, the matching $M$ is same as $M(i)$ 
and at update step $j$, it  is $M(i \setminus j)$.
If $t > 0$, then since $t$ was initialized to $\epsilon / 4 |M(i)|$, we
have $j - i \leq \epsilon / 4 |M(i)|$.
The guarantees for $M(i \setminus j)$ follows from Lemma \ref{lem:stable}
with $\epsilon \leftarrow \epsilon / 4$ and $\epsilon' \leftarrow \epsilon /4$.

We now turn our attention to running time.
Consider a call to \textsc{ApproxMCM} made at update $i$.
Assume that $\epsilon |M(i)| \ge 1$.
We have seen that there exists a {\em core subgraph} $G'(i)$ such that the number of edges
$|E(G'(i))|$ can be bounded by $O(\min \{ m, \optmatchingsize{i}^2 \} )$.
Since $M(i)$ is a $(1+\epsilon/4)$ approximate matching, $(1+\epsilon/4)|M(i)| \ge  \optmatchingsize{i}$.
So, the size of $E(G'(i))$ is $O(\min \{ m, |M(i)|^2 \} )$.
Moreover, the cost of finding the matching( in \textsc{ApproxMCM}) in the graph can be at most 
$O(\min \{ m, |M(i)|^2 \} \epsilon^{-1})$.
This cost can be charged to the $\epsilon|M(i)|/4$ updates
starting at update $i$, implying the following amortized cost
per update:
\begin{align*}
\frac{O(\min \{ m, |M(i)|^2 \} \epsilon^{-1})}{\frac{\epsilon}{4} |M(i)|}
= O \left( \min \left\{ \frac{m}{|M(i)|}, |M(i)| \right\} \epsilon^{-2} \right)
\end{align*}
If $|M(i)| \geq \sqrt{m}$, the first term inside $\min$ is at most
$\sqrt{m}$, otherwise the second is at most $\sqrt{m}$.
Combining these two cases gives our desired bound.

Now we take a look at some corner cases to complete the proof. We assumed that 
the cost of finding the matching at level $i$ can be charged to next $\epsilon | M(i)|/4$
updates. This is true except for last call to \textsc{ApproxMCM}. The number of updates 
after this last call can be less than $\epsilon |M(i)|/4$. This cost can be amortized to all
the updates. Since the number of updates is at least $m$, the total cost 
charged to each update step is $O( \epsilon^{-1})$.

The other case is when $\epsilon |M(i) | < 1$. This implies that $G'(i)$ has size 
at most $O( \epsilon^{-2})$ and finding a matching in such a graph takes time $O( \epsilon^{-3})$.
For any constant $\epsilon$, this bound is $O(\sqrt{m} \epsilon^{-2})$
and can be charged to update step itself.

So the amortized cost charged to any update step is at most $O( \sqrt{m} \epsilon^{-2})$.
\end{proof}

\subsection{Improvements, Worst-Case Bound, and Weights}
\label{subsec:improvements}

Several improvements can be made to the simpler version of our
algorithm described above.
We state the main statements here, and more details on these
modifications can be found in Appendix~\ref{sec:improvementsdetails}.

First, note that we depend on the algorithm of Neiman 
and Solomon \cite{neiman12deterministic} to maintain approximate vertex 
cover. Instead of using their algorithm, we design another
simple dynamic algorithm which maintains approximate vertex cover.
This algorithm is similar in spirit as our 
approximate matching algorithm, i.e, we use the property that
vertex cover are also {\em stable}  and a single update to the graph
can change the vertex cover by 1. Using the techniques similar to the 
one presented in the previous section, we design an algorithm in Appendix~\ref{subsec:vcover}
which take $O(\sqrt{m})$ update time in the worst case to maintain
approximate vertex cover. 

Note that  \textsc{ApproxMCM} may take $O(m \epsilon^{-1})$ time in the worst case.
So our algorithm in the previous section had an {\em amortized}
running time of $O(\sqrt{m} \epsilon^{-2})$ per update.
We show that we can maintain approximate matching in worst case
$O( \sqrt{m} \epsilon^{-2})$ update time.
Specifically, we show in Appendix~\ref{subsec:worst} that computation 
cost of $O(m \epsilon^{-1})$ time in \textsc{ApproxMCM}
can  be distributed across
a number of updates. 

Furthermore, our ideas of maximum cardinality matching
can also be adapted to maximum
weighted matchings.
This extension is natural because maximum cardinality matchings
is a special case where all edges have weight $1$.
A closer examination of the proofs of Lemma~\ref{lem:stable}
shows that when all edge weights are in the range $[1, \maxweight]$,
the stability properties only degrade by a factor of $\maxweight$.
In Appendix~\ref{subsec:weights}, we present the following result:

\begin{theorem}
\label{thm:weightedsimple}
For any constant $\epsilon$, there exists an algorithm that maintains 
$(1+\epsilon)$-approximate maximum weighted matching in
a graph where edges have weights between $[1, \maxweight]$
in $O(\sqrt{m} \maxweight \epsilon^{-2})$ update time.
\end{theorem}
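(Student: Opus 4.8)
The plan is to adapt the \textsc{Update} routine of Figure~\ref{fig:lazysimple} by replacing the cardinality subroutine \textsc{ApproxMCM} with the weighted subroutine \textsc{ApproxMWM} from Lemma~\ref{lem:staticmwm}, run on a weighted core subgraph, and to track a weighted analogue of the lazy counter. The key structural tool is the core subgraph of Definition~\ref{def:core}: because that definition already keeps, for each $v\in\vcover$, the $|\vcover|+1$ edges of \emph{maximum weight} out of $v$ to $V\setminus\vcover$, I expect a weighted version of Lemma~\ref{lem:smallcover} to hold, namely that any $(1+\epsilon)$-MWM in $G'$ is a $(1+\epsilon)$-MWM in $G$. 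First I would prove this: take a maximum weight matching $\optmatching$ in $G$ using the greatest number of $E(G')$-edges; if it uses some $(u,v)\in E(G)\setminus E(G')$ with $u\in\vcover$, then among the $|\vcover|+1$ retained neighbors of $u$ at least one, say $x$, is unmatched by $\optmatching$ (since $|\optmatching|\le|\vcover|$), and $w(u,x)\ge w(u,v)$ by the maximum-weight selection rule, so swapping $(u,v)$ for $(u,x)$ does not decrease the weight and increases the count of $E(G')$-edges — contradiction. Hence $w(\mathcal{M}(G'))=w(\mathcal{M}(G))$, and the claim follows since $E(G')\subseteq E(G)$.

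Next I would establish the weighted stability lemma, the analogue of Lemma~\ref{lem:stable}. The point is that with weights in $[1,\maxweight]$, a single insertion raises the optimum weight by at most $\maxweight$, and a single deletion removes at most one edge from the current matching, hence decreases its weight by at most $\maxweight$. So if $M(i)$ is a $(1+\epsilon)$-MWM on $G(i)$ and there are $k$ updates up to step $j$ with $k\le \epsilon' w(M(i))/\maxweight$, then $w(\mathcal{M}(j))\le w(\mathcal{M}(i))+k_{ins}\maxweight\le (1+\epsilon)w(M(i))+k\maxweight$ while $w(M(i\setminus j))\ge w(M(i))-k_{del}\maxweight\ge w(M(i))-\epsilon' w(M(i))\ge \tfrac12 w(M(i))$. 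Dividing gives an approximation ratio at most $1+2\epsilon+2\epsilon'$, exactly mirroring the computation in the proof of Lemma~\ref{lem:stable}. This is where the advertised extra $\maxweight$ factor enters: the safe window of updates shrinks from $\Theta(\epsilon'|M(i)|)$ to $\Theta(\epsilon' w(M(i))/\maxweight)$.

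Then I would put the pieces together exactly as in the proof of Theorem~\ref{thm:lazy}. Run \textsc{ApproxMWM}$(G',1+\epsilon/4)$, reset the counter to $t\leftarrow \epsilon w(M)/(4\maxweight)$, and argue amortized cost. The weighted core subgraph still has $O(\min\{m,|\vcover|^2\})$ edges, and $|\vcover|\le 2|MCM(i)|\le 2|MWM(i)|\le 2 w(\mathcal{M}(i))$, and also $w(\mathcal{M}(i))\le(1+\epsilon/4)w(M(i))$ since $M(i)$ is a $(1+\epsilon/4)$-MWM; so $|E(G')|=O(\min\{m, w(M(i))^2\})$. By Lemma~\ref{lem:staticmwm} the call costs $O(|E(G')|\,\epsilon^{-1}\log(\epsilon^{-1}))$, which is $O(\min\{m,w(M(i))^2\}\epsilon^{-2})$ for constant $\epsilon$; charging this over the next $\epsilon w(M(i))/(4\maxweight)$ updates gives amortized cost $O\!\big(\maxweight\cdot\min\{m/w(M(i)),\,w(M(i))\}\epsilon^{-2}\big)=O(\sqrt m\,\maxweight\,\epsilon^{-2})$, balancing at $w(M(i))=\sqrt m$. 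The same corner cases as before need to be dispatched: the final call (amortized over all $\ge m$ updates, contributing $O(\epsilon^{-1})$ each), and the case $\epsilon w(M(i))<\maxweight$, where $|\vcover|$ is $O(\maxweight\epsilon^{-1})$ so $G'$ has $O(\maxweight^2\epsilon^{-2})$ edges and one \textsc{ApproxMWM} call per update costs $O(\maxweight^2\epsilon^{-3})=O(\sqrt m\,\maxweight\,\epsilon^{-2})$ for constant $\epsilon$ and $\maxweight=O(\sqrt m)$ (and if $\maxweight$ is larger the bound is vacuous). I expect the only genuinely delicate point to be the weighted core-subgraph argument — specifically making sure the max-weight selection rule in Definition~\ref{def:core} interacts correctly with the swapping argument so that weight is preserved exactly rather than merely approximately; everything downstream is a routine re-run of the proof of Theorem~\ref{thm:lazy} with $|M|$ replaced by $w(M)/\maxweight$.
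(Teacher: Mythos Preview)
Your approach matches the paper's: you prove the weighted analogues of Lemma~\ref{lem:stable} and Lemma~\ref{lem:smallcover} (these are exactly Lemmas~\ref{lem:stableweighted} and~\ref{lem:extendedsmallcover} in Appendix~\ref{sec:improvementsdetails}), and then amortize as in Theorem~\ref{thm:lazy}. The paper's actual proof of Theorem~\ref{thm:weightedsimple} packages this together with the worst-case machinery of Section~\ref{subsec:worst} (rounds that spread the static computation over $\epsilon\,w(M)/(8\maxweight)$ updates), but since the paper explicitly says these extensions are independent, your amortized version is a legitimate proof of the stated bound.

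One step in your chain of inequalities is wrong, though. You write $|\vcover|\le 2|MCM(i)|\le 2|MWM(i)|\le 2\,w(\optmatchinglevel{i})$, but $|MCM(i)|\le|MWM(i)|$ is false in general: the maximum-cardinality matching has, by definition, at least as many edges as any other matching, including the maximum-weight one. The conclusion you want still holds via a different route: since every edge weight is at least $1$, the max-cardinality matching satisfies $|MCM(i)|\le w(MCM(i))\le w(\optmatchinglevel{i})$, and then $w(\optmatchinglevel{i})\le(1+\epsilon/4)\,w(M(i))$ as you already note. With this fix the size bound $|E(G')|=O(\min\{m,\,w(M(i))^2\})$ goes through and the rest of your amortization is correct.
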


\section{Approximate Weighted Matchings with Polylog Dependency on $\maxweight$}
\label{sec:weightedscaling}

We now show algorithms that approximate the maximum weighted
matching in time that depends on $\log{\maxweight}$ instead
of $\poly{\maxweight}$.
This reduced dependency on $\maxweight$ is a subject of study
in static algorithms since $\maxweight$ is often $\poly{n}$ or larger.

Our overall scheme is based on the data structure for weighted matchings
by Anand et al. \cite{AnandBGS12,AnandBGS-Arxiv12}.
Their algorithm maintains $\log\maxweight$ levels and
the edges are partitioned across various levels according to their weights.
A matching $\matchinglevel{l}$ is maintained at each level $l$, and
they gave a way to form a single matching $\combinedmatching$
from these $\log{\maxweight}$ matchings.
Algorithmically $\combinedmatching$ can be viewed as the
result of a greedy process which proceed in decreasing order
of levels and adds edges whenever possible.
Alternatively, it can be viewed as adding an edge $(u,v) \in \matchinglevel{l}$
to $\combinedmatching$ and removing all edges incident
to $u$ and $v$ from all $M_{l'}$s where $l' < l$.
At any update step, the matching maintained is equivalent to the
one generated in Figure~\ref{fig:combine}.

\begin{figure}[ht]
\qquad

\begin{procedure}[H]
$\combinedmatching = \emptyset$\;
Let $l_{\max}$ and $l_{\min}$ be the maximum and minimum level number respectively\;
\For{$l = l_{\max}$ to $l_{\min}$} {
	$ \combinedmatching=  \combinedmatching \cup M_l$\;
	\For{$(u, v) \in M_l$} {
		Remove all the edges adjacent to $u$ and $v$ from $M_{l'}$ such that $l' < l$
		
	}
}
\end{procedure}

\caption{Generating $\combinedmatching$}
\label{fig:combine}
\end{figure}


Anand et al. \cite{AnandBGS12,AnandBGS-Arxiv12} showed  
that the combined matching $\combinedmatching$ can be maintained on a dynamic graph
if the matching at each level $l$ can be maintained.
We will use their result as a black-box via. the following Lemma.
\begin{lemma}
\label{lem:lemmacmobine}
(\cite{AnandBGS-Arxiv12})
If the matching on each level
is maintained in $O(f(n))$ update time, then the overall matching can be maintained
in $O( f(n) \log \maxweight)$ update time.
\end{lemma}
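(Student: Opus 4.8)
The plan is to analyze the combined matching $\combinedmatching$ produced by the greedy process in Figure~\ref{fig:combine}, and to show that each update to $G$ triggers only a bounded number of changes at each of the $\log\maxweight$ levels, so that the per-update cost is $O(f(n)\log\maxweight)$ by summing over levels. First I would set up the data structure: for each level $l$ we run, as a black box, the assumed dynamic algorithm that maintains $\matchinglevel{l}$ on the subgraph $G_l$ of edges whose weight places them at level $l$, spending $O(f(n))$ per update to $G_l$. A single edge insertion or deletion in $G$ affects exactly one level's subgraph $G_l$, so it costs $O(f(n))$ there. The subtlety is that a change to $\matchinglevel{l}$ can cascade: when an edge enters or leaves $\matchinglevel{l}$, the greedy recombination may add or remove edges at all lower levels $l' < l$, and removals at lower levels in turn free up vertices that may cause still further changes.

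The key steps, in order: (1) describe how $\combinedmatching$ is recomputed incrementally rather than from scratch — when $\matchinglevel{l}$ changes by one edge, we re-run the greedy sweep only from level $l$ downward, or better, argue that the set of edges of $\combinedmatching$ that change is small; (2) bound the number of edges that change in $\combinedmatching$ per update: an update changes one $\matchinglevel{l}$ by $O(1)$ edges (this itself must be guaranteed by the black-box level algorithm, or folded into $f(n)$), and I would argue by a potential/charging argument that the cascade down the levels only touches $O(1)$ edges per level, hence $O(\log\maxweight)$ edges of $\combinedmatching$ change in total; (3) for each level $l' < l$ that is touched, the work to update $\matchinglevel{l'}$ and propagate is again $O(f(n))$, so the total is $O(f(n)\log\maxweight)$. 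I would lean on the structural description given just before the lemma — that $\combinedmatching$ is exactly the result of the decreasing-level greedy, equivalently "add $(u,v)\in\matchinglevel{l}$ and delete all edges incident to $u,v$ from lower $M_{l'}$" — since this makes the local nature of a single change transparent: flipping one edge at level $l$ unmatches at most two vertices, which can reinstate at most a bounded number of edges below it.

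The main obstacle I expect is controlling the cascade precisely: showing that one change at level $l$ does not ripple into $\Theta(\text{levels})$ changes at \emph{each} lower level, which would blow the bound up to $O(f(n)\log^2\maxweight)$ or worse. The right way around this is an amortized accounting: maintain as potential the number of edges currently in $\combinedmatching$ (or the number of "matched-at-some-level" vertices), observe that an update changes this potential by $O(1)$ directly, and that every unit of extra work done during the cascade corresponds to a unit decrease in potential, so the amortized number of affected level-subgraphs is $O(\log\maxweight)$ overall — not per level. Since this lemma is explicitly invoked as a black box from \cite{AnandBGS-Arxiv12}, I would also note that the cleanest route is simply to cite their analysis and restate it in our notation, verifying only that the interface we need (namely: maintaining each $\matchinglevel{l}$ in $O(f(n))$ per update to $G_l$ suffices to maintain $\combinedmatching$ in $O(f(n)\log\maxweight)$ per update to $G$) matches what they prove, rather than re-deriving the charging argument from scratch.
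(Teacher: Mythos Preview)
The paper gives no proof of this lemma: it is stated as a black-box import from \cite{AnandBGS-Arxiv12}, and the surrounding text says explicitly ``We will use their result as a black-box via the following Lemma.'' So your closing instinct --- simply cite their analysis and verify the interface --- is exactly what the paper does, and is the right call.

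Your attempted sketch, though, has a conceptual slip worth flagging. You treat a change to $\matchinglevel{l}$ as triggering \emph{updates to the lower-level matchings} $\matchinglevel{l'}$ themselves, and then worry about cascades and propose a potential function to control them. But in this framework the matchings $\matchinglevel{l}$ are maintained \emph{independently}, each on its own subgraph $G_l$; an edge update to $G$ lands in exactly one $G_l$ and so causes exactly one call costing $O(f(n))$. The ``removal from $M_{l'}$'' in Figure~\ref{fig:combine} is part of the \emph{combination} step that produces $\combinedmatching$ from the fixed $\matchinglevel{l}$'s, not a modification of the stored per-level matchings. The actual work in \cite{AnandBGS-Arxiv12} is to maintain $\combinedmatching$ incrementally under a change to one $\matchinglevel{l}$, and the $\log\maxweight$ factor comes from the number of levels the greedy sweep (or its incremental analogue) touches, not from re-invoking the $O(f(n))$ black box at each lower level. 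So your cascade concern and the proposed potential argument are aimed at the wrong object; fortunately, since the paper does not ask you to reprove this, the point is moot.
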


In their work, $f(n) = O(\log n)$ due to the use of the dynamic maximal
matching data structure by Baswana et al.\cite{baswana2011fully},
which leads to a total bound of $O( \log n \log \maxweight)$.
We will substitute our algorithms in place of this algorithm, and investigate
different leveling schemes which lead to improved approximation ratios.
This comes at a cost of a higher value of $f(n) = O(\sqrt{m} \poly{\epsilon^{-1}})$,
which leads to a time of $O(\sqrt m \log{\maxweight} \poly{\epsilon^{-1}})$ per update.

In Section \ref{subsec:three}, we present a deterministic algorithm which
maintains a $(3+\epsilon)$-MWM in $O( \sqrt{m} \log N \epsilon^{-3})$ time,
and in Section \ref{subsec:arbitrary}, we given an alternate approach
which maintains a $(1+\epsilon)$-approximate MWM in
 $O(\sqrt{m} \log \maxweight \epsilon^{-2 - O(\epsilon^{-1})} )$
time per update. Note that in both the above algorithm, we will maintain
approximate MCM or MWM matching  at each level.
For this we can use the amortized and worst-case versions of our data structures described
in Sections~\ref{sec:algo}~and Appendix~\ref{sec:improvementsdetails} 
leading to corresponding types of final bounds for the above algorithm.

In many of our proofs, we will incur $(1 + O(\epsilon))$ multiplicative
error in several places.
As a result, the final approximation factors in our calculations will
often be $1 + c \epsilon$ for some constant $c$.
Such bounds can be converted to $1 + \epsilon$ approximations
by initiating the calls with smaller values of $\epsilon$.
As a result, we will omit these steps to simplify presentation.

\subsection{$(3 + \epsilon)$-Approximation Using Approx MCMs}
\label{subsec:three}

We first show that our data structure for maintaining $(1+\epsilon)$-MCMs
given in Theorem~\ref{thm:maxcardinality} can be used on each level.
The transformation for turning a MWM problem into a set of
$O(\log{\maxweight})$ MCM instances is based on a rounding
scheme by Eppstein et al. \cite{EppsteinLMS10,EppsteinLMS12}.
For a fixed value of $r$, we assign an edge $e$ with
$w(e) \in [ \alpha^{l+r}, \alpha^{l+r+1})$ to level $l$ where $\alpha$
is a constant which we will calculate later.
Note that the level of some edges can be $-1$, but our proof below
can extend to any negative level as well.
We define the rounded weight of an edge $e$ assigned to level
$l$ using:
\begin{align*}
w_r(e) \defeq \alpha^{l+r}
\end{align*}

Our analysis of the quality of $\combinedmatching$ is based on 
mapping each edge in $\combinedmatching$ to a set of edges in $M_{l}$'s.
For $e(u, v) \in \combinedmatching$ from level $l$, we define $\mathcal{R}(e)$ as:
\begin{align*}
\mathcal{R}(e) = \{e\} \cup \{ (x,y)~|~(x,y) \in M_{l'}
	\text{ where }l' < l\text{, and  }\{x, y\} \cap \{u, v\} \neq \emptyset \}
\end{align*}
In other words, $\mathcal{R}(e)$ contains edge $e$ and all those edges
adjacent to $u$ and $v$ from lower levels that were removed when
$(u,v)$ was added to $\combinedmatching$.
Note that $e$ is the only edge in $\mathcal{R}(e)$ from level $i$. 
And for all $l' < l$, there can be at most 2 edges from level $l'$ in $\mathcal{R}(e)$.
To simplify our notations, we will use $w(S)$ to denote
the total weight of a set of edges $S$
(that could be either $\combinedmatching$, $\optmatching$
or $\matchinglevel{l}$ for some $l'$)

For an edge $e \in \combinedmatching$, let $\Phi(e)$ denote
the total rounded weights of edges in $\mathcal{R}(e)$, i.e.,
$\Phi(e) = w_r(\mathcal{R}(e))$.
We can show that $\Phi(e)$ is closely related to  $w_r(e)$.

\begin{lemma}
\label{lem:charge}
For $e \in \combinedmatching$,
\begin{align*}
\Phi(e) \leq \frac{\alpha + 1}{\alpha - 1} w_r(e)
\end{align*}
\end{lemma}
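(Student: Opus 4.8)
The plan is to bound $\Phi(e) = w_r(\mathcal{R}(e))$ by summing the rounded weights of the edges that get displaced when $e$ joins $\combinedmatching$, level by level. Fix $e = (u,v) \in \combinedmatching$ coming from level $l$, so $w_r(e) = \alpha^{l+r}$. By the definition of $\mathcal{R}(e)$, the only level-$l$ edge in $\mathcal{R}(e)$ is $e$ itself, and for each level $l' < l$ there are at most two edges in $\mathcal{R}(e)$ (one incident to $u$, one incident to $v$, since $M_{l'}$ is a matching). Each such edge from level $l'$ has rounded weight exactly $\alpha^{l'+r}$. Therefore I would write
\begin{align*}
\Phi(e) \;=\; w_r(\mathcal{R}(e)) \;\le\; \alpha^{l+r} \;+\; \sum_{l' \le l-1} 2\,\alpha^{l'+r}.
\end{align*}

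The next step is just to evaluate the geometric series. Factoring out $\alpha^{l+r} = w_r(e)$,
\begin{align*}
\Phi(e) \;\le\; w_r(e)\left(1 + 2\sum_{k\ge 1}\alpha^{-k}\right)
\;=\; w_r(e)\left(1 + \frac{2}{\alpha - 1}\right)
\;=\; \frac{\alpha+1}{\alpha-1}\, w_r(e),
\end{align*}
using $\alpha > 1$ for convergence. This gives the claimed inequality. One small point I would be careful about is the remark in the text that levels can be negative: the sum $\sum_{l' \le l-1}$ ranges over all integers below $l$ (not just nonnegative ones), but since the edges of $\combinedmatching$ live only on the finitely many levels that actually contain edges, the sum is finite, and bounding it by the full infinite geometric tail $\sum_{k\ge1}\alpha^{-k}$ is a valid (and clean) over-estimate.

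I do not expect a genuine obstacle here — the only thing to get right is the "at most two edges per lower level" count, which follows because $M_{l'}$ is a matching so at most one of its edges is incident to $u$ and at most one to $v$. Everything else is a geometric-series computation, and the constant $\alpha$ is left free precisely so that a later lemma can optimize $\frac{\alpha+1}{\alpha-1}$ against whatever loss the rounding $w_r(e) \le w(e) < \alpha\, w_r(e)$ incurs.
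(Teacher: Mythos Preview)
Your proof is correct and follows essentially the same approach as the paper's own proof: bound $\Phi(e)$ by $w_r(e)$ plus at most two edges per lower level, then sum the resulting geometric series to obtain the factor $\frac{\alpha+1}{\alpha-1}$. Your added remarks about negative levels and the justification of the ``at most two per level'' count are accurate refinements but do not change the argument.
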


\begin{proof}
Let $e \in \combinedmatching$ be on level $i$.
Since there are at most $1$ edge on level $i$ assigned to $e$
($e$ itself) and 
$2$ edges per level assigned to $e$ for each level $j < i$, we have:
\begin{align*}
\Phi(e) =&  \sum_{e' \in \mathcal{R}(e)} w_r(e')\\
      = &  w_r(e) + \sum_{j < i} \sum_{e'\in \matchinglevel{j} \& e' \in \mathcal{R}(e)} w_r(e')\\
      \leq & \alpha^{i + r} + \sum_{j<i} 2 \alpha^{j+r}\\
      \leq & \alpha^{i + r} \left(1 + 2 \sum_{j<i} \alpha^{j - i} \right)\\
      = & w_r(e) \left( 1 + 2 \frac{1}{\alpha - 1} \right)\\
      = & \frac{\alpha + 1}{\alpha - 1} w_r(e)
\end{align*}

\end{proof}

This allows us to relate the weight of $\combinedmatching$
to the weight of the optimum matching, $\optmatching$.

\begin{lemma}
\label{lem:weightopt}
\begin{align*}
(1 + \epsilon)\frac{\alpha + 1}{\alpha - 1} w(\combinedmatching)
\geq w_r(\optmatching)
\end{align*}
\end{lemma}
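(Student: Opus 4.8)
The plan is to bound $w_r(\optmatching)$ from above by relating each edge of $\optmatching$ to the combined matching $\combinedmatching$ via a charging argument, and then to invoke Lemma~\ref{lem:charge} to convert that into a bound in terms of $w(\combinedmatching)$. First I would observe that each level $l$ matching $\matchinglevel{l}$ is (by assumption, since we run \textsc{ApproxMCM} or \textsc{ApproxMWM} on every level) a $(1+\epsilon)$-approximate matching within the subgraph of edges assigned to level $l$, and crucially the rounded weight $w_r$ is constant ($=\alpha^{l+r}$) on each level, so a $(1+\epsilon)$-approximate cardinality matching on level $l$ has rounded weight at least $\frac{1}{1+\epsilon}$ times the rounded weight of any matching using only level-$l$ edges. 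Partitioning $\optmatching$ by level, $\optmatching = \bigcup_l \optmatching_l$ where $\optmatching_l$ is the set of edges of $\optmatching$ assigned to level $l$, each $\optmatching_l$ is a matching in the level-$l$ subgraph, hence $w_r(\matchinglevel{l}) \geq \frac{1}{1+\epsilon} w_r(\optmatching_l)$, and summing gives $\sum_l w_r(\matchinglevel{l}) \geq \frac{1}{1+\epsilon} w_r(\optmatching)$.

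Next I would relate $\sum_l w_r(\matchinglevel{l})$ to $w(\combinedmatching)$ using the map $\mathcal{R}$. The key structural fact is that the sets $\{\mathcal{R}(e)\}_{e \in \combinedmatching}$ partition $\bigcup_l \matchinglevel{l}$: every edge $e' \in \matchinglevel{l'}$ is either in $\combinedmatching$ itself (in which case it lies in $\mathcal{R}(e')$) or it was removed from some $M_{l'}$ during the greedy process of Figure~\ref{fig:combine} because some higher-level edge $e \in \combinedmatching$ incident to it was added — and that $e$ is unique since once $e'$ is removed it stays removed. Hence $\sum_{e \in \combinedmatching} \Phi(e) = \sum_{e \in \combinedmatching} w_r(\mathcal{R}(e)) \geq \sum_l w_r(\matchinglevel{l})$ (with equality, modulo the minor subtlety that some matched vertices of $\combinedmatching$ might kill no lower-level edges, which only helps). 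Combining with the previous paragraph, $\sum_{e \in \combinedmatching} \Phi(e) \geq \frac{1}{1+\epsilon} w_r(\optmatching)$.

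Finally I would apply Lemma~\ref{lem:charge}, which gives $\Phi(e) \leq \frac{\alpha+1}{\alpha-1} w_r(e)$ for each $e \in \combinedmatching$. Summing over $e \in \combinedmatching$ yields $\sum_{e \in \combinedmatching} \Phi(e) \leq \frac{\alpha+1}{\alpha-1} w_r(\combinedmatching) \leq \frac{\alpha+1}{\alpha-1} w(\combinedmatching)$, the last step because the rounded weight $w_r(e) = \alpha^{l+r} \leq w(e)$ for $e$ on level $l$ (since $w(e) \in [\alpha^{l+r}, \alpha^{l+r+1})$). Chaining the two inequalities gives $\frac{\alpha+1}{\alpha-1} w(\combinedmatching) \geq \sum_{e \in \combinedmatching}\Phi(e) \geq \frac{1}{1+\epsilon} w_r(\optmatching)$, which rearranges to the claimed $(1+\epsilon)\frac{\alpha+1}{\alpha-1} w(\combinedmatching) \geq w_r(\optmatching)$.

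The main obstacle I anticipate is establishing cleanly that $\{\mathcal{R}(e)\}_{e \in \combinedmatching}$ covers every edge that ever appears in some $\matchinglevel{l}$, i.e., that no level-$l$ edge is "orphaned" — removed from its level but not charged to any surviving edge of $\combinedmatching$. This requires carefully tracking the greedy process: an edge $e' \in \matchinglevel{l'}$ is removed only when a strictly-higher-level edge sharing an endpoint is committed to $\combinedmatching$, and such a committed edge is never later removed (the greedy only removes from strictly lower levels). One must also make sure that the "at most $2$ edges per lower level" accounting in $\mathcal{R}(e)$ is consistent with the fact that each level-$l'$ matching is itself a matching, so at most two of its edges can touch the two endpoints of $e$ — this is what makes Lemma~\ref{lem:charge}'s geometric bound applicable. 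Once this bookkeeping is pinned down, the rest is routine arithmetic.
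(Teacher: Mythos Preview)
Your proposal is correct and follows essentially the same route as the paper: partition $\optmatching$ by level and use the $(1+\epsilon)$-approximation per level (exploiting that $w_r$ is constant on a level), show every edge of $\bigcup_l \matchinglevel{l}$ lies in some $\mathcal{R}(e')$ for $e'\in\combinedmatching$, and then sum Lemma~\ref{lem:charge} together with $w_r(e)\le w(e)$. One small over-claim: the sets $\{\mathcal{R}(e)\}_{e\in\combinedmatching}$ need not \emph{partition} $\bigcup_l \matchinglevel{l}$ (an edge $e'\in \matchinglevel{l'}$ can be adjacent to two distinct higher-level edges of $\combinedmatching$, one at each endpoint, and then it lies in both $\mathcal{R}$-sets by the stated definition), but you only need the covering direction for the inequality, so this does no harm.
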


\begin{proof}

Let $\optmatchinglevel{i}$ denote the edges of $\optmatching$ at level $i$.
Since $\matchinglevel{i}$ is a $(1+\epsilon)$ approximate matching at level $i$,
we have:
\begin{align*}
|\optmatchinglevel{i}| \leq& (1+\epsilon) |\matchinglevel{i}|\\
w_r(\optmatchinglevel{i}) \leq& (1+\epsilon) w_r(\matchinglevel{i})
	\qquad \text{Since edges on same level have the same values of $w_r(e)$}\\
w_r(\optmatching) \leq & (1 + \epsilon) \sum_{i} w_r(\matchinglevel{i})
\end{align*}

Consider an edge $e = (u,v) \in M_i$.
If $e \in \combinedmatching$, then $e \in \mathcal{R}(e)$. If $e \notin \combinedmatching$, then 
there exists an edge $e' \in \combinedmatching$ at level $j > i$ such that one of the endpoints of $e'$ is either $u$ or $v$, which means $e$ is in the set $\mathcal{R}(e')$.
Therefore each edge $e$ can be mapped to one or more $\mathcal{R}(e')$,
and we have:
\begin{align*}
\Phi(\combinedmatching) \geq \sum_{i} w_r(\matchinglevel{i})
\end{align*}
Which implies $(1 + \epsilon) \Phi(\combinedmatching) \geq w_r(\optmatching)$.
Summing Lemma \ref{lem:weightopt} over all edges in $\combinedmatching$
then gives:
\begin{align*}
(1 + \epsilon) \frac{\alpha + 1}{\alpha - 1}
w_r(\combinedmatching) \geq w_r(\optmatching)
\end{align*}
And the result follows from the fact that the rounded down
edge weights satisfy $w_r(e) \leq w(e)$.
\end{proof}

Hence, it suffices to find ratio between
$w_r(\optmatching)$ and $w(\optmatching)$.
The analysis in Anand et al.\cite{AnandBGS-Arxiv12} bounded this
ratio over a uniformly random choices of $r$.
They showed that the expected rounded value of the optimum matching,
$\textbf{E}_{r}[w_r(\optmatching)]$ satisfies
$\displaystyle\textbf{E}_{r}[w_r(\optmatching)]
\geq \frac{\alpha-1}{\alpha \ln \alpha} w(\optmatching)$,
which when combined with Lemma~\ref{lem:weightopt} leads
to an expected approximation ratio of about $3+\epsilon$
when $\alpha \approx 5.704$.
Here we show instead that a deterministic and worst-case bound
can be obtained by using $O(1/\epsilon)$ versions of our data structure,
each with a pre-selected value of $r$. 

We have $k= \ln{\alpha} / \ln( 1 + \epsilon)$ copies of our algorithm which work exactly identically but with different 
value of $r$.
For the $j$\textsuperscript{th} copy, $r(j) = \frac{j - 1}{k}$. Consider an edge $e$ such
that $w(e) = \alpha^{i + \delta}$ where $0 < \delta \le 1$. Let $j^*$ is the value such that
$\frac{j^* - 1}{k} \leq \delta < \frac{j^*}{k}$. Then we have:
\begin{align*}
w_{r(j)}(e) =&
\left\{
\begin{array}{lr}
\alpha^{i + \frac{j-1}{k}} & \text{if $j \leq j^*$} \\
\alpha^{i + \frac{j-1}{k} - 1} & \text{$j > j^*$}
\end{array}
\right.
\end{align*}
Informally, an edge $e$ is at level $i$ in j\textsuperscript{th} copy, if 
$j \leq j^*$ otherwise it is at level $i-1$. We want to relate the
weight of maximum matching $\optmatching$ in $G$ to the new weight 
in these $k$ copies. Specifically, we want to get a relation similar to the
relation between $\displaystyle\textbf{E}_{r}[w_r(\optmatching)]$ and
$w(\optmatching)$ mentioned above.  We show that there 
exists a $j$ with the following relation.
\begin{lemma}
\label{lem:deterministic}
There  exists a $j$ such that:
\begin{align*}
w_{r(j)}(\optmatching)
\geq (1-\epsilon)  \frac{\alpha-1}{\alpha \ln \alpha} w(\optmatching)
\end{align*}
\end{lemma}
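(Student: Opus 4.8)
The plan is to mimic the averaging argument of Anand et al.\ but replace the continuous uniform choice of $r$ by the $k$ discrete shifts $r(j) = (j-1)/k$, and then use the pigeonhole/averaging principle to extract one good index $j$. First I would observe that for a single edge $e$ with $w(e) = \alpha^{i+\delta}$ (with $0 < \delta \le 1$), summing the displayed formula for $w_{r(j)}(e)$ over $j = 1, \dots, k$ gives a clean geometric-type sum: the $j^*$ copies where $e$ sits at level $i$ contribute $\sum_{j \le j^*} \alpha^{i + (j-1)/k}$, and the $k - j^*$ copies where it drops to level $i-1$ contribute $\sum_{j > j^*} \alpha^{i + (j-1)/k - 1}$. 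Factoring out $\alpha^i$ and writing $\rho = \alpha^{1/k} = (1+\epsilon)$, this is $\alpha^i \cdot \big( \sum_{j=1}^{j^*} \rho^{j-1} + \alpha^{-1}\sum_{j=j^*+1}^{k} \rho^{j-1}\big)$. I would then show this sum is at least (roughly) $k \cdot \frac{\alpha-1}{\alpha \ln \alpha} \cdot (1-\epsilon) \cdot \alpha^{\delta}$ — i.e., that the \emph{average} over $j$ of $w_{r(j)}(e)$ is at least $(1-\epsilon)\frac{\alpha-1}{\alpha\ln\alpha} w(e)$ — by comparing the discrete sum to the integral $\int_0^1 \alpha^{i} \alpha^{-[t>\delta]} \alpha^{t}\,dt$ that appears in the Anand et al.\ computation, and bounding the discretization error by $(1+\epsilon)$ (equivalently $\rho$) using that each discrete term is within a factor $\rho$ of the corresponding integrand.

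Second, I would lift this per-edge inequality to the matching $\optmatching$. Since $w_{r(j)}(\optmatching) = \sum_{e \in \optmatching} w_{r(j)}(e)$, summing the per-edge averaging bound over $e \in \optmatching$ gives
\begin{align*}
\frac{1}{k}\sum_{j=1}^{k} w_{r(j)}(\optmatching) \ge (1-\epsilon)\frac{\alpha-1}{\alpha \ln \alpha} w(\optmatching).
\end{align*}
By averaging, there must exist at least one index $j$ for which $w_{r(j)}(\optmatching)$ is at least the average, and that $j$ satisfies the claimed inequality. This is the only place linearity of $w_{r(j)}$ over the edges of a fixed matching is used, and it is immediate.

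The main obstacle is the first step: making the discretization error bound tight enough that the loss is only a multiplicative $(1-\epsilon)$ rather than something worse. The subtlety is that each edge has its own $\delta$ (hence its own $j^*$), so the "step" in the integrand $\alpha^{-[t > \delta]}$ falls at a different place for each edge; one has to verify the Riemann-sum-versus-integral comparison holds uniformly in $\delta$. Concretely, I would use a left-endpoint (or right-endpoint) Riemann sum with $k$ equal subintervals of width $1/k$ on $[0,1]$: on each subinterval the integrand $\alpha^{i+t} \alpha^{-[t>\delta]}$ is monotone except across the single breakpoint $t = \delta$, and on the one subinterval containing $\delta$ the function still lies within a factor $\alpha^{1/k}\cdot\alpha = \alpha^{1+1/k}$ of any sample; since that exceptional subinterval contributes only a $1/k$ fraction and $k$ is large ($k = \ln\alpha/\ln(1+\epsilon) = \Theta(\epsilon^{-1})$), its effect on the total is absorbed into the $(1-\epsilon)$ slack for small $\epsilon$. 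The remaining $k-1$ subintervals give the standard bound: left Riemann sum $\ge \alpha^{-1}\!\int$ would be too lossy, so instead I would choose the shift convention so that the sum over-estimates the integral on the increasing parts, giving $\frac{1}{k}\sum_j w_{r(j)}(e) \ge \alpha^{-1/k}\int_0^1 \alpha^{i+t}\alpha^{-[t>\delta]}dt = (1+\epsilon)^{-1}\cdot\frac{\alpha-1}{\alpha\ln\alpha} w(e) \ge (1-\epsilon)\frac{\alpha-1}{\alpha\ln\alpha}w(e)$, where the last step uses $(1+\epsilon)^{-1} \ge 1-\epsilon$. Putting the per-edge bound and the averaging step together yields the lemma.
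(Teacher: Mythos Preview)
Your proposal is correct and follows the same overall architecture as the paper: establish a per-edge lower bound on the average of $w_{r(j)}(e)$ over $j$, sum over $e \in \optmatching$, then pigeonhole. The only real difference is how the per-edge bound is obtained. The paper computes the sum $\sum_j w_{r(j)}(e)/w(e)$ directly as a geometric series: after factoring out $(1+\epsilon)^{-k\delta+j^*-1} \ge (1+\epsilon)^{-1}$, the remaining sum telescopes to $\sum_{l=1}^{k}(1+\epsilon)^{-l+1}$, and multiplying through gives exactly $\frac{\alpha-1}{\alpha\epsilon}$; dividing by $k$ and using $\ln(1+\epsilon)\ge(1-\epsilon)\epsilon$ yields the stated bound. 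You instead interpret the same sum as a left-endpoint Riemann sum for the integral $\int_0^1 \alpha^{i+t}\alpha^{-[t>\delta]}\,dt = \frac{\alpha-1}{\alpha\ln\alpha}\,w(e)$ and bound the discretization error by $\alpha^{1/k}=1+\epsilon$. Both computations produce the identical inequality $\frac{1}{k}\sum_j w_{r(j)}(e) \ge (1+\epsilon)^{-1}\frac{\alpha-1}{\alpha\ln\alpha}\,w(e)$.

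One minor point: your concern about the subinterval containing the breakpoint $t=\delta$ is unnecessary. Because the jump at $\delta$ is \emph{downward} and you are sampling at the left endpoint $(j^*-1)/k \le \delta$, the sample lies on the high branch, so $f(t) \le \alpha^{1/k} f((j^*-1)/k)$ holds on that subinterval exactly as on the others. Hence the clean bound $\frac{1}{k}\sum_j w_{r(j)}(e) \ge \alpha^{-1/k}\int_0^1 f$ holds uniformly in $\delta$ with no special-casing or ``absorbed into the slack'' argument; your final displayed inequality is already correct as stated.
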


\begin{proof}
Summing over all $j$ of $w_{r(j)}(e)$ gives:
\begin{align*}
\frac{\sum_{j = 1}^k  w_{r(j)}(e)}{w(e)}
= & \frac{\sum_{j = 1}^{j^*} \alpha^{i + \frac{j-1}{k}}
	+ \sum_{j = j^{*} + 1}^{k} \alpha^{i + \frac{j-1}{k} - 1}}
	{\alpha^{i + \delta}} \nonumber\\
= & \frac{\sum_{j = 1}^{j^*} \alpha^{\frac{j-1}{k}}
	+ \sum_{j = j^{*} + 1}^{k} \alpha^{\frac{j-1}{k} - 1}}
	{\alpha^{\delta}} \nonumber \\
	= & \alpha^{-\delta + \frac{j^*-1}{k}} \left( \sum_{j = 1}^{j^*} \alpha^{j/k - j^{*}/k}
	+ \sum_{j = j^{*} + 1}^{k} \alpha^{j/k - j^{*}/k- 1} \right) \nonumber \\
= & (1+\epsilon)^{-k \delta + j^{*}-1} \left( \sum_{j = 1}^{j^*} (1 + \epsilon)^{j - j^{*}}
	+ \sum_{j = j^{*} + 1}^{k} (1 + \epsilon)^{j - j^{*}- k} \right)
\end{align*}
Since $j^*$ was chosen such that $\frac{j^*}{k} > \delta$,
$-k \delta + j^{*}-1 \geq -k (\frac{j^{*}}{k}) + j^*-1 = -1$
and
$(1+\epsilon)^{-k \delta + j^{*}-1} \geq (1+\epsilon)^{-1}$.
Substituting this gives:
\begin{align*}
\frac{\sum_{j = 1}^k  w_{r(j)}(e)}{w(e)} \geq &  (1+\epsilon)^{-1} \left( \sum_{j = 1}^{j^*} (1 + \epsilon)^{j - j^{*}}
	+ \sum_{j = j^{*} + 1}^{k} (1 + \epsilon)^{j - j^{*}- k} \right)
\end{align*}
The two summations is a rearranged version of a geometric sum.
It can be rearranged by substituting $l= j^*-j+1$ and $l=j^*-j+k+1$
in the first and second summation respectively to obtain:
\begin{align*}
\frac{\sum_{j = 1}^k  w_{r(j)}(e)}{w(e)} = &  (1+\epsilon)^{-1} \left(\sum_{l = 1}^{j^*} (1 + \epsilon)^{-l+1} +
       \sum_{l = j^* + 1}^{k} (1 + \epsilon)^{-l+1} \right)  \nonumber\\
= & \sum_{l = 1}^{k} (1 + \epsilon)^{-l} \nonumber\\
= &   \frac{1 - (1 + \epsilon)^{- k }}{\epsilon} \nonumber\\
= &   \frac{(1 - 1/\alpha)}{\epsilon} \nonumber \\
= &   \frac{\alpha - 1}{\alpha \epsilon}
\end{align*}

Summing this over all edges in $\optmatching$ gives:
\begin{align*}
\sum_{e \in \optmatching} \sum_{j} w_{r(j)}(e) \geq & \sum_{e \in \optmatching} \frac{\alpha - 1}{\alpha \epsilon} w(e) \nonumber\\
\sum_{j} w_{r(j)}(\optmatching) \geq & \frac{\alpha - 1}{\alpha \epsilon} w(\optmatching) \nonumber\\
\end{align*}

By an averaging argument we get:
\begin{align*}
\max_{j} \{ w_{r(j)}(\optmatching) \}
\geq & \frac{1}{k} \sum_{j} w_{r(j)}\optmatching \nonumber\\
\geq &  \frac{\alpha - 1}{\alpha \epsilon k} w(\optmatching)
\end{align*}
Note that $k = \ln{\alpha} / \ln( 1 + \epsilon)$.
Here we make use of the following known fact about the behavior
of the log function around $1$:

\begin{fact}
\label{fact:log}
For $\epsilon < 1$, if $0 \leq x \leq \epsilon$, then $\ln(1+x) \geq (1-\epsilon)x$.
\end{fact}
Applying it with $x = \epsilon$ gives:
\begin{align*}
\max_{j} \{ w_{r(j)}(\optmatching) \}
= &  \frac{(\alpha - 1)\ln(1 + \epsilon)}
	{\alpha \epsilon \ln{\alpha}} w(\optmatching)\\
\geq &   \frac{(\alpha - 1) (1 - \epsilon) \epsilon}
	{\alpha \epsilon \ln{\alpha}} w(\optmatching)
		\qquad \text{By Fact~\ref{fact:log}}\\
= & (1 - \epsilon)\frac{\alpha - 1}
	{\alpha \ln{\alpha}} w(\optmatching)
\end{align*}


\end{proof}

Combining Lemmas~\ref{lem:weightopt}~and~\ref{lem:deterministic}
gives the following theorem.

\begin{theorem}
\label{thm:threewmw}
For any $\epsilon < 1/2$, there exists a fully dynamic algorithm that maintains a
$(3 + \epsilon)$-MWM  for any graph on $n$ in worst
case $O(\sqrt{m} \log \maxweight \epsilon^{-3})$ time per update.
\end{theorem}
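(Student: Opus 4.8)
The plan is to instantiate the framework of Section~\ref{subsec:three} with $k = \ln\alpha/\ln(1+\epsilon)$ parallel copies of our data structure, each using a different \emph{pre-selected} shift, where $\alpha$ is the constant that maximizes $\frac{(\alpha-1)^2}{\alpha(\alpha+1)\ln\alpha}$ (numerically $\alpha \approx 5.704$). Copy $j$ uses shift $r(j) = (j-1)/k$: it partitions the $O(\log\maxweight)$ weight classes into levels according to the rounding rule above, and on the subgraph of edges sitting at each level it maintains a $(1+\epsilon)$-MCM using the worst-case data structure of Theorem~\ref{thm:maxcardinality}. Feeding these per-level matchings into the combining procedure of Figure~\ref{fig:combine} yields a matching $\combinedmatchingcopy{j}$ for each copy, and the algorithm reports the copy achieving $\max_j w(\combinedmatchingcopy{j})$.

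For correctness, fix any copy $j$. Applying Lemma~\ref{lem:weightopt} to that copy (with rounded weights $w_{r(j)}$) gives $w(\combinedmatchingcopy{j}) \ge \frac{\alpha-1}{(1+\epsilon)(\alpha+1)}\, w_{r(j)}(\optmatching)$. By Lemma~\ref{lem:deterministic} there is at least one copy $j^\star$ with $w_{r(j^\star)}(\optmatching) \ge (1-\epsilon)\frac{\alpha-1}{\alpha\ln\alpha}\, w(\optmatching)$; combining the two inequalities at $j = j^\star$ yields $w(\combinedmatchingcopy{j^\star}) \ge \frac{1-\epsilon}{1+\epsilon}\cdot\frac{(\alpha-1)^2}{\alpha(\alpha+1)\ln\alpha}\, w(\optmatching)$. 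Since the reported matching has weight at least $w(\combinedmatchingcopy{j^\star})$, plugging in $\alpha \approx 5.704$ makes the leading constant $\tfrac13(1 - O(\epsilon))$, so the reported matching is a $(3 + c\epsilon)$-MWM for an absolute constant $c$; re-running with $\epsilon \leftarrow \epsilon/c$ (as remarked before Section~\ref{subsec:three}) gives a genuine $(3+\epsilon)$-MWM. Crucially, we never need to identify $j^\star$: taking the maximum over copies is exactly what converts the existential guarantee of Lemma~\ref{lem:deterministic} into an algorithm.

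For the running time, a single edge insertion or deletion changes the edge set of exactly one level inside each of the $k$ copies, hence triggers one update in each of $k$ instances of the per-level MCM structure, followed by the re-combination within that copy. By Theorem~\ref{thm:maxcardinality} each per-level MCM update costs $O(\sqrt m\,\epsilon^{-2})$ in the worst case, and by Lemma~\ref{lem:lemmacmobine} maintaining $\combinedmatchingcopy{j}$ for one copy costs $O(\sqrt m\,\epsilon^{-2}\log\maxweight)$ per update. With $k = \ln\alpha/\ln(1+\epsilon) = O(\epsilon^{-1})$ copies (for the fixed constant $\alpha$), the total is $O(\sqrt m\,\epsilon^{-3}\log\maxweight)$, and maintaining $\arg\max_j w(\combinedmatchingcopy{j})$ adds only $O(k)$ per update, which is absorbed. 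If only an amortized bound is wanted, substituting the amortized variant of Theorem~\ref{thm:maxcardinality} gives the same expression amortized.

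I expect the only real obstacles to be (i) the observation just made, that Lemma~\ref{lem:deterministic} is existential and must be turned into an algorithm via the $\max$-over-copies device, and (ii) the bookkeeping to confirm that the two multiplicative $(1\pm\epsilon)$ losses, together with the choice of $\alpha$, compose to an approximation ratio that is $3 + O(\epsilon)$ and hence rescalable to $3+\epsilon$. Both the approximation and running-time analyses then reduce to quoting Lemmas~\ref{lem:weightopt}, \ref{lem:deterministic}, \ref{lem:lemmacmobine} and Theorem~\ref{thm:maxcardinality} essentially verbatim.
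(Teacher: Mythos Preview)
Your proposal is correct and follows essentially the same argument as the paper: maintain $k=O(\epsilon^{-1})$ shifted copies, apply Lemma~\ref{lem:weightopt} per copy and Lemma~\ref{lem:deterministic} to obtain a good copy $j^\star$, then report the max over copies; the running-time tally via Lemma~\ref{lem:lemmacmobine} and Theorem~\ref{thm:maxcardinality} is identical. Your handling of the existential-to-algorithmic step (reporting $\max_j w(\combinedmatchingcopy{j})$ so that one never needs to identify $j^\star$) is in fact stated more cleanly than in the paper.
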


\begin{proof}
Consider maintaining $k$ copies of our data structure 
and picking the maximum
weighted matching among these copies as the current
best matching.

Using Lemma~\ref{lem:weightopt}, we get:
\begin{align*}
\forall j \qquad (1 + \epsilon)\frac{\alpha + 1}{\alpha - 1}
	w(\combinedmatching(j)) \geq & w_{r(j)}(\optmatching) \\
\end{align*}


Using Lemma~\ref{lem:deterministic}, there exists a $j'=\arg\max_{j} \{ w(\combinedmatching(j))\}$ such that 
\begin{align*}
w_{r(j')}(\optmatching)
\geq (1-\epsilon) \frac{\alpha-1}{\alpha \ln \alpha} w(\optmatching)
\end{align*}
Combining the above two equations we get:
\begin{align*}
 (1 + \epsilon)\frac{\alpha + 1}{\alpha - 1}
	w(\combinedmatching(j')) \geq & (1-\epsilon) \frac{\alpha-1}{\alpha \ln \alpha} w(\optmatching) \\
\left(\frac{1+\epsilon}{1-\epsilon}\right) \frac{(\alpha + 1) \alpha \ln\alpha}{(\alpha - 1)^2}
	 w(\combinedmatching(j')) \geq & w(\optmatching)
\end{align*}

Where one can check that $\frac{1+\epsilon}{1-\epsilon} \leq (1 + 4\epsilon)$ when $\epsilon < 1/2$.
By a suitable choice of $\epsilon$, this factor of $1 + 4 \epsilon$ can be turned
into $1 + \epsilon'$.
This implies that the approximation ratio obtained by our algorithm is 
$\frac{(1+\epsilon) (\alpha + 1) \alpha \ln \alpha}{(\alpha-1)^2}$.
This term achieves its minimum value of $\approx 3+3\epsilon$ when
$\alpha \approx 5.704$. Again this approximation ratio can be turned into $3+\epsilon'$
by a suitable choice of $\epsilon$.

For the update time, note that since $\alpha$ is a constant, $k = O( 1/ \log( 1+ \epsilon))= O(1 / \epsilon)$ copies of
the structure are needed. In each such copy, a matching can be maintained in $O( \sqrt{m} \log \maxweight \epsilon^{-2})$
update time. So matching in all the copies can be maintained in $O( \sqrt{m} \log \maxweight \epsilon^{-3})$ time per update.
\end{proof}

\ignore{The relevant lemma based on works by Eppstein et al.
\cite{EppsteinLMS12} is:
\begin{lemma}
\label{lem:randomrounding}
For an edge $e$, $E_r[w_r(e)/ w(e)] = \frac{\alpha-1}{\alpha \log \alpha}$.
Also $w(e) \ge w_r(e)$.
\end{lemma}

\begin{proof}
Let $w(e) = \alpha^{i+\delta}$ where $i$ is an integer and $0 < \delta \le 1$. 

So $   w_r(e) = \begin{cases}
		\alpha^{r+i}, & \text{if} \ \ r \le \delta ,\\
        \alpha^{r+i-1}, & \text{if} \ \ r > \delta 
        \end{cases}
 $ 
 
The expected value can be calculated as:\\
\begin{tabular}{lllll}
$E_r[w_r(e)/w(e)]$ & = & $\displaystyle\int_{0}^{\delta} \frac{\alpha^{r+i}}{\alpha^{i+\delta}} dr + \int^{1}_{\delta} \frac{\alpha^{r+i-1}}{\alpha^{i+\delta}} dr$\\
 			& = & $\displaystyle\frac{1}{\alpha^{\delta} \ln \alpha} \Big(\big(\alpha^{r}\big)_{0}^{\delta} + \big( \alpha^{r-1} \big)_{\delta}^{1}\Big)$\\
 			& = & $\displaystyle\frac{1}{\alpha^{\delta} \ln \alpha} \Big(\big(\alpha^{\delta}-1\big) + \big( 1-\alpha^{\delta-1} \big)\Big) $\\\\
 			& = & $\displaystyle\frac{\alpha-1}{\alpha \ln \alpha}$
\end{tabular}

For the second statement of the lemma, if $e$ is at level $i$
then $w(e) \in [ \alpha^{i+r}, \alpha^{i+r+1})$ and $w_r(e) = \alpha^{i+r}$.
So $w(e) \ge w_r(e)$.
\end{proof}

Applying Lemma \ref{lem:randomrounding} over all edges of $\optmatching$
and summing them using linearity of expectation gives
$E_{r}[w_r(\optmatching)]
\geq \frac{\alpha-1}{\alpha \ln \alpha} w(\optmatching)$.
}

\subsection{$(1 + \epsilon)$-MWMs Using Approximate MWMs}
\label{subsec:arbitrary}
\subsubsection*{Overview}
In this section, we present an algorithm that maintains a
$(1 + \epsilon)$-MWM using a more gradual bucketing scheme.
We start by observing the definition of $\mathcal{R}(e)$ for an 
edge $e(u,v)$ in $\combinedmatching$ from the previous section.
Informally, $\mathcal{R}(e)$ contains edge $e$ and all those edges
adjacent to $u$ and $v$ from lower levels that were removed when
$(u,v)$ was added to $\combinedmatching$. A closer look at 
our algorithm reveals that the approximation ratio depends on the
ratio of weight of $e$ and the combined  weight of 
edges in  $\mathcal{R}(e)$. This ratio can be reduced if the edges at
lower level have significantly less weight than the weight of edge $e$.
To achieve this, we will artificially create levels such that 
the ratio of weight between two consecutive level is significant. 
For this, we will drop some edges from the graph to create 
a {\em gap} between two consecutive levels.
In order to account for the weight of these dropped edges, we in turn need to
keep several copies of our data structure with different edges left out 
in the other copies.

We then proceed with the same algorithm as mentioned in Section~\ref{sec:weightedscaling}
with one main difference. Instead of maintaining $(1+\epsilon)$-MCM at each 
level, we maintain $(1+\epsilon)$-MWM at each level using the Theorem~\ref{thm:weightedsimple}.
Note that this theorem has a dependence of $\maxweight$ in its running time. We will show that 
each level can be formed in such a way that $\maxweight$ can be bounded by $O( \epsilon^{-O(\epsilon^{-1})} )$.
So the running time for maintaining $(1+\epsilon)$ approximate MWM at each level 
will have exponential dependence
on $(1/\epsilon)$.

Thereafter, we combine the matching across the various level using the same procedure 
as mentioned in  Section~\ref{sec:weightedscaling}. We will show that there exists a 
copy of our data structure such that the weight of the matching maintained by our
algorithm in that copy is a good approximation of maximum weighted matching 
in the entire graph.
\subsubsection*{Algorithm}
Once again we partition  the edges
by weights geometrically:
an edge $e$ is in bucket $b$ if $w(e)$ is in the range
$[ \epsilon^{-b}, \epsilon^{-(b+1)} )$.
However, our levels no longer corresponds to individual buckets, but instead
to a set of $C - 1$ continuous buckets for value of $C$ to be specified.
We will also remove some of these buckets, and the choices of buckets
to remove leads us to run several copies of our data structure
simultaneously.

We will run $C = \lceil \epsilon^{-1} \rceil$ copies of our algorithm,
where in the $c$\textsuperscript{th} copy, we remove all buckets $i$ such that
$i \mod C= c$.
This leads to a set of graphs $G^{0} \ldots G^{C - 1}$.
Removing the buckets creates natural partitions of the
remaining edges, which gives our levels.
For a copy $c$, we will place buckets with
$b = [ lC + c + 1 \ldots (l+1) C + c -1]$ into level $l$.
Note that the ratio of maximum to minimum edge weight
in each level is bounded by $\epsilon^{- (C - 1)}( = O( \epsilon^{-O(\epsilon^{-1}} )$).
Therefore, the algorithm given in Theorem \ref{thm:weightedsimple}
allows us to maintain an $(1+\epsilon)$-MWM in 
$O(  \epsilon^{-O(\epsilon^{-1})} \sqrt{m} \epsilon^{-2} \log(\epsilon^{-1}))
=O(  \epsilon^{-2-O(\epsilon^{-1})} \sqrt{m} \log(\epsilon^{-1}))$ time at each level.
These matchings can in turn be combined together in the same
way as in Section \ref{sec:weightedscaling}.
An illustration of levelling scheme used by our algorithm is shown in
Figure~\ref{fig:bucketlevelexample}.

 \begin{figure}
 \begin{tikzpicture}
\tikzstyle matched=[red,very thick]
\tikzstyle vertical=[dashed]
\tikzstyle inH=[very thick,blue]
\tikzstyle inM=[very thick]
\draw (0, 0) -- (10, 0);
\draw (0, .5) -- (10, .5);
\draw (0, 1) -- (10, 1 );
\draw (0, 1.5) -- (10, 1.5);
\draw (0, 2) -- (10, 2);
\draw (0, 2.5) -- (10, 2.5);
\draw (0, 3) -- (10, 3);
\draw (0, 3.5) -- (10, 3.5);
\draw (0, 4) -- (10, 4);
\draw (0, 4.5) -- (10, 4.5);
\draw (0, 5) -- (10, 5);
\draw (0,5.5) -- (10, 5.5);

\draw[dotted,thick] (3,5.7) -- (3,7);
\draw[dotted,thick] (6,5.7) -- (6,7);

\draw (11, -0.5) node {Bucket};

\draw (11, 0.25) node {0};
\draw (11, 0.75) node {1};
\draw (11, 1.25) node {2};
\draw (11, 1.75) node {3};
\draw (11, 2.25) node {4};
\draw (11, 2.75) node {5};
\draw (11, 3.25) node {6};
\draw (11, 3.75) node {7};
\draw (11, 4.25) node {8};
\draw (11, 4.75) node {9};
\draw (11, 5.25) node {10};
\draw (11, 5.75) node {11};


\draw [fill=gray!40] (1,.5) rectangle (3,1.5) ;
\draw (2, 1) node {Level 0};
\draw [fill=gray!40] (1,2) rectangle (3,3) ;
\draw (2, 2.5) node {Level 1};
\draw [fill=gray!40] (1,3.5) rectangle (3,4.5) ;
\draw (2, 4) node {Level 2};
\draw [fill=gray!40] (1,5) rectangle (3,5.5) ;
\draw (1,5.5) -- (1,5.7);
\draw (3,5.5) -- (3,5.7);

\draw [fill=gray!20] (7,0) rectangle (9,1) ;
\draw (8, 0.5) node {Level 0};
\draw [fill=gray!20] (7,1.5) rectangle (9,2.5) ;
\draw (8, 2) node {Level 1};
\draw [fill=gray!20] (7,3) rectangle (9,4) ;
\draw (8, 3.5) node {Level 2};
\draw [fill=gray!20] (7,4.5) rectangle (9,5.5) ;
\draw (8, 5) node {Level 3};


\draw [fill=gray!60] (4,0) rectangle (6,.5) ;
\draw (5, .25) node {Level 0};
\draw [fill=gray!60] (4,1) rectangle (6,2) ;
\draw (5, 1.5) node {Level 1};
\draw [fill=gray!60] (4,2.5) rectangle (6,3.5) ;
\draw (5, 3) node {Level 2};
\draw [fill=gray!60] (4,4) rectangle (6,5) ;
\draw (5, 4.5) node {Level 3};

\draw (-.5,-.5) node {Copy};
\draw (2,-.5) node {$c=0$};
\draw (5,-.5) node {$c=1$};
\draw (8,-.5) node {$c=2$};

\end{tikzpicture}
\caption{Bucketing and level scheme where $C = 3$}
\label{fig:bucketlevelexample}
\end{figure}

We start by analyzing the guarantees of our algorithm on the c\textsuperscript{th} copy.
Specifically, the approximation ratio of the combined matching
$\combinedmatchingcopy{c}$ 
w.r.t. the maximum matching $\optmatchingcopy{c}$ in this copy .
Let $\matchinglevelcopy{l}{c}$,
and  $\optmatchinglevelcopy{l}{c}$ be the edges of  $\combinedmatchingcopy{c}$
and $\optmatchingcopy{c}$ at level $l$ respectively.
Once again, for an edge $e=(u,v)$ in $\matchinglevelcopy{l}{c}$
we define $\mathcal{R}^{c}(e)$ as:
 \begin{align*}
\mathcal{R}^{c}(e) = \{e\} \cup \{ (x,y) | (x,y) \in \matchinglevelcopy{l'}{c}
	\text{ where }l' < l	\text{, and  }\{x, y\} \cap \{u, v\} \neq \emptyset \}
\end{align*}

For an edge $e \in \combinedmatchingcopy{c}$, let $\Phi^{c}(e)$ denote
the total rounded weights of edges in $\mathcal{R}^{c}(e)$, i.e.,
$\Phi^{c}(e) = w(\mathcal{R}^{c}(e))$.
We can show that $\Phi^{c}(e)$ is related to $w(e)$ by the following inequality:

\begin{lemma}
\label{lem:extendedcharge}
For any edge $e$ in the combined matching
$\combinedmatchingcopy{c}$, we have:
\begin{align*}
\Phi^{c}(e) \leq \left( 1 + 3\epsilon \right)w(e)
\end{align*}
\end{lemma}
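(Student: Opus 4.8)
The plan is to mimic the proof of Lemma~\ref{lem:charge}, but now tracking \emph{true} weights rather than rounded weights, and exploiting the large multiplicative gap between consecutive levels that the new bucketing scheme creates. Fix an edge $e=(u,v)\in\combinedmatchingcopy{c}$ sitting at some level $l$. By definition $\Phi^{c}(e)=w(\mathcal R^c(e))$ decomposes as $w(e)$ plus the weights of at most two edges from each lower level $l'<l$. The key quantitative input is that the edges placed in level $l$ all have weight in the range $[\epsilon^{-(lC+c+1)},\epsilon^{-((l+1)C+c)})$, so the maximum possible weight of an edge at level $l'$ is at most $\epsilon^{-((l'+1)C+c)}$, whereas $w(e)\ge \epsilon^{-(lC+c+1)}$. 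Hence an edge at level $l'=l-1$ has weight at most $\epsilon^{-(lC+c)}=\epsilon\cdot\epsilon^{-(lC+c+1)}\le \epsilon\, w(e)$, an edge at level $l-2$ has weight at most $\epsilon^{C+1}\,w(e)$, and in general an edge at level $l-i$ (for $i\ge 1$) has weight at most $\epsilon^{(i-1)C+1}\,w(e)$.

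First I would write $\Phi^{c}(e)\le w(e)+\sum_{i\ge 1}2\cdot(\text{max weight at level }l-i)$ and substitute the bound above, giving
\begin{align*}
\Phi^{c}(e)\le w(e)+2w(e)\sum_{i\ge 1}\epsilon^{(i-1)C+1}
= w(e)\Bigl(1+2\epsilon\sum_{i\ge 0}\epsilon^{iC}\Bigr)
= w(e)\Bigl(1+\frac{2\epsilon}{1-\epsilon^{C}}\Bigr).
\end{align*}
Then I would bound $\frac{2\epsilon}{1-\epsilon^{C}}\le 3\epsilon$: since $\epsilon<1/2$ and $C\ge 2$ we have $\epsilon^{C}\le 1/4$, so $\frac{1}{1-\epsilon^C}\le \frac{4}{3}$, and $\frac{2\epsilon}{1-\epsilon^C}\le \frac{8}{3}\epsilon\le 3\epsilon$. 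This yields $\Phi^{c}(e)\le(1+3\epsilon)w(e)$, as claimed.

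The one subtlety I expect to be the main obstacle is getting the exponent bookkeeping exactly right: one must be careful that the level-$l$ buckets in copy $c$ are $b\in[lC+c+1,\dots,(l+1)C+c-1]$, so the top weight at level $l'$ is strictly below $\epsilon^{-((l'+1)C+c)}$ while the minimum weight of $e$ at level $l$ is at least $\epsilon^{-(lC+c+1)}$, and the ratio of the level-$(l-i)$ ceiling to the level-$l$ floor is $\epsilon^{(lC+c+1)-((l-i+1)C+c)}=\epsilon^{(i-1)C+1}$. A second minor point is that the two-edges-per-lower-level claim is inherited verbatim from the discussion of $\mathcal R(e)$ in Section~\ref{subsec:three} (an edge in $\combinedmatching$ knocks out at most two edges per lower level, one incident to each of its endpoints), so I would simply cite that observation rather than re-prove it. Everything else is a geometric series, so beyond the indexing there is no real difficulty; the clean separation of levels by a factor of at least $\epsilon^{-C}$ is exactly what was engineered for this purpose.
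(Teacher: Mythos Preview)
Your proof is correct and follows essentially the same approach as the paper: both bound the maximum weight at level $l'$ by $\epsilon^{-((l'+1)C+c)}$ and the minimum weight of $e$ by $\epsilon^{-(lC+c+1)}$, sum the resulting geometric series to get $w(e)\bigl(1+\frac{2\epsilon}{1-\epsilon^C}\bigr)$, and then use $\epsilon<1/2$ to conclude $\Phi^c(e)\le(1+3\epsilon)w(e)$. Your exponent bookkeeping is accurate, and your justification of the final inequality (via $C\ge 2$, hence $\epsilon^C\le 1/4$) is in fact more explicit than the paper's.
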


\begin{proof}
Assume that $e$ is on level $l$.
Since there are at most $1$ edge on level $l$ assigned to $e$
($e$ itself) and $2$ edges per level $l' < l$ assigned to $e$, we have:
\begin{align*}
\Phi^{c}(e) =&  \sum_{e' \in \mathcal{R}^{c}(e)} w(e')\\
 	   = &  w(e) + \sum_{l' < l} \sum_{e'\in \matchinglevelcopy{l'}{c} \cap
		\mathcal{R}^{c}(e)} w(e')
\end{align*}
Since an edge on level $l'$ is in bucket $[l'C + c + 1 \ldots (l'+1) C + c -1]$
and an edge in bucket $b$ has weight at most $(1 / \epsilon)^{b + 1}$,
the weight of an edge at level $l'$ is $\le \epsilon^{-c-(l'+1)C}$.
\begin{align*}
\Phi^{c}(e)
      \leq & w(e)+ \sum_{l'<l} 2 \epsilon^{-c - (l' + 1)C}
\end{align*}
Which can in turn be bounded relative to $w(e)$. Since $e$ is in level $l$
and an edge in bucket $b$ has weight at least $(1 / \epsilon)^{b}$, 
$w(e) \ge \epsilon^{-lC-c-1}$
\begin{align*}
\Phi^{c}(e)
      \leq & w(e)+ \sum_{l'<l} 2 \epsilon^{(l - l' - 1)C + 1} w(e)\\
      \leq & w(e)+ \frac{2 \epsilon}{1 - \epsilon^C} w(e)\\
      \leq & w(e) (1 + 3\epsilon)\qquad\text{assuming that $\epsilon < 1/2$}
\end{align*}
\end{proof}


We now show the relation between the combined matching 
$\combinedmatchingcopy{c}$ and $\optmatchingcopy{c}$.

\begin{lemma}
\label{lem:extendedweightopt}
\begin{align*}
(1 + 7 \epsilon ) w(\combinedmatchingcopy{c}) \geq w(\optmatchingcopy{c})
\end{align*}
\end{lemma}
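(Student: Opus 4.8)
The plan is to follow the structure of the proof of Lemma~\ref{lem:weightopt}, but carried out with the true edge weights $w(\cdot)$ in place of the rounded weights $w_r(\cdot)$, substituting Lemma~\ref{lem:extendedcharge} for Lemma~\ref{lem:charge} and using the fact that the matching maintained at each level is a $(1+\epsilon)$-MWM (via Theorem~\ref{thm:weightedsimple}) rather than a $(1+\epsilon)$-MCM. Concretely, I would write $\optmatchinglevelcopy{l}{c}$ for the edges of $\optmatchingcopy{c}$ that lie at level $l$ of copy $c$. Since $\matchinglevelcopy{l}{c}$ is a $(1+\epsilon)$-MWM of the level-$l$ subgraph and $\optmatchinglevelcopy{l}{c}$ is a feasible matching in that subgraph, $w(\optmatchinglevelcopy{l}{c}) \le (1+\epsilon)\, w(\matchinglevelcopy{l}{c})$; summing over all levels gives $w(\optmatchingcopy{c}) \le (1+\epsilon) \sum_{l} w(\matchinglevelcopy{l}{c})$.

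The second step is to show that $\sum_{e \in \combinedmatchingcopy{c}} \Phi^{c}(e) \ge \sum_{l} w(\matchinglevelcopy{l}{c})$, i.e., that the sets $\mathcal{R}^{c}(e)$, $e \in \combinedmatchingcopy{c}$, collectively cover every edge of every level matching $\matchinglevelcopy{l}{c}$. This is the combinatorial heart of the argument and is identical in spirit to the corresponding step in Lemma~\ref{lem:weightopt}: take any edge $f \in \matchinglevelcopy{l'}{c}$; either $f$ survives the greedy merge of Figure~\ref{fig:combine}, in which case $f \in \combinedmatchingcopy{c}$ and $f \in \mathcal{R}^{c}(f)$, or $f$ is deleted at some point, and since the merge processes levels top-down and only ever deletes from strictly lower levels, the edge that deleted $f$ is some $e' \in \combinedmatchingcopy{c}$ at a level $> l'$ sharing an endpoint with $f$, whence $f \in \mathcal{R}^{c}(e')$. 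Because all weights are positive, double-counting an edge in several $\mathcal{R}^{c}(e')$ only helps, so $\sum_{e \in \combinedmatchingcopy{c}} w(\mathcal{R}^{c}(e)) \ge \sum_{l} w(\matchinglevelcopy{l}{c})$.

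Finally I would chain the pieces: combining the first step, the covering bound, and Lemma~\ref{lem:extendedcharge} $(\Phi^{c}(e) \le (1+3\epsilon) w(e))$ yields
\[
w(\optmatchingcopy{c}) \le (1+\epsilon) \sum_{l} w(\matchinglevelcopy{l}{c}) \le (1+\epsilon) \sum_{e \in \combinedmatchingcopy{c}} \Phi^{c}(e) \le (1+\epsilon)(1+3\epsilon) \sum_{e \in \combinedmatchingcopy{c}} w(e) = (1+\epsilon)(1+3\epsilon)\, w(\combinedmatchingcopy{c}),
\]
and since $(1+\epsilon)(1+3\epsilon) = 1 + 4\epsilon + 3\epsilon^2 \le 1 + 7\epsilon$ for $\epsilon < 1$, this is exactly the claimed bound.

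I do not anticipate a genuine obstacle; the one place that needs care is the covering step, specifically verifying that no edge of any level matching can vanish except by being removed by a strictly-higher-level edge that ends up in $\combinedmatchingcopy{c}$ — this is guaranteed by the top-down order of Figure~\ref{fig:combine} together with the fact that edges are never deleted from $\combinedmatching$ once added. Everything else is bookkeeping, and unlike Lemma~\ref{lem:weightopt} there is no concluding $w_r(e) \le w(e)$ step, since we work with true weights throughout.
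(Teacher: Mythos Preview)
Your proposal is correct and mirrors the paper's own proof essentially step for step: the paper also sums the per-level inequality $w(\optmatchinglevelcopy{l}{c}) \le (1+\epsilon)\,w(\matchinglevelcopy{l}{c})$, establishes the same covering bound $\Phi^{c}(\combinedmatchingcopy{c}) \ge \sum_{l} w(\matchinglevelcopy{l}{c})$ via the identical case split on whether an edge of $\matchinglevelcopy{l}{c}$ survives into $\combinedmatchingcopy{c}$, applies Lemma~\ref{lem:extendedcharge}, and finishes with the same arithmetic $(1+\epsilon)(1+3\epsilon) \le 1+7\epsilon$. Your remark that no $w_r(e)\le w(e)$ step is needed here (unlike in Lemma~\ref{lem:weightopt}) is exactly the distinction the paper makes implicitly by working with true weights throughout.
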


\begin{proof}

Since $w(\matchinglevelcopy{l}{c})$ is a $(1+\epsilon)$ approximate
maximum weighted matching on level $l$, we have:
\begin{align*}
w(\optmatchinglevelcopy{l}{c})
\leq & (1+\epsilon) w(\matchinglevelcopy{l}{c})\\
w(\optmatchingcopy{c})
\leq & (1 + \epsilon) \sum_{l} w(\matchinglevelcopy{l}{c})
\end{align*}

Consider an edge $e = (u,v) \in \matchinglevelcopy{l}{c}$.
If $e \in \combinedmatchingcopy{c}$, then $e \in \mathcal{R}^{c}(e)$.
If $e \notin \combinedmatchingcopy{c}$, then there exists an edge $e'$
at level $l' > l$ such that one of the endpoints of $e'$ is either $u$ or $v$,
which means $e$ is in the set $\mathcal{R}^{c}(e')$.
Therefore each edge $e$ can be mapped to one more
$\mathcal{R}^{c}(e')$, and we have:
\begin{align*}
\Phi^{c}(\combinedmatchingcopy{c}) \geq \sum_{l} w(\matchinglevelcopy{l}{c})
\end{align*}
Which implies $(1 + \epsilon) \Phi^{c}(\combinedmatchingcopy{c})
\geq w(\optmatchingcopy{c})$.
Summing Lemma~\ref{lem:extendedcharge} over all edges in
$\combinedmatchingcopy{c}$ then gives:
\begin{align*}
(1 + \epsilon)(1 + 3\epsilon) w(\combinedmatchingcopy{c})
\geq w(\optmatchingcopy{c})
\end{align*}
And the bound follows from $(1 + \epsilon)(1 + 3\epsilon)
= 1 + 4 \epsilon + 3 \epsilon^2 \leq 1 + 7 \epsilon$
when $\epsilon < 1$.
\end{proof}

We now find a relation between $w(\optmatchingcopy{c})$ and
$w(\optmatching)$.
We show there is at least one copy whose maximum matching
has weight at least $(1 - 1/C)$ of $w(\optmatching)$.

\begin{lemma}
\label{lem:goodcopy}
At any update step, if the maximum weight matching in the current graph is
$w(\optmatching)$, there exist a copy $c$ such that
$w(\optmatchingcopy{c}) \geq (1 - 1/C) w(\optmatching)$.
\end{lemma}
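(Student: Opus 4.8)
The plan is to use a simple averaging argument over the $C$ copies, exploiting the fact that the bucket-removal rule partitions the edges of the optimum matching evenly across the copies. First I would fix a maximum weight matching $\optmatching$ of the current graph $G$, and for each copy $c \in \{0,\ldots,C-1\}$ let $\optmatching^{(c)}_{\mathrm{rem}}$ denote the set of edges of $\optmatching$ lying in buckets $i$ with $i \bmod C = c$ --- precisely the edges of $\optmatching$ deleted when forming $G^{c}$. The crucial observation is that every edge belongs to exactly one bucket and each bucket index lies in exactly one residue class modulo $C$; hence the sets $\{\optmatching^{(c)}_{\mathrm{rem}}\}_{c=0}^{C-1}$ partition $\optmatching$, so $\sum_{c=0}^{C-1} w(\optmatching^{(c)}_{\mathrm{rem}}) = w(\optmatching)$.

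Next I would observe that $\optmatching \setminus \optmatching^{(c)}_{\mathrm{rem}}$ consists entirely of edges that survive in $G^{c}$, hence it is a feasible matching in $G^{c}$, which gives $w(\optmatchingcopy{c}) \geq w(\optmatching) - w(\optmatching^{(c)}_{\mathrm{rem}})$. Summing this inequality over all $c$ and invoking the partition identity yields $\sum_{c=0}^{C-1} w(\optmatchingcopy{c}) \geq C\, w(\optmatching) - w(\optmatching) = (C-1)\,w(\optmatching)$. By the pigeonhole principle there is a copy $c^{*}$ with $w(\optmatchingcopy{c^{*}}) \geq \tfrac{C-1}{C} w(\optmatching) = (1 - 1/C)\, w(\optmatching)$, which is the claim.

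There is no real obstacle here; the argument is entirely elementary once the partition structure is noticed. The only point needing a sentence of care is that deleting a set of edges from a graph decreases the maximum matching weight by at most the weight those edges contribute to a fixed optimum, which is immediate since $\optmatching$ minus the deleted edges is still a valid (if suboptimal) matching in the smaller graph. I would also remark in passing that this is exactly where the choice $C = \lceil \epsilon^{-1} \rceil$ enters the final approximation bound, since it gives $1 - 1/C \geq 1 - \epsilon$.
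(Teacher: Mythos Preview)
Your proof is correct and follows essentially the same route as the paper: define the removed edges of $\optmatching$ in each copy, use that these sets partition $\optmatching$, bound $w(\optmatchingcopy{c})$ below by $w(\optmatching)$ minus the removed weight, sum over copies, and average. The paper's argument is step-for-step the same (with $\bar{\optmatching}^{c}$ in place of your $\optmatching^{(c)}_{\mathrm{rem}}$).
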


\begin{proof}

Let $\bar{\optmatching}^{c}$ denote the set of edges in $\optmatching$
that are not present in the $c$\textsuperscript{th} copy.
Since each bucket is removed in only one copy, we have:
\begin{align*}
\cup_{c} \bar{\optmatching}^{c} =& \optmatching\\
\sum_{c} w(\bar{\optmatching}^{c}) =& w(\optmatching)
\end{align*}
Since $\optmatching \setminus \bar{\optmatching}^{c}$ is a
matching in $G^{c}$, we have
$w(\optmatchingcopy{c}) \geq w(\optmatching) - w(\bar{\optmatching}^{c})$.
Note that the inequality is due to $\optmatchingcopy{c}$ being the maximum
weighted matching in $G^{c}$ instead of the restriction of $\optmatching$ on it.
Summing over all $c$ copies gives:
\begin{align*}
\sum_{c} w(\optmatchingcopy{c}) 
\geq & \sum_{c} \left( w(\optmatching) - w(\bar{\optmatching}^{c}) \right)\\
= & C \cdot w(\optmatching) - \left(\sum_{c} w(\bar{\optmatching}^{c}) \right) \\
= & (C - 1) \cdot w(\optmatching)
\end{align*}
Dividing both sides by $C$ gives that the average of $w(\optmatchingcopy{c})$
is at least $(1 - 1/C) w(\optmatching)$.
Therefore there exist some $c$ where
$w(\optmatchingcopy{c}) \geq (1 - 1/C) w(\optmatching)$.
 
\end{proof}

Combining Lemmas~\ref{lem:goodcopy}~and~\ref{lem:extendedweightopt}, we deduce 
that there exists a copy $c$ such that\\ 
$(1 + 7 \epsilon ) w(\combinedmatchingcopy{c}) \geq (1-\epsilon) w(\optmatching)$. So,\\
\begin{align*}
 \frac{w(\optmatching)}{ w(\combinedmatchingcopy{c})} \le \frac{(1 + 7 \epsilon )}{(1-\epsilon)} 
\end{align*}
This ratio is less that $(1+16\epsilon)$ for $\epsilon < 1/2$.
By a suitable chose of $\epsilon'$, the factor of $(1+16\epsilon)$ can be 
turned into $(1+\epsilon')$.

This means that if we set $C = \lceil \epsilon^{-1} \rceil$ and maintain
$(1 + \epsilon)$-MWMs on each copy of our data structure,
then one of the maximum weight matching among these $C$ copies will always be
a good approximation of the maximum weighted matching for the entire graph.

Note that in each copy of our 
data structure there are
$O(\log_{\epsilon^{-1}} \maxweight) / C =
O\big(\frac{\log \maxweight}{C\log(\epsilon^{-1})}\big) $ levels
and in each level an approximate MWM is maintained in
$O(  \epsilon^{-2-O(\epsilon)^{-1}} \sqrt{m} \log(\epsilon^{-1}) )$ time.
This implies that the overall update time taken by our
algorithm across the $C$ copies is
$O(C \cdot \frac{\log \maxweight}{C \log(\epsilon^{-1})}
	\sqrt{m} \epsilon^{-2-O(\epsilon)^{-1}} \log(\epsilon^{-1}))
= O(\sqrt{m} \epsilon^{-2-O(\epsilon)^{-1}} \log{\maxweight} )$.
So we can state the following theorem:
\begin{theorem}
For any $\epsilon<1/2$, there exists a fully dynamic algorithm that maintains a $(1 + \epsilon)$-MWM
in worst case $O(  \sqrt{m} \epsilon^{- 2 - O(1/\epsilon)} \log{\maxweight  } )$
time per update. 
\end{theorem}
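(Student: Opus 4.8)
The plan is to assemble the pieces developed in this subsection into a single statement. First I would fix $C = \lceil \epsilon^{-1}\rceil$ and instantiate the $C$ copies $G^{0},\dots,G^{C-1}$ exactly as described above: in copy $c$ we discard every bucket $i$ with $i \bmod C = c$, so the surviving buckets group into levels consisting of the buckets $b \in [lC+c+1,\dots,(l+1)C+c-1]$. The structural fact I would record is that within any one level the ratio of the largest to the smallest edge weight is at most $\epsilon^{-(C-1)} = \epsilon^{-O(1/\epsilon)}$; after rescaling, each level is therefore a weighted-matching instance with maximum weight $\maxweight' = \epsilon^{-O(1/\epsilon)}$. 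Running the worst-case variant of the data structure of Theorem~\ref{thm:weightedsimple} with this $\maxweight'$ then maintains a $(1+\epsilon)$-MWM on each level in $O(\maxweight'\sqrt{m}\,\epsilon^{-2}\log(\epsilon^{-1})) = O(\sqrt{m}\,\epsilon^{-2-O(1/\epsilon)}\log(\epsilon^{-1}))$ time.

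Next I would combine the per-level matchings inside each copy with the greedy procedure of Figure~\ref{fig:combine}, invoking Lemma~\ref{lem:lemmacmobine} so that maintaining $\combinedmatchingcopy{c}$ costs the per-level bound multiplied by the number of levels in a copy, namely $O\!\big(\tfrac{\log\maxweight}{C\log(\epsilon^{-1})}\big)$. The algorithm keeps all $C$ copies running in parallel and, after every update, reports whichever combined matching $\combinedmatchingcopy{c}$ currently has largest weight; since $C = O(\epsilon^{-1})$, evaluating this maximum adds only $O(\epsilon^{-1})$ per update and is dominated by the rest. Because the per-level data structures and the combining step of Lemma~\ref{lem:lemmacmobine} both come with worst-case guarantees, the overall bound is worst case as well.

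For correctness I would chain the three approximation lemmas already established. Lemma~\ref{lem:extendedweightopt} gives $(1+7\epsilon)\,w(\combinedmatchingcopy{c}) \ge w(\optmatchingcopy{c})$ for every copy $c$, and Lemma~\ref{lem:goodcopy} produces a copy $c^{\star}$ with $w(\optmatchingcopy{c^{\star}}) \ge (1-1/C)\,w(\optmatching) \ge (1-\epsilon)\,w(\optmatching)$. Since the reported matching has weight at least $w(\combinedmatchingcopy{c^{\star}})$, we get $w(\optmatching)/w(\text{reported}) \le (1+7\epsilon)/(1-\epsilon) \le 1+16\epsilon$ for $\epsilon<1/2$, and initiating the whole construction with a suitably smaller $\epsilon'$ converts $1+16\epsilon$ into the claimed $1+\epsilon$. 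For the running time I would multiply the three factors, $C$ copies, $O\!\big(\tfrac{\log\maxweight}{C\log(\epsilon^{-1})}\big)$ levels per copy, and $O(\sqrt{m}\,\epsilon^{-2-O(1/\epsilon)}\log(\epsilon^{-1}))$ per level; the leading $C$ cancels the $1/C$ and the $\log(\epsilon^{-1})$ cancels, leaving $O(\sqrt{m}\,\epsilon^{-2-O(1/\epsilon)}\log\maxweight)$, into which the additive $O(\epsilon^{-1})$ for the max over copies is absorbed.

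The main obstacle is not any individual inequality but the bookkeeping of the leveling scheme: one must check that deleting exactly one bucket per copy induces a genuine partition of the surviving edges into levels of width $C-1$, that every weight in $[1,\maxweight]$ lands in some level of some copy, and that the per-level weight spread $\epsilon^{-(C-1)}$ is small enough that the $\poly(\maxweight')$ dependence of Theorem~\ref{thm:weightedsimple} contributes only the stated $\epsilon^{-O(1/\epsilon)}$ overhead. Once that is pinned down, the theorem follows by the routine composition of Lemmas~\ref{lem:extendedweightopt},~\ref{lem:goodcopy},~and~\ref{lem:lemmacmobine} together with Theorem~\ref{thm:weightedsimple}.
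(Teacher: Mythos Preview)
Your proposal is correct and matches the paper's argument essentially step for step: the paper also sets $C=\lceil\epsilon^{-1}\rceil$, bounds the per-level weight spread by $\epsilon^{-(C-1)}$, invokes Theorem~\ref{thm:weightedsimple} on each level, combines via Lemma~\ref{lem:lemmacmobine}, and then chains Lemmas~\ref{lem:extendedweightopt} and~\ref{lem:goodcopy} to get the same $(1+7\epsilon)/(1-\epsilon)\le 1+16\epsilon$ bound before the final $\epsilon'$ substitution. The running-time calculation, including the cancellation of $C$ and $\log(\epsilon^{-1})$, is identical to the paper's.
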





\section{Conclusion}
\label{sec:conclusion}

We showed a simpler method for maintaining approximate
matchings that maintains $(1 + \epsilon)$-approximations
in about $\sqrt{m}$ time per update.
Natural directions for future works are whether the update
time can be improved, and whether the exponential dependency
on $\epsilon^{-1}$ in the weighted case can be reduced.
Also, it would be interesting to explore whether this type of
approach can also maintain approximations of other
optimization objectives.

Theoretically our arbitrary quality approximation algorithm from
Section~\ref{subsec:arbitrary} outperforms the $(3 + \epsilon)$
approximation given in Section~\ref{subsec:three}.
However, its exponential dependency on $\epsilon^{-1}$ makes it
fall short of a practical algorithm for maintaining $(1+\epsilon)$-MWMs.
We believe that a more intricate rounding scheme such as the one
given in Section~\ref{subsec:arbitrary}, or possibly a data structure
that incorporates details of the Duan et al. algorithm
\cite{DuanP10, DuanPS11} are promising approaches in this direction.

\section*{Acknowledgements}
The authors wish to thank Surender Baswana, Gary Miller, Krzysztof Onak,
Sandeep Sen, and Danny Sleator for their very helpful comments
and discussions.
Richard Peng is supported by a Microsoft Research PhD Fellowship. Manoj Gupta
is supported by Microsoft Research India PhD Fellowship.

\bibliographystyle{alpha}
\bibliography{references}

\begin{thebibliography}{ABGS12b}

\bibitem[ABGS12a]{AnandBGS12}
Abhash Anand, Surender Baswana, Manoj Gupta, and Sandeep Sen.
\newblock {Maintaining Approximate Maximum Weighted Matching in Fully Dynamic
  Graphs}.
\newblock In {\em IARCS Annual Conference on Foundations of Software Technology
  and Theoretical Computer Science (FSTTCS 2012)}, volume~18 of {\em Leibniz
  International Proceedings in Informatics (LIPIcs)}, pages 257--266, Dagstuhl,
  Germany, 2012. Schloss Dagstuhl--Leibniz-Zentrum fuer Informatik.

\bibitem[ABGS12b]{AnandBGS-Arxiv12}
Abhash Anand, Surender Baswana, Manoj Gupta, and Sandeep Sen.
\newblock Maintaining approximate maximum weighted matching in fully dynamic
  graphs.
\newblock {\em CoRR}, abs/1207.3976, 2012.

\bibitem[Ana12]{Anand12}
Abhash Anand.
\newblock Fully dynamic algorithms for maintaining approximate matching in
  graphs.
\newblock Master's thesis, IIT Kanpur, Kanpur, 2012.

\bibitem[BGS11]{baswana2011fully}
S.~Baswana, M.~Gupta, and S.~Sen.
\newblock {Fully dynamic maximal matching in $O(\log n)$ update time}.
\newblock In {\em Foundations of Computer Science (FOCS), 2011 IEEE 52nd Annual
  Symposium on}, pages 383--392. IEEE, 2011.

\bibitem[BKS12]{BaswanaKS12}
Surender Baswana, Sumeet Khurana, and Soumojit Sarkar.
\newblock Fully dynamic randomized algorithms for graph spanners.
\newblock {\em ACM Transactions on Algorithms}, 8(4):35, 2012.

\bibitem[CLRS09]{CLRS09}
Thomas~H. Cormen, Charles~E. Leiserson, Ronald~L. Rivest, and Clifford Stein.
\newblock {\em Introduction to Algorithms, Third Edition}.
\newblock The MIT Press, 3rd edition, 2009.

\bibitem[DP10]{DuanP10}
Ran Duan and Seth Pettie.
\newblock Approximating maximum weight matching in near-linear time.
\newblock In {\em Proceedings of the 2010 IEEE 51st Annual Symposium on
  Foundations of Computer Science}, FOCS '10, pages 673--682, Washington, DC,
  USA, 2010. IEEE Computer Society.

\bibitem[DPS11]{DuanPS11}
Ran Duan, Seth Pettie, and Hsin-Hao Su.
\newblock Scaling algorithms for approximate and exact maximum weight matching.
\newblock {\em CoRR}, abs/1112.0790, 2011.

\bibitem[DSST89]{DriscollSST89}
James~R. Driscoll, Neil Sarnak, Daniel~Dominic Sleator, and Robert~Endre
  Tarjan.
\newblock Making data structures persistent.
\newblock {\em J. Comput. Syst. Sci.}, 38(1):86--124, 1989.

\bibitem[ELMS10]{EppsteinLMS10}
Leah Epstein, Asaf Levin, Juli{\'a}n Mestre, and Danny Segev.
\newblock Improved approximation guarantees for weighted matching in the
  semi-streaming model.
\newblock In {\em STACS}, pages 347--358, 2010.

\bibitem[ELSW12]{EppsteinLMS12}
Leah Epstein, Asaf Levin, Danny Segev, and Oren Weimann.
\newblock Improved bounds for online preemptive matching.
\newblock {\em CoRR}, abs/1207.1788, 2012.

\bibitem[GS09]{gupta2009log}
M.~Gupta and A.~Sharma.
\newblock {An $O(\log n)$ Fully Dynamic Algorithm for Maximum matching in a
  tree}.
\newblock {\em Arxiv preprint arXiv:0901.2900}, 2009.

\bibitem[GT91]{gabow1991faster}
H.N. Gabow and R.E. Tarjan.
\newblock {Faster scaling algorithms for general graph matching problems}.
\newblock {\em Journal of the ACM (JACM)}, 38(4):815--853, 1991.

\bibitem[HdLT01]{HolmLT01}
Jacob Holm, Kristian de~Lichtenberg, and Mikkel Thorup.
\newblock {Poly-logarithmic deterministic fully-dynamic algorithms for
  connectivity, minimum spanning tree, 2-edge, and biconnectivity}.
\newblock {\em J. ACM}, 48(4):723--760, 2001.

\bibitem[HK71]{hopcroft1971n5}
J.E. Hopcroft and R.M. Karp.
\newblock {A $n^{5/2}$ algorithm for maximum matchings in bipartite}.
\newblock In {\em Switching and Automata Theory, 1971., 12th Annual Symposium
  on}, pages 122--125. IEEE, 1971.

\bibitem[HK99]{HenzingerK99}
Monika~Rauch Henzinger and Valerie King.
\newblock Randomized fully dynamic graph algorithms with polylogarithmic time
  per operation.
\newblock {\em J. ACM}, 46(4):502--516, 1999.

\bibitem[IL94]{ivkovi1994fully}
Z.~Ivkovic and E.~Lloyd.
\newblock {Fully dynamic maintenance of vertex cover}.
\newblock In {\em Graph-Theoretic Concepts in Computer Science}, pages 99--111.
  Springer, 1994.

\bibitem[LP86]{LovaszP86}
L.~Lovasz and M.D. Plummer.
\newblock {\em Matching Theory}.
\newblock AMS Chelsea Publishing, North-Holland, Amsterdam–New York, 1986.

\bibitem[MV80]{micali1980v}
S.~Micali and V.V. Vazirani.
\newblock {An $O(\sqrt{|V|}|E|)$ algorithm for finding maximum matching in
  general graphs}.
\newblock In {\em Foundations of Computer Science, 1980., 21st Annual Symposium
  on}, pages 17--27. IEEE, 1980.

\bibitem[NS12]{neiman12deterministic}
Ofer Neiman and Shay Solomon.
\newblock Deterministic algorithms for fully dynamic maximal matching.
\newblock {\em CoRR}, abs/1207.1277, 2012.

\bibitem[OR10]{onak2010maintaining}
K.~Onak and R.~Rubinfeld.
\newblock {Maintaining a large matching and a small vertex cover}.
\newblock In {\em Proceedings of the 42nd ACM symposium on Theory of
  computing}, pages 457--464. ACM, 2010.

\bibitem[PS82]{PapadimitriouS82}
Christos Papadimitriou and Kenneth Steiglitz.
\newblock {\em Combinatorial optimization: algorithms and complexity}.
\newblock Prentice-Hall, Inc., Upper Saddle River, NJ, USA, 1982.

\bibitem[San07]{sankowski2007faster}
P.~Sankowski.
\newblock {Faster dynamic matchings and vertex connectivity}.
\newblock In {\em Proceedings of the eighteenth annual ACM-SIAM symposium on
  Discrete algorithms}, pages 118--126. Society for Industrial and Applied
  Mathematics, 2007.

\bibitem[Tar83]{Tarjan83}
Robert~Endre Tarjan.
\newblock {\em Data structures and network algorithms}.
\newblock Society for Industrial and Applied Mathematics, Philadelphia, PA,
  USA, 1983.

\bibitem[Vaz12]{Vazirani12}
Vijay~V. Vazirani.
\newblock An improved definition of blossoms and a simpler proof of the mv
  matching algorithm.
\newblock {\em CoRR}, abs/1210.4594, 2012.

\end{thebibliography}

\pagebreak

\begin{appendix}

\section{Improvements, Worst-Case Bound, and Weights }
\label{sec:improvementsdetails}

We now give the full details for the improvements
described in Section~\ref{subsec:improvements}.
This includes removing the dependency on other data
structures for finding $\vcover$ in Section~\ref{subsec:vcover},
incorporating weights in Section~\ref{subsec:weights},
and obtaining worst case guarantees in Section~\ref{subsec:worst}.
Each of these improvement can be made independent
of the other ones.
However, we will only give the pseudocode for a version
incorporating all three extensions in Section~\ref{subsec:worst}.

\subsection{Simpler Maintenance of Small Vertex Cover}
\label{subsec:vcover}

We now present a simple algorithm to maintain a constant 
approximation of vertex cover to remove our dependence on 
another algorithm in Section~\ref{sec:algo}. 
Like our algorithm for maintaining approximate matching,
a natural approach for maintaining $\vcover$ is to use a static algorithm,
\textsc{ApproxCover} for it and then do nothing for certain number 
of update step till the current vertex cover is a good 
approximation of minimum vertex cover.
The \textsc{ApproxCover}  routine finds a vertex cover by finding a maximal
matching and reporting all the endpoints of matched edges as
the vertex cover of $G$.
This gives a cover whose size is at most twice of optimum,
and we can also bound its guarantees after a number of updates.
We will use $\vcover(i)$ to denote a vertex cover generated by
calling $\textsc{ApproxCover}(G(i))$. Let $M(i)$ be the maximal matching 
which is used to find the vertex cover $\vcover(i)$.
In order to use this cover at some subsequent update $j$,
we insert all vertices involved in edge insertions from 
update step $i$ to $j$ 
and denote this new cover as $\vcover(i \rightarrow j)$.

\begin{lemma}
\label{lem:coverstable}
If $\vcover(i)$ is a $2$-approximate minimum vertex cover on $G(i)$
and $j \leq i + 1/4M(i)$,
then $\vcover(i \rightarrow j)$ is an $5$-approximation to
the minimum vertex cover in $G(j)$.
\end{lemma}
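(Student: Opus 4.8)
The plan is to establish the claim in three steps: first that $\vcover(i\rightarrow j)$ is an honest vertex cover of $G(j)$, then that its size is not much larger than $|\vcover(i)|$, and finally that the minimum vertex cover of $G(j)$ cannot have shrunk much relative to that of $G(i)$. Throughout, write $\tau_i$ and $\tau_j$ for the sizes of the minimum vertex covers of $G(i)$ and $G(j)$, and let $k=j-i\le \frac14|M(i)|$, split into $k_{ins}$ edge insertions and $k_{del}$ edge deletions with $k_{ins}+k_{del}=k$.

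For validity, I would observe that every edge of $G(j)$ is either an edge of $G(i)$ that survived the updates in $(i,j]$ — hence covered by $\vcover(i)\subseteq\vcover(i\rightarrow j)$ — or an edge inserted during $(i,j]$, whose endpoints were explicitly added to $\vcover(i\rightarrow j)$. Both cases are covered, so $\vcover(i\rightarrow j)$ is a vertex cover of $G(j)$; note this relies on keeping all of $\vcover(i)$ rather than pruning it. For the size bound, since $\vcover(i)$ is exactly the $2|M(i)|$ endpoints of the maximal matching $M(i)$, and any matching has size at most the minimum vertex cover, $|M(i)|\le\tau_i$, so $|\vcover(i)|=2|M(i)|\le 2\tau_i$ (this is precisely the $2$-approximation hypothesis). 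Each insertion adds at most two new vertices, hence
\[
|\vcover(i\rightarrow j)|\le |\vcover(i)|+2k_{ins}\le 2\tau_i+2k\le 2\tau_i+\tfrac12|M(i)|\le \tfrac52\tau_i,
\]
using $k\le\frac14|M(i)|\le\frac14\tau_i$.

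The one genuinely new ingredient is the stability of the minimum vertex cover. Here I would argue that a deletion can only decrease $\tau$ (any cover of a graph stays a cover after removing an edge) and an insertion increases it by at most $1$; hence over the $k$ updates the net drop is at most $k_{del}\le k$, giving $\tau_j\ge \tau_i-k\ge \tau_i-\frac14\tau_i=\frac34\tau_i$, i.e. $\tau_i\le\frac43\tau_j$. Combining with the previous paragraph,
\[
|\vcover(i\rightarrow j)|\le \tfrac52\tau_i\le \tfrac52\cdot\tfrac43\,\tau_j=\tfrac{10}{3}\,\tau_j\le 5\tau_j,
\]
which is the desired $5$-approximation (with a little room to spare).

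I expect the only real point of care to be the vertex-cover stability step: one must treat deletions as monotone (they never force the optimum cover up) so that only the $k_{del}\le k$ deletions, not all $k$ updates, count against $\tau_j$, and one must not accidentally argue with the matching size in place of $\tau$. Everything else is bookkeeping driven by the bound $k\le\frac14|M(i)|\le\frac14\tau_i$ and the identity $|\vcover(i)|=2|M(i)|$.
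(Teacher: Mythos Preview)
Your proof is correct and follows essentially the same approach as the paper: both arguments rely on the stability of the minimum vertex cover size under single edge updates, bound the growth of $|\vcover(i\rightarrow j)|$ by the number of insertions, and compare to the possibly-shrunken optimum at time $j$. Your bookkeeping is slightly tighter --- you pivot around $\tau_i$ rather than $|\vcover(i)|$ and use $|M(i)|\le\tau_i$ directly --- which yields the constant $\tfrac{10}{3}$ instead of the paper's $5$; you also explicitly verify that $\vcover(i\rightarrow j)$ is a valid cover of $G(j)$, which the paper leaves implicit.
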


\begin{proof}
We use that fact that vertex cover after $t$ updates
can decrease by at most $t$.
The guarantee of \textsc{ApproxCover} implies that the size
of the minimum vertex cover in $G(i)$ is at least $|\vcover(i)|/2$.
Since $M(i) \leq |\vcover(i)|$, the size
the minimum cover in $G(j)$ is at least
\begin{align*}
\frac{1}{2} |\vcover(i)| - \frac{1}{4} |\vcover(i)|
= & \frac{1}{4} |\vcover(i)|
\end{align*}
Also, since $M(i) \leq |\vcover(i)|$,
\begin{align*}
|\vcover(i \rightarrow j)|
\leq |\vcover(i)| + 1/4 M(i)
\leq 5/4 |\vcover(i)|
\end{align*}
Therefore $|\vcover(i \rightarrow j)|$ is a $5$-approximation to
the minimum vertex cover in $G(j)$.
\end{proof}

To maintain a small vertex cover, we find 
a new vertex cover using $\vcover( i \rightarrow j)$.
Similar to Lemma~\ref{lem:smallcover}, we can build a {\em core subgraph} $G'$
by taking a small set of neighbors of each vertex in $\vcover( i \rightarrow j)$,
and prove the analogous result for maintaining vertex covers.

\begin{lemma}
\label{lem:coveronsmall}
Let $\vcover( i \rightarrow j)$ be a vertex cover in $G(j)$ generated 
as above from $\vcover(i)$, which was in turn returned by a
call to $\textsc{ApproxCover}(G(i))$.
Consider the core subgraph $G'$ having following edges.
\begin{itemize}
\item All edges between vertices in $\vcover( i \rightarrow j)$
\item Up to $|\vcover( i \rightarrow j)| + 1$ arbitrary edges to vertices in
$V \setminus \vcover( i \rightarrow j)$ for each vertex in $\vcover(i \rightarrow j)$.
\end{itemize}

$\textsc{ApproxCover}(G')$ returns a new vertex cover for $G(j)$,
$\vcover(j)$ whose size is at most twice the minimum.
\end{lemma}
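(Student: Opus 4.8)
The plan is to unfold the definition of \textsc{ApproxCover}: on input $G'$ it computes a maximal matching $M'$ in $G'$ and returns the set $\vcover(j) := V(M')$ of endpoints of $M'$. Two things must be checked: (a) $\vcover(j)$ is a vertex cover of the \emph{entire} graph $G(j)$, not merely of $G'$; and (b) $|\vcover(j)|$ is at most twice the size of a minimum vertex cover of $G(j)$. Write $\tau(G(j))$ for the minimum vertex cover size and recall the easy direction of LP duality, namely that the maximum matching size in $G(j)$ is at most $\tau(G(j))$. Part (b) is then immediate: $|\vcover(j)| = 2|M'|$, and since $M' \subseteq E(G') \subseteq E(G(j))$ is a matching in $G(j)$ we get $|M'| \le \tau(G(j))$.

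Part (a) is the heart of the argument, and it mirrors the substitution argument in the proof of Lemma~\ref{lem:smallcover}. First I would record the structural observation that every edge of $G'$ has at least one endpoint in $\vcover(i \rightarrow j)$: edges of the first type lie inside $\vcover(i \rightarrow j)$, and edges of the second type run from a vertex of $\vcover(i \rightarrow j)$ to $V \setminus \vcover(i \rightarrow j)$. Hence any matching in $G'$, in particular $M'$, has size at most $|\vcover(i \rightarrow j)|$. Now take an arbitrary edge $(u,v) \in E(G(j))$. Since $\vcover(i \rightarrow j)$ is a vertex cover of $G(j)$, one endpoint, say $u$, lies in $\vcover(i \rightarrow j)$. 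If $(u,v) \in E(G')$, maximality of $M'$ forces one of $u,v$ to be matched, so $(u,v)$ is covered by $\vcover(j)$. If $(u,v) \notin E(G')$, then because all edges of $G(j)$ within $\vcover(i \rightarrow j)$ were put into $G'$, we must have $v \notin \vcover(i \rightarrow j)$; and because $(u,v)$ was not chosen among the $|\vcover(i \rightarrow j)|+1$ external edges incident to $u$, the vertex $u$ must have a set $N$ of (at least) $|\vcover(i \rightarrow j)| + 1$ neighbors in $V \setminus \vcover(i \rightarrow j)$ present in $G'$.

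It remains to argue that $u$ is matched in $M'$, which completes the cover. Suppose not; then maximality of $M'$ forces every $G'$-neighbor of $u$, in particular every vertex of $N$, to be matched by $M'$. For $x \in N \subseteq V \setminus \vcover(i \rightarrow j)$, the $M'$-edge at $x$ has its other endpoint in $\vcover(i \rightarrow j)$ (by the structural observation, since that endpoint is not $x$); consequently no single $M'$-edge can contain two vertices of $N$. Thus the $|\vcover(i \rightarrow j)|+1$ distinct vertices of $N$ are covered by $|\vcover(i \rightarrow j)|+1$ distinct edges of $M'$, contradicting $|M'| \le |\vcover(i \rightarrow j)|$. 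Hence $u$ is matched and $(u,v)$ is covered, proving (a). The main obstacle is exactly this last counting step — recognizing that the many external neighbors of an unmatched cover-vertex would be matched to distinct cover-vertices and thereby blow up $|M'|$ — which is the vertex-cover analogue of the argument used for Lemma~\ref{lem:smallcover}.
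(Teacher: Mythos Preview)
Your proposal is correct and follows essentially the same approach as the paper: both show that the maximal matching $M'$ computed on $G'$ is in fact maximal in all of $G(j)$, via the same counting argument that an unmatched cover-vertex $u$ with $|\vcover(i\rightarrow j)|+1$ external neighbors in $G'$ would force $|M'|>|\vcover(i\rightarrow j)|$. Your write-up is slightly more explicit in separating out part~(b) and in noting that distinct vertices of $N$ are matched by distinct edges of $M'$, but the substance is identical to the paper's proof.
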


Note that although $\vcover( i \rightarrow j)$ is a vertex cover
in $G(j)$, its may not be as good of an approximation due to
adding all vertices involved in insertions between updates $i$ and $j$.
Generating a new cover can be viewed as a way to reduce
the error.

\begin{proof}
Let $M'$ be a maximal matching found on $G'$
by \textsc{ApproxCover}.
It suffices to show that $M'$ is a maximal matching on $G(j)$
in a way similar to our proof of Lemma \ref{lem:smallcover}.
For contradiction, suppose there exists such a free edge
 $(u, v)$ in $G(j)$ with respect to matching $M'$ such that both $u$ and $v$ a
are unmatched.
As $\vcover(i \rightarrow j)$ is a vertex cover for $G(j)$, we may assume that
$u \in \vcover(i \rightarrow j)$ without loss of generality.
The maximality of $M'$ in $G'$ implies that $(u, v) \notin E(G')$,
and therefore $v  \notin V(G')$, which means $u$ has at least
 $|\vcover(i \rightarrow j)|+1$ neighbors in $V \setminus \vcover(i \rightarrow j)$.
As there are no edges between vertices in $V \setminus \vcover(i \rightarrow j)$,
each edge in $M'$ can only match one vertex from this set.
Also, $|M'| \leq |\vcover(i \rightarrow j)|$ since the size of any matching
is at most the size of a vertex cover.
This means $u$ has at least one unmatched neighbor in $G'$, 
contradicting the assumption that $M'$ is a maximal
matching in $G'$.
\end{proof}

Using Lemma~\ref{lem:coverstable} and \ref{lem:coveronsmall},
we see that the vertex cover in graph $G'$ can be found 
in $O( |\vcover( i \rightarrow j)|^2 )( = O( |M(j)|^2))$ time. This time can be amortized over the next 
$M(j)/4$ steps. Similar to the analysis in Section~\ref{sec:algo}, we can show 
that the amortized update time at each step is $O( \sqrt{m} )$ per update.

\ignore{
Let $\vcover^t$ be the vertex cover in $G_t$. To use the above algorithm
in the previous section, we need to construct $G'$ in $O((|\vcover^t|)^2)$.
Consider the
algorithm in the previous section. It finds a matching $M$ in $G$.
Then it does nothing for $t = \epsilon |M| /2$ iterations. The vertex cover $\vcover$ in G 
is size $\le 2|M|$. We first show that $|\vcover^t| = O( |M| )$. 

}


\subsection{Incorporating Weights}
\label{subsec:weights}

Till now we have described dynamic algorithms for maintaining 
approximate matching in unweighted graphs. Now we show that 
the same technique can be used to maintain maximum weighted matching.
In order to handle weighted matchings, we will substitute the
weighted matching algorithm given in Lemma~\ref{lem:staticmwm}
as our static algorithm.
We also need to prove the analogs of 
Lemmas~\ref{lem:stable}~and~\ref{lem:smallcover}
for the weighted setting.

\begin{lemma}
\label{lem:stableweighted}
If all edge weights are in the range $[1, \maxweight]$ and $M$ is
an $(1 + \epsilon)$-MWM in $G(i)$,
then for $j \le  i + \frac{\epsilon'}{\maxweight} |M|$,
$M(i \setminus j)$ is an $(1 + 2\epsilon + 2\epsilon')$-MWM in $G_{j}$
\end{lemma}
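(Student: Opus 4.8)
The plan is to mimic the proof of Lemma~\ref{lem:stable} almost verbatim, tracking where the weights enter and verifying that the only damage is a factor of $\maxweight$ in how many updates we may tolerate. First I would fix notation: suppose there are $k_{ins}$ insertions and $k_{del}$ deletions in the $k = \frac{\epsilon'}{\maxweight}|M|$ updates between step $i$ and step $j$, where $|M|$ here should be read as the \emph{weight} $w(M)$ of the matching $M = M(i)$ (this is the natural reading since in the weighted setting the relevant measure of a matching is its weight; I would make this explicit at the start of the proof). The hypothesis that $M$ is a $(1+\epsilon)$-MWM in $G(i)$ gives $w(\optmatchinglevel{i}) \le (1+\epsilon)\, w(M)$.

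The two stability facts to establish are: (i) one edge insertion increases the maximum weighted matching value by at most $\maxweight$, since the new matching differs from an old optimal one by at most one edge whose weight is at most $\maxweight$; hence $w(\optmatchinglevel{j}) \le w(\optmatchinglevel{i}) + k_{ins}\maxweight$. And (ii) each deletion removes at most one edge from $M(i)$, and that edge has weight at most $\maxweight$, so $w(M(i\setminus j)) \ge w(M) - k_{del}\maxweight$. With these in hand the approximation ratio is bounded by
\begin{align*}
\frac{w(\optmatchinglevel{j})}{w(M(i\setminus j))}
&\le \frac{(1+\epsilon)\,w(M) + k_{ins}\maxweight}{w(M) - k_{del}\maxweight}
= 1 + \frac{\epsilon\, w(M) + k\,\maxweight}{w(M) - k_{del}\maxweight}.
\end{align*}
Now substitute $k = \frac{\epsilon'}{\maxweight} w(M)$, so that $k\,\maxweight = \epsilon'\, w(M)$ and similarly $k_{del}\maxweight \le \epsilon'\, w(M) \le \frac12 w(M)$ (using $\epsilon' \le 1/2$), giving
\begin{align*}
\frac{w(\optmatchinglevel{j})}{w(M(i\setminus j))}
&\le 1 + \frac{\epsilon\, w(M) + \epsilon'\, w(M)}{\tfrac12 w(M)}
= 1 + 2\epsilon + 2\epsilon'.
\end{align*}

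\textbf{Main obstacle.} The calculation itself is routine; the one genuine subtlety is fact (i) — that a single edge insertion changes $w(\optmatching)$ by at most $\maxweight$ rather than by an uncontrolled amount. I would justify it by taking an optimal matching $\optmatchinglevel{i}$ of $G(i)$: it is still a matching in $G(j')$ for $G(j')$ obtained by one insertion, so $w(\optmatchinglevel{j'}) \ge w(\optmatchinglevel{i})$, and conversely from an optimal matching of $G(j')$ deleting the newly inserted edge (if present) yields a valid matching of $G(i)$ of weight at least $w(\optmatchinglevel{j'}) - \maxweight$. Iterating over the $k_{ins}$ insertions gives the claimed bound. The analogous one-sided bound for deletions is even simpler. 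Everything else is bookkeeping parallel to Lemma~\ref{lem:stable}, with $|\cdot|$ replaced by $w(\cdot)$ throughout and the lone new ingredient being that edge weights never exceed $\maxweight$.
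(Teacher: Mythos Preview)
Your proposal is correct and follows exactly the approach the paper intends: the paper's own proof is a one-line sketch (``similar to the proof of Lemma~\ref{lem:stable}, except each insertion can increase $w(\optmatching)$ by at most $\maxweight$, and each deletion can decrease it by at most $\maxweight$''), and you have faithfully filled in that sketch, including the correct reading of $|M|$ as $w(M)$. Your justification of fact~(i) and your observation that the deletion bound is really needed on $w(M(i\setminus j))$ rather than on $w(\optmatching)$ are both accurate refinements of the paper's terse statement.
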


\begin{proof}
Similar to the proof of Lemma \ref{lem:stable}, except
each insertion can increase $w(\optmatching)$ by at most $\maxweight$,
and each deletion can decrease $w(\optmatching)$ by at most $\maxweight$.
\end{proof}

\begin{lemma}
\label{lem:extendedsmallcover}
Let $\vcover \subseteq V$ be a vertex cover, and $G'$ be the core subgraph consisting of:
\begin{itemize}
\item All edges between vertices in $\vcover$
\item The top $|\vcover| + 1$ maximum weighted edges to vertices
in $V \setminus \vcover$ for each vertex in $\vcover$.
\end{itemize}
If $M'$ is a $(1 + \epsilon)$-MWM in $G'$, then it's
also a $(1 + \epsilon)$-MWM in $G$.
\end{lemma}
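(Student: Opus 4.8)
The plan is to mirror the proof of Lemma~\ref{lem:smallcover} with weights attached. First I would show that the maximum weighted matching in $G$ and in $G'$ coincide, by taking, among all MWMs of $G$, one that uses the largest number of edges of $E(G')$ (or, equivalently, one minimizing total weight of edges outside $E(G')$), and deriving a contradiction if any edge of $\optmatching$ lies outside $E(G')$.

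So suppose $\optmatching$ is such a maximum-weight matching and it contains an edge $(u,v) \in E(G) \setminus E(G')$. Since $\vcover$ is a vertex cover, at least one endpoint, say $u$, lies in $\vcover$. For $(u,v)$ to have been excluded from $G'$, there must be $|\vcover|+1$ edges from $u$ to $V \setminus \vcover$ present in $G'$, each of weight $\ge w(u,v)$ (here I use that $G'$ keeps the \emph{top} $|\vcover|+1$ weighted edges of $u$, so every edge kept is at least as heavy as the discarded $(u,v)$). Call these neighbors $N_{V\setminus\vcover}(u)$. Because $\optmatching$ has at most $|\vcover|$ edges and there are no edges within $V \setminus \vcover$, at most $|\vcover|$ vertices of $N_{V\setminus\vcover}(u)$ are matched by $\optmatching$, so some $x \in N_{V\setminus\vcover}(u)$ is unmatched. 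Replacing $(u,v)$ by $(u,x)$ keeps a valid matching, does not decrease the weight (since $w(u,x) \ge w(u,v)$), and strictly increases the number of $E(G')$-edges used — contradicting the choice of $\optmatching$. Hence $\optmatching \subseteq E(G')$, and since $E(G') \subseteq E(G)$, the maximum matching weights in $G$ and $G'$ are equal.

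Given this, the conclusion is immediate: if $M'$ is a $(1+\epsilon)$-MWM in $G'$, then $w(M') \ge \frac{1}{1+\epsilon} w(\text{MWM in } G') = \frac{1}{1+\epsilon} w(\text{MWM in } G)$, and since $M' \subseteq E(G') \subseteq E(G)$ it is a valid matching in $G$, so it is a $(1+\epsilon)$-MWM in $G$.

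The one subtlety — and the only place the weighted case genuinely differs from the cardinality case — is the swap argument: in the unweighted proof any replacement edge is fine, but here I must be careful that the exchanged-in edge $(u,x)$ is no lighter than $(u,v)$, which is exactly why Definition~\ref{def:core} takes the \emph{maximum weight} $|\vcover|+1$ edges rather than arbitrary ones. I would also note that if $w(u,x) > w(u,v)$ strictly, we would actually contradict maximality of weight directly, so the interesting case is $w(u,x) = w(u,v)$, handled by the tie-breaking on number of $E(G')$-edges. No heavy computation is involved; the argument is a direct adaptation, so I expect no real obstacle beyond stating the exchange cleanly.
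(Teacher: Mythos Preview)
Your proposal is correct and follows essentially the same approach as the paper: the paper's proof simply says ``similar to Lemma~\ref{lem:smallcover}, but we need to compare the weights of edges $(u,x)$ and $(u,v)$ \ldots\ by our choice of the $|\vcover|$ edges incident to $u$ being the ones of maximum weight, we have $w(u,x) \geq w(u,v)$,'' which is exactly the exchange argument you spelled out in detail, including the tie-breaking via maximizing the number of $E(G')$-edges.
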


\begin{proof}
Similar to Lemma \ref{lem:smallcover}, but we need
to compare the weights of edges $(u, x)$ and $(u, v)$
where $x \in N_{V \setminus \vcover}(u)$ and $v \notin \vcover$.
By our choice of the $|\vcover|$ edges incident to $u$ being
the ones of maximum weight, we have $w(u, x) \geq w(u, v)$.
\end{proof}

These two lemmas allows us to adapt the algorithm from
the Section~\ref{sec:algo} for weighted matchings, with an
additional factor of $O(\maxweight)$ in the running time.

To construct $G'$, for each $u \in \vcover$ we need to examine
edges incident to it in decreasing order of weights until we have
either exhausted the list, or found $|\vcover| + 1$
ones in $V \setminus \vcover$.
In either case, it suffices to extract the $2|\vcover|+1$ edges
of maximum weight incident to $u$.
\ignore{
This operation can be done by maintaining a balanced search
tree (see e.g. \cite{CLRS09}) at each vertex, which incurs an
extra cost of $O(|\vcover|\log{n})$,
This overhead be bounded by $O(\max\{ |\vcover|^2, \sqrt{m} \}$,
which allows us to ignore it in the overall runtime analysis.
}

\subsection{Worst Case Runtime}
\label{subsec:worst}

In this section, we will present an algorithm which obtains a better worst case bound
for maintaining approximate matchings.
The algorithm is similar in spirit to the one described in Section \ref{sec:algo}.
That is, we first find a approximate weighted matching in graph $G(i)$ and then do nothing for the 
certain update step $j$ till the matching we had  obtained gives a good approximation
of optimum. In order to find an approximate weighted matching in $G(j)$, we perform the following 
tasks:


\label{def:statics}
\begin{itemize}
    \item Finding a new vertex cover $\vcover(j)$  using $\vcover(i \rightarrow j)$ ( Lemma \ref{lem:coveronsmall})
	\item Constructing {\em core subgraph} $G'$ from $\vcover(j)$ (Lemma \ref{lem:extendedsmallcover})
	\item Running $\textsc{ApproxMWM}(G', \epsilon)$
\end{itemize}

As $G'$ may be as large as $G$, the number of edges in $G'$ in worst case  
can be as high as $\Omega(m)$, and the call to \textsc{ApproxMWM}
can only be bounded by $O(m  \epsilon^{-1} \log(\epsilon^{-1}))$.
However,  Lemma~\ref{lem:stableweighted} implies that once an
$(1 + \epsilon/4)$-MWM is found, it will remain an $(1 + \epsilon)$-MWM
after $\epsilon|M|/4\maxweight $ updates.
Note that the algorithm does nothing in these $\epsilon |M| / 4 \maxweight$ update steps. 
In order to obtain a worst case bound, we perform useful computations in 
these update steps. Specifically, instead of finishing the three task mentioned 
above at one update step, we gradually work on these three task for these 
$\epsilon |M| / 4 \maxweight$ steps.

We now describe our algorithm in Figure \ref{fig:lazyImp}. Our algorithm 
works in rounds. A round consists of three tasks that we mentioned above:

\begin{figure}[ht]
\begin{procedure}[H]
	\nl \If{Update is a deletion}{
		\nl If $(u, v) \in M$, remove it from $M$\;
		\nl Notify call to \textsc{ApproxMWM} in current round to remove $(u, v)$ \label{ln:approxmwm}\\
		\nl Notify current call to \textsc{ApproxCover} to add $u$.
	}
	\nl Perform $O(\maxweight \sqrt{m} \epsilon^{-2} \log(\epsilon^{-1}))$ steps in the current round\;
	\nl \If{Current round finishes}{
		\nl Replace $M$ by the results of the call to $\textsc{ApproxMWM}$\;
		\nl Initiate the next round consisting of:
\begin{enumerate}
\item Construct {\em core subgraph} $G'$ using the result from the previous call to $\textsc{ApproxCover}$
\item $\textsc{ApproxMWM}(G', 1 + \epsilon / 8)$
\item $\textsc{ApproxCover}(G')$
\end{enumerate}
	}
\caption{UpdateImp(u, v)}
\end{procedure}

\caption{Maintaining $(1 + \epsilon)$-MWMs with Worst Case Guarantees}
\label{fig:lazyImp}
\end{figure}

\begin{itemize}
	\item Constructing {\em core subgraph} $G'$ from a vertex cover\\
	We obtain this vertex cover from the result of $\textsc{ApproxCover}(G')$ which 
	was called in the previous round.
	\item Running $\textsc{ApproxMWM}(G', \epsilon)$\\
	The result of this procedure is used as the approximate maximum weighted matching 
	returned by the algorithm for the next round.
	\item Constructing a new vertex cover on $G'$ using \textsc{ApproxCover(G')}\\
	The vertex cover returned by this procedure will be used for the next round.
\end{itemize}

To obtain a worst case bound, we will run these procedure gradually, i.e, we will
do only $O(\maxweight \sqrt{m} \epsilon^{-2} \log(\epsilon^{-1}))$ amount
of work at each update step.
However, this approach has a drawback. For example,
suppose that while computing $\textsc{ApproxMWM}(G', \epsilon)$, 
we added an edge $e$ to the matching at some update step. 
But subsequently, if this edge is deleted from the graph, we don't have
any way of reflecting this change in the future
iterations of $\textsc{ApproxMWM}(G', \epsilon)$. So we just remove
these edges from the result of $\textsc{ApproxMWM}(G', \epsilon)$
(See line~\ref{ln:approxmwm}).
Similar is the case for the procedure $\textsc{ApproxCover}(G')$.
However we will show that we don't lose much in terms of
approximation due to this workaround.
 
We will show that the number of update steps in a round are 
large enough so that all these three tasks can be performed if
the work done at each update step is $O(\maxweight \sqrt{m} \epsilon^{-2} \log(\epsilon^{-1}))$.
On the other hand, we also show that the number of update steps 
is small enough so that the approximate weighted matching returned 
in the previous round has a good approximate ratio till the current 
round ends.
\ignore{This means that as long as a static round finishes reasonably
fast, its result can still be used for a sequence of updates.
That is, instead of finish the static pass in the current update
and using the matching returned for $O(\epsilon|M|)$ updates,
we can finish it over the course of $O(\epsilon/2|M|)$
and use the matching returned for the next $O(\epsilon/2|M|)$
updates.
Of course, this means that over the $O(\epsilon / 2 |M|)$ updates
during which we're waiting for this pass to finish, we will need
to use results from an earlier pass instead.
One other issue is that the results produced by  a
pass need to take into account the updates that took
place while it was running.
Therefore, it is helpful to view them as tracking the
modifications done by edge insertions/deletions.

Running these passes more gradually leads to an algorithm
with $\sqrt{m}$ type worst-case runtime per update.
However, it leads to the issue of having several passes
taking place simultaneously.
While this can be resolved, we instead resort to an alternate
solution: using the completion of one pass to initiate the next.
In other words, the algorithm is `self-timed' using the bound
on its worst-case running time.
We ensure that the algorithm always performs a number
of number of steps in the current pass, and initiate the
next one upon the current one's completion.
An algorithm with all the modifications described in this
section incorporated is shown in Figure \ref{fig:lazyImp}.
}

We find the the number of updates it takes for a for a
round to finish using a proof
analogous to that of Theorem \ref{thm:lazy}

\begin{lemma}
\label{lem:singlepass}
If a round initiated at update $i$ with a vertex cover $\vcover$
satisfying $|\vcover| \leq 10 \optmatchingweight{i}$,
and is run for $O(\maxweight \sqrt{m} \epsilon^{-2} \log(\epsilon^{-1}) )$
steps per subsequent update,
it will finish within $\frac{\epsilon \optmatchingweight{i}}{8\maxweight}$ updates.
\end{lemma}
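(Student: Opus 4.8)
The plan is to bound the work done in a single round by $O(\maxweight \sqrt{m}\,\epsilon^{-2}\log(\epsilon^{-1}))$ total, which since we perform that much work per update immediately gives that the round completes in a constant number of updates --- and then to check that this constant is at most $\frac{\epsilon \optmatchingweight{i}}{8\maxweight}$ (up to the assumption that $\epsilon\optmatchingweight{i}$ is not too small, with the degenerate case handled separately as in the proof of Theorem~\ref{thm:lazy}). The three tasks comprising a round are: constructing the core subgraph $G'$ from the vertex cover $\vcover$; running $\textsc{ApproxMWM}(G', 1+\epsilon/8)$; and running $\textsc{ApproxCover}(G')$. I would bound the cost of each task in terms of $|\vcover|$ and then use the hypothesis $|\vcover| \le 10\,\optmatchingweight{i}$ to convert everything into a bound in $\optmatchingweight{i}$, and finally use $\optmatchingweight{i} \le \sqrt{m}\cdot\text{(stuff)}$ / the $\min\{m, \optmatchingweight{i}^2\}$ trick exactly as in Theorem~\ref{thm:lazy}.

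First I would bound the size of $G'$: by Definition~\ref{def:core} (in its weighted form, Lemma~\ref{lem:extendedsmallcover}), $G'$ has all edges inside $\vcover$, of which there are at most $|\vcover|^2$, plus at most $|\vcover|+1$ edges out of each of the $|\vcover|$ cover vertices, contributing another $O(|\vcover|^2)$. So $|E(G')| = O(|\vcover|^2)$, and also trivially $|E(G')| \le m$, giving $|E(G')| = O(\min\{m, |\vcover|^2\})$. Constructing $G'$ costs $O(\min\{m,|\vcover|^2\})$ (the weighted case needs extracting the top $2|\vcover|+1$ edges at each cover vertex, which by the remark in Section~\ref{subsec:weights} is absorbed into this bound). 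Running $\textsc{ApproxMWM}$ on $G'$ with parameter $\epsilon/8$ costs $O(|E(G')|\,\epsilon^{-1}\log(\epsilon^{-1})) = O(\min\{m,|\vcover|^2\}\,\epsilon^{-1}\log(\epsilon^{-1}))$ by Lemma~\ref{lem:staticmwm}, and $\textsc{ApproxCover}(G')$ is no more expensive (it just finds a maximal matching on $G'$). So the total work in a round is $O(\min\{m,|\vcover|^2\}\,\epsilon^{-1}\log(\epsilon^{-1}))$.

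Now substitute $|\vcover| \le 10\,\optmatchingweight{i}$, so $|\vcover|^2 = O(\optmatchingweight{i}^2)$, and the round's total work is $T := O(\min\{m, \optmatchingweight{i}^2\}\,\epsilon^{-1}\log(\epsilon^{-1}))$. We do $W := \Theta(\maxweight \sqrt{m}\,\epsilon^{-2}\log(\epsilon^{-1}))$ units of work per update, so the number of updates needed to finish the round is at most $T/W = O\!\left(\frac{\min\{m,\optmatchingweight{i}^2\}}{\maxweight \sqrt{m}\,\epsilon}\right)$. We want this to be $\le \frac{\epsilon\,\optmatchingweight{i}}{8\maxweight}$, i.e. (clearing the common $\maxweight$) we need $\frac{\min\{m,\optmatchingweight{i}^2\}}{\sqrt m} = O(\epsilon^2\,\optmatchingweight{i})$ --- equivalently $\min\{m/\optmatchingweight{i},\ \optmatchingweight{i}\} = O(\epsilon^2\sqrt m)$ after dividing by $\optmatchingweight{i}$, which holds for either branch of the $\min$ exactly as in Theorem~\ref{thm:lazy}: if $\optmatchingweight{i}\ge\sqrt m$ the first term is $\le \sqrt m$, otherwise the second term is $\le\sqrt m$, and the leftover $\epsilon^{-2}\log(\epsilon^{-1})$ factors are absorbed by choosing the hidden constant in $W$ large enough (and adjusting $\epsilon$). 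The one subtlety, which I'd treat separately just as Theorem~\ref{thm:lazy} does, is the corner case $\epsilon\,\optmatchingweight{i} < 1$ (equivalently a constant-size matching): then $G'$ has size $O(\epsilon^{-2})$, the round's work is a fixed $\poly(\epsilon^{-1})$, and since for constant $\epsilon$ this is $O(\maxweight\sqrt m)$ it is done within a single update step, trivially satisfying the bound.

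The main obstacle I anticipate is not any one calculation but keeping the bookkeeping honest, in two respects. First, the hypothesis $|\vcover|\le 10\,\optmatchingweight{i}$ is stated for the cover at the \emph{start} of the round, but the edges deleted during the round's $O(\epsilon\optmatchingweight{i}/\maxweight)$ updates perturb both $\vcover$ (we insert the affected endpoints, à la $\vcover(i\to j)$ in Lemma~\ref{lem:coverstable}) and the outputs of the in-progress $\textsc{ApproxMWM}$/$\textsc{ApproxCover}$ calls (we just delete affected edges, line~\ref{ln:approxmwm}); I need to confirm these perturbations change the size bounds by only a constant factor over the round, which is exactly why the "$10$" slack appears and why Lemma~\ref{lem:coverstable}'s $5$-approximation feeds in. Second, $\textsc{ApproxMWM}$ and $\textsc{ApproxCover}$ must be run \emph{incrementally} --- $O(W)$ work per step with state carried across updates --- which is fine for any reasonable implementation of these static routines but should be noted. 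Modulo these points, the lemma follows from the size bound on $G'$ plus the same $\min$-balancing argument already used for Theorem~\ref{thm:lazy}.
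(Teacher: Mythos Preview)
Your approach is essentially the same as the paper's: bound $|E(G')| = O(\min\{m, \optmatchingweight{i}^2\})$ via the core-subgraph construction and the hypothesis $|\vcover|\le 10\,\optmatchingweight{i}$, deduce the total work of a round is $T = O(\min\{m, \optmatchingweight{i}^2\}\,\epsilon^{-1}\log(\epsilon^{-1}))$, and then balance against the target number of updates using the $\min$-trick from Theorem~\ref{thm:lazy}. The paper phrases the final step as computing $T$ divided by the target $\frac{\epsilon\,\optmatchingweight{i}}{8\maxweight}$ to obtain the required work per update; your $T/W$ computation is just this rearranged.

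There is, however, an arithmetic slip that led you to the (incorrect) patch at the end. You wrote $T/W = O\!\bigl(\min\{m,\optmatchingweight{i}^2\}/(\maxweight\sqrt{m}\,\epsilon)\bigr)$, but since $\epsilon^{-1}/\epsilon^{-2}=\epsilon$, the correct ratio is
\[
T/W \;=\; O\!\left(\frac{\epsilon\,\min\{m,\optmatchingweight{i}^2\}}{\maxweight\sqrt{m}}\right).
\]
Comparing this to $\frac{\epsilon\,\optmatchingweight{i}}{8\maxweight}$ and cancelling the common $\epsilon/\maxweight$ reduces cleanly to the requirement $\min\{m/\optmatchingweight{i},\ \optmatchingweight{i}\} = O(\sqrt{m})$, with \emph{no} leftover $\epsilon$-factors. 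Your claim that one needs $\min\{\cdot\} = O(\epsilon^2\sqrt{m})$ is false in general, and the attempted fix (``absorbed by choosing the hidden constant in $W$ large enough'') does not work and is not needed. With the corrected algebra the argument goes through exactly as in the paper; your remarks on the corner case and on the bookkeeping for mid-round updates are fine (the paper defers those to the surrounding discussion and the proof of Theorem~\ref{thm:weightedsimple}).
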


\begin{proof}
The number of edges in $G'$ can be bounded by
$\min\{m, O(|\vcover|^2)\} = O(\min\{m, \optmatchingweight{i}^2 \})$.
This means that the total cost of the current round is 
$O(\min\{m, \optmatchingweight{i}^2 \} \epsilon^{-1}\log(\epsilon^{-1}) )$.
For this to finish in $\frac{\epsilon \optmatchingweight{i}}{8\maxweight} $  updates,
the amount of work that needs to be performed at each update is:
\begin{align*}
&O\left(\min \{ m, \optmatchingweight{i}^2 \} \epsilon^{-1} \log(\epsilon^{-1}) \right)
	/ \left(\frac{\epsilon \optmatchingweight{i}}{8\maxweight} \right) \\
= & O\left(\maxweight
	\min \{ \frac{m}{\optmatchingweight{i}}, \optmatchingweight{i} \}  
	 \epsilon^{-2} \log(\epsilon^{-1})\right)
\end{align*}
Considering cases of $\optmatchingweight{i} \geq \sqrt{m}$
and $\optmatchingweight{i} < \sqrt{m}$ leads to an
$O(\maxweight \sqrt{m} / \epsilon^2 \log(\epsilon^{-1}))$ worst case time per update step.
\end{proof}

The worst case bounds can now be obtained by analyzing the duration
for which $M(i \setminus j)$ is used as $M$ in the algorithm, and proving
inductively the conditions required for the vertex cover in Lemma~\ref{lem:singlepass}.\\\\


We finish this section by giving the proof of Theorem~\ref{thm:weightedsimple}
\begin{proof}
We first show inductively that at the start of a round initiated at
update $j$ we have a vertex cover $\vcover$ satisfying
$|\vcover| \leq 10 \optmatchingweight{j}$.
The base case follows from both being $0$, and for the inductive
case, consider a round initiated at time $j$.
Suppose the current value of $\vcover$ was produced by a round
initiated at time $i < j$. So the vertex cover $\vcover$ at update step
$i$ is $\vcover(i)$ and at update step $j$, it is $\vcover(i \rightarrow j)$.
Lemma~\ref{lem:coveronsmall} implies that $\vcover(i)$ is a
$2$-approximate vertex cover on $G(i)$.
Using the conditions in Lemma \ref{lem:singlepass}, we have
$j - i \leq \frac{\epsilon \optmatchingweight{i}}{8\maxweight}
\leq 1/4 \optmatchingsize{i}$.
Therefore by Lemmas~\ref{lem:coverstable}, $|\vcover(i \rightarrow j)|$
is a 5-approximation of minimum vertex cover in $G(j)$.
Since the size of minimum vertex cover in $G(j)$ is $\le 2 \optmatchingsize{j}$, this implies that
$|\vcover(i \rightarrow j)| \le 10 \optmatchingsize{j}$.
Thus the requirements of Lemma~\ref{lem:singlepass}
is satisfied for all rounds.

For the guarantees on $M$, it suffices to prove that the matching
produced by a round initiated at update $i$ remains an
$(1 + \epsilon)$-MWM until the completion of the next round.
Let the update by which this round finishes be $i'$.
Applying Lemma~\ref{lem:singlepass} to all rounds gives:
\begin{align*}
i' \leq i + \frac{\epsilon \optmatchingweight{i}}{8\maxweight}
\end{align*}
and that the round initiated at update $i'$ finishes 
in $\frac{\epsilon \optmatchingweight{i'}}{8\maxweight}$ updates.
Therefore it suffices to show that $M(i \setminus j)$ is a $(1 + \epsilon)$-MWM
for $G_j$ for all $j \leq i +
\frac{\epsilon}{8\maxweight} (\optmatchingweight{i} + \optmatchingweight{i'})$.

Since the call to \textsc{ApproxMWM} is ran with error $\epsilon / 4$,
we have $\optmatchingweight{i} \leq (1 + \epsilon / 4) w(M(i))$,
and:
\begin{align*}
\optmatchingweight{i'}
\leq & \optmatchingweight{i} + \maxweight(i' - i) \nonumber\\
\leq & (1 + \epsilon / 8) \optmatchingweight{i} \nonumber\\
\leq & (1 + \epsilon / 8) (1 + \epsilon / 4) w(M(i))
\end{align*}
Summing these then gives:
\begin{align*}
\frac{\epsilon}{8\maxweight} (\optmatchingweight{i} + \optmatchingweight{i'})
\leq & \frac{\epsilon}{8\maxweight} ((1 + \epsilon / 4) w(M(i)) + (1 + \epsilon / 8) (1 + \epsilon / 4) w(M(i))) \nonumber\\
= & \frac{\epsilon}{8\maxweight} (2 + \epsilon / 8) (1 + \epsilon / 4) w(M(i)) \nonumber\\
\leq & \frac{3\epsilon}{8\maxweight} w(M(i))
\end{align*}
The bounds then follows from Lemma \ref{lem:stableweighted}
with $\epsilon' = 3 \epsilon / 8$.
\end{proof}

One other minor issue is that the next round need to access a previous
state of the graph.
As there is at most one round active at once, we can simply keep two
versions, the current one as well as the one from when the round was
initiated.
A conceptually simpler way to address this issue can also be
found in persistent data structures~\cite{DriscollSST89}, which
allows access to previous versions of the graph.

\end{appendix}

\end{document}